\theoremstyle{plain}
\newtheorem{mlemma}{Lemma}
\newtheorem{observation}{Observation}
\newcommand{\ignore}[1]{}
\newcommand{\A}{\mathcal{A}}
\newcommand{\W}{\mathscr{W}}
\newcommand{\ej}{\mathscr{J}}
\newcommand{\eC}{\mathscr{C}}
\newcommand{\si}{\mathrm{I}}
\newcommand{\sI}{\mathscr{I}}
\newcommand{\sX}{\mathscr{X}}
\newcommand{\aj}{\mathfrak{j}}
\newcommand{\eqnote}[1]{\tag{\scriptsize{#1}}}
\def\moverlay{\mathpalette\mov@rlay}
\def\mov@rlay#1#2{\leavevmode\vtop{%
   \baselineskip\z@skip \lineskiplimit-\maxdimen
   \ialign{\hfil$\m@th#1##$\hfil\cr#2\crcr}}}
\newcommand{\charfusion}[3][\mathord]{
    #1{\ifx#1\mathop\vphantom{#2}\fi
        \mathpalette\mov@rlay{#2\cr#3}
      }
    \ifx#1\mathop\expandafter\displaylimits\fi}
\newcommand{\vol}{\mbox{Vol}}
\renewcommand{\top}{\mbox{Top}}
\newcommand{\dts}{\mbox{Dot}}
\newcommand{\dom}{\mbox{Dom}}
\newcommand{\tI}{t_{I}}
\renewcommand{\P}{\mathscr{P}}
\newcommand{\D}{\mathscr{D}}
\newcommand{\Q}{\mathcal{Q}}
\newcommand{\bad}{\mathscr{B}{\hspace{-0.8mm}\scriptstyle{\mathscr{A}}}{\hspace{-0.8mm}\scriptstyle{\mathscr{D}}}}
\newcommand{\bv}{\mathbf{v}}
\newcommand{\bu}{\mathbf{u}}
\newcommand{\bq}{\mathbf{q}}
\newcommand{\bp}{\mathbf{p}}
\newcommand{\bS}{\mathbf{S}}
\renewcommand{\S}{\mathcal{M}}
\newcommand{\bx}{\mathbf{x}}
\newcommand{\bg}{\mathbf{g}}
\newcommand{\bz}{\mathbf{z}}
\newcommand{\by}{\mathbf{y}}
\newcommand{\bell}{\ensuremath{\boldsymbol\ell}}
\newcommand{\E}{\mathbb{E}}
\newcommand{\G}{\mathscr{G}}
\newcommand{\R}{\mathbb{R}}
\newcommand{\mdef}[1]{{\bf{#1}}}
\DeclareMathAlphabet{\mathpzc}{OT1}{pzc}{m}{it}
\newcommand{\pS}{S^{+}}
\title{A New Lower Bound for Semigroup Orthogonal Range Searching}
\author{Peyman Afshani}{Aarhus University}{peyman@cs.au.dk}{}{supported by DFF (Det Frie Forskningsr\" ad) of Danish Council for Indepndent Reserach under grant ID DFF$-$7014$-$00404.}
\authorrunning{Peyman Afshani}
\keywords{Data Structures, Range Searching, Lower bounds}
\begin{document}

\maketitle
\begin{abstract}
  We report the first improvement in the space-time trade-off of lower bounds
  for the orthogonal range searching problem in the semigroup model, since
  Chazelle's result from 1990. 
  This is one of the very fundamental problems in range searching with a long history.
  Previously, Andrew Yao's influential result had shown that the problem is
  already non-trivial in one dimension~\cite{Yao-1Dlb}:
  using $m$ units of space, the query time $Q(n)$ must 
  be $\Omega( \alpha(m,n) + \frac{n}{m-n+1})$ where $\alpha(\cdot,\cdot)$
  is the inverse Ackermann's function, a very slowly growing function. 
  In $d$ dimensions, Bernard Chazelle~\cite{Chazelle.LB.II} proved that 
  the query time must be $Q(n) = \Omega( (\log_\beta n)^{d-1})$ where
  $\beta = 2m/n$.
  Chazelle's lower bound is known to be tight for when space consumption is ``high''
  i.e., $m = \Omega(n \log^{d+\varepsilon}n)$. 

  We have two main results.
  The first is a lower bound that shows Chazelle's lower bound was not tight for ``low space'':
  we prove that we must have $m Q(n) = \Omega(n (\log n \log\log n)^{d-1})$.
  Our lower bound does not close the gap to the existing data structures, however, our second
  result is that our analysis is tight. 
  Thus, we believe the gap is in fact natural since lower bounds are proven for
  idempotent semigroups while the data structures are built for general semigroups
  and thus they cannot assume (and use) the properties of an idempotent semigroup.
  As a result, we believe to close the gap one must study lower bounds for non-idempotent semigroups
  or building data structures for idempotent semigroups. 
  We develope significantly new ideas for both of our results 
  that could be useful in pursuing either of these directions.
\end{abstract}

\section{Introduction}
Orthogonal range searching in the semigroup model is one of the most fundamental data structure problems in computational geometry. 
In the problem, we are given an input set of points to store in a data structure where each point is associated with a weight
from a semigroup $\G$ and the goal is to compute the (semigroup) sum of all the weights inside an axis-aligned box given at the query time.
Disallowing the ``inverse'' operation in $\G$ makes the data structure very versatile as it is then applicable to a wide range
of situations (from computing weighted sum to computing the maximum or minimum inside the query). 
In fact, the semigroup variant is the primary way the family of range searching problems are introduced,
(see the survey~\cite{Agarwal.survey16}). 

Here, we  focus only on static data structures.
We use the convention that 
$Q(n)$, the query time, refers to the worst-case number of semigroup additions required to produce the query answer
$S(n)$, space, refers to the number of semigroup sums stored by the data structure.
By storage, denoted by $\pS(n)$, we mean space but not counting the space used by the input, i.e.,
$S(n) = n + \pS(n)$.
So we can talk about data structures with sublinear space, 
e.g., with 0 storage the data structure has to use the input weights only, leading to
the worst-case query time of $n$.

\subsection{The Previous Results}
Orthogonal range searching is a fundamental problem with a very long history. 
The problem we study is also very interesting from a lower bound point of view where the
goal is to understand the fundamental barriers and limitations of performing
basic data structure operations.
Such a lower bound approach 
was initiated by Fredman in early 80s and in a series of very influential papers (e.g., see~\cite{Fredman.LB.80,Fredman.LB.semigroup,Fredman.LB81}).
Among his significant results, was the lower bound~\cite{Fredman.LB.semigroup,Fredman.LB81} that showed
a sequence of $n$ insertions, deletions, and queries requires $\Omega(n \log n)$ time to run. 

Arguably, the most surprising result of these early efforts was given by Andrew Yao who in 1982
showed that even in one dimension, the static case of the problem contains a very non-trivial, albeit small, barrier.
In one dimension, the problem essentially boils down to adding numbers: store an input array $A$ of $n$ numbers in a data structure 
s.t., we can add up the numbers from $A[i]$ to $A[j]$ for $i$ and $j$ given at the query time. 
The only restriction is that we should use only additions and not subtractions (otherwise, the problem is easily solved using
prefix sums). 
Yao's significant result was that answering queries requires
$\Omega(\alpha(S(n),n) + n/\pS(n))$ additions, where $\alpha(\cdot,\cdot)$ is the inverse Ackermann function. 
This bound implies that if one insists on using $O(n)$ storage, the query bound cannot be reduced to constant,
but even using a miniscule amount of extra storage (e.g., a $\log^*\log^* n$ factor extra storage) can reduce the query bound 
to constant. 
Furthermore, using a bit less than $n$ storage, e.g., by a $\log^*\log^* n$ factor, will once again yield a more natural (and optimal) bound of $n/\pS(n)$.
Despite its strangeness, 
it turns out there are data structures that can match the exact lower bound (see also~\cite{Alon.opt1D}).
After Tarjan's famous result on the union-find problem~\cite{Tarjan-UF-79}, 
this was the second independent appearance of the inverse Ackermann function in the history of algorithms and data structures.

Despite the previous attempts, the problem is still open even in two dimensions.
At the moment, using range trees~\cite{Bentley.79,Bentley.80} on the 1D structures 
is the only way to get two or higher dimensional results. 
In 2D for instance, we can have $\pS(n) = O(n/\log n)$ with query bound $Q(n) = O(\log^3 n)$, or 
$\pS(n) = O(n)$ with query bound $Q(n) = O(\log^2 n)$, or 
$\pS(n) = O(n\log n)$ with query bound $O(\alpha(cn,n)\log n)$, for any constant $c$.
In general and in $d$ dimensions, we can build a  structure  with $\pS(n) = O(n\log^{d-1}n)$ units of storage and with 
$Q(n)= O(\alpha(c n,n)  \log^{d-1}n)$ query bound, for any constant $c$.
We can reduce the space complexity by any factor $t$ by increasing the query bound by another factor $t$.
Also, strangely, if $t$ is asymptotically larger than $\alpha(n,n)$, then the inverse Ackermann term in the query 
bound disappears. 
Nonetheless, a surprising result of Chazelle~\cite{Chazelle.LB.II} shows that the reverse is not true:
the query bound must obey $Q(n) = \Omega( (\log_{S(n)/n}n)^{d-1})$ which implies using polylogarithmic
extra storage only reduces the query bound by a $(\log\log n)^{d-1}$ factor. 
Once again, using range tree with large fan out, one can build a data structure
that uses $O(n\log^{2d-2+\varepsilon}n)$ storage, for any positive constant $\varepsilon$, and achieves the
query bound of $O( (\log_{\log n}n)^{d-1})$.
This, however leaves a very natural and important open problem:
{\it Is Chazelle's lower bound  the only barrier? Is it possible to achieve
    $O(n)$ space and $O(\log^{d-1}n)$ query time?}

\subparagraph{Idempotence and random point sets.}
A semigroup is idempotent if for every $x \in \G$, we have $x+x = x$.
All the previous lower bounds are in fact valid  for idempotent semigroups.
Furthermore, Chazelle's lower bound uses a uniform (or randomly placed) set of points which shows 
the lower bound does not require pathological or fragile input constructions.
Furthermore, his lower bound also holds for dominance ranges, i.e., $d$-dimensional boxes in the form
of $(-\infty, a_1] \times \dots \times (-\infty, a_d]$.
These little perks result in  a very satisfying statement: 
problem is still difficult even when $\G$ is ``nice'' (idempotent), and when
the point set is ``nice'' (uniformly placed) and when the queries are simple (``dominance queries''). 

\subsection{Our Results}
We show that for any data structure that uses $\pS(n)$ storage and has query bound of $Q(n)$,
we must have $\pS(n) \cdot Q(n) = \Omega(n (\log n \log\log n)^{d-1})$.
This is the first improvement to the storage-time trade-off curve for the problem since Chazelle's result in 1990.
It also shows that Chazelle's lower bound is not the only barrier.
Observe that our lower bound is strong at a different corner
of parameter space compared to Chazelle's:
ours is strongest when storage is small whereas Chazelle's is strongest when the storage is large. 
Furthermore, we also keep most of the desirable properties of Chazelle's lower bound:
our lower bound also holds for  idempotent semigroups and uniformly placed point sets.
However, we have to consider more complicated queries than just dominance queries which ties to our second main result. 
We show that our analysis is tight: 
given a ``uniformly placed'' point set and an idempotent semigroup $\G$, 
we can construct a data structure that uses $O(n)$ storage
and has the query bound of $O( (\log n \log\log n)^{d-1})$.
As a corollary, we provide an almost complete understanding of orthogonal range searching queries with respect to a uniformly
placed point set in an idempotent semigroup. 

\subparagraph{Challenges.}
Our results and specially our lower bound require significantly new ideas.
To surpass Chazelle's lower bound, we need to go beyond dominance queries which 
requires
wrestling with complications that ideas such as range trees can introduce.
Furthermore, in our case, the data structure can actually improve the query time 
by a factor $f$ by spending a factor $f$ extra space.
This means, we are extremely sensitive to how the data structure can ``use'' its space.
As a result, we need to capture the limits of how intelligently the
data structure can spend its budge of ``space'' throughout various subproblems.

\subparagraph{Implications.}
It is natural to conjecture that the uniformly randomly placed point set should be the most difficult point set 
for orthogonal queries. 
Because of this, we conjecture that our lower bounds are almost tight. 
This opens up a few very interesting open problems. 
See Section~\ref{sec:conc}.

\section{Preliminaries}

\subparagraph{The Model of Computation.}
Let $P$ be an input set of $n$ points with weights  from a semigroup $\G$. 
Our model of computation is the same as the one used by the previous lower bounds, e.g.,~\cite{Chazelle.LB.II}.
There has been quite some work dedicated to building a proper model for lower bounds
in the semigroup model. 
We will not delve into those details and we only mention the final consequences of the efforts.
The data structure stores a number of sums where 
each sum $s$ is the sum of the weights of a subset $s_P \subset P$. 
With a slight abuse of the notation, we will use $s$ to refer both to the sum as well as to the subset $s_P$. 
The number of stored sums is the space complexity of the data structure. 
If a sum contains only one point, then we call it a \mdef{singleton}
and we use $\pS(n)$ to denote the storage occupied by sums that are not singletons. 
Now, consider a query range $r$ containing a subset $r_P = r \bigcap P$. 
The query algorithm must 
find $k$ stored subsets $s_1, \dots, s_k$ such that
$r_P = \cup_{i=1}^k s_i$.
For a given query $r$, the smallest such integer $k$ is the query bound of the query.
The query bound of the data structure is the worst-case query bound of any query.
Observe that the data structure does not disallow covering any point more than once and in fact,
for idempotent semigroups this poses no problem.
All the known lower bounds work in this way, i.e., they 
allow covering a point inside the query multiple times. 
However, if the semigroup is not idempotent, then covering a point more than once could lead to
incorrect results. 
Since data structures work for general semigroups, they 
ensure that 
$s_1, \dots, s_k$ are disjoint. 

\subparagraph{Definitions and Notations.}
A $d$-dimensional dominance query is determined by one point $(x_1, \dots, x_d)$ and it is
defined as $(-\infty,x_1] \times \dots \times (-\infty,x_d]$.

\ignore{
In the context of lower bounds for idempotent semigroups,
we can make the following simplifying assumptions.
Consider a stored sum $s$ and
let $x_\ell, x_r$ and $y_t$ be the smallest $X$-coordinate,
the largest $X$-coordinate, and
the largest $Y$-coordinate of any point in $s$, respectively.
Now, we can update $s$ by adding any point inside the range $[x_\ell, x_r] \times (-\infty, y_t]$ to the sum $s$. 
Clearly, if the old sum $s$ was used to cover the points inside a query region $r$, then the range
$[x_\ell, x_r] \times (-\infty, y_t]$ had to be inside the region $r$ which implies the updated sum can
also be used (and in fact it can potentially cover more points).
We call the point $(x_r, y_t)$ the \mdef{dot} of $s$ and denote it with $\dts(s)$.
We call the line segment $[x_\ell, x_r]\times y_t$ the \mdef{(top) marker} of $s$ and denote it with $t(s)$.
Finally, for a rectangle $r$, we define the $X$-range of $r$ as the interval obtained by its projection 
onto the $X$-axis. 
$Y$-range is defined similarly.
}

\begin{definition}\label{def:wd}
We call a set $P \subset \R^d$  \mdef{well-distributed}  
  if the following properties hold:
  (i) $P$ is contained in the $d$-dimensional unit cube.
  (ii) The volume of any rectangle that contains $k\ge 2$ points of $P$ is  at least $\varepsilon_dk/|P|$
  for some constant $\varepsilon_d$ that only depends on the dimension.
  (iii) Any rectangle that has volume $v$, contains at most $\lceil v |P| / \varepsilon_d \rceil$ points of $P$.
\end{definition}

\begin{mlemma}\label{lem:well}\cite{aal12,Chazelle.LB.II,AD.frechet17}
  For any constant $d$ and any given value of $n$, there exists a
  well-distributed point set in $\R^d$ containing $\Theta(n)$ points.
\end{mlemma}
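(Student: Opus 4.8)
The plan is to build the set explicitly as a digital net. Fix a prime base $b\ge\max(d-1,2)$ and let $P$ be (the image in the unit cube of) a $(0,m,d)$-net in base $b$ with $N=b^m$ points, taking $m=\lceil\log_b n\rceil$ so that $|P|=N\in[n,bn)=\Theta(n)$; for $d=2$ one may take the concrete Hammersley set $\{(i/N,\phi_b(i))\}_{0\le i<N}$, where $\phi_b$ is the radical inverse, and for general $d$ one invokes the existence of such nets (Faure). Property~(i) of Definition~\ref{def:wd} is then immediate. The only structural facts I use are the defining property of a $(0,m,d)$-net --- every $b$-adic \emph{elementary box} $\prod_{i=1}^d[c_ib^{-k_i},(c_i+1)b^{-k_i})$ with $\sum_ik_i=k\le m$ contains exactly $b^{m-k}$ points, i.e.\ precisely $(\mathrm{volume})\cdot N$ of them --- and its consequence that each coordinate projection of $P$ is the uniform grid $\{0,1/N,\dots,(N-1)/N\}$, so that all $N$ points have pairwise distinct $i$-th coordinates for every $i$.

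It suffices to prove property~(iii), since (ii) follows from it up to a factor of two in $\varepsilon_d$ (a volume-$v$ box with $k\ge2$ points satisfies $k\le\lceil vN/\varepsilon_d\rceil<vN/\varepsilon_d+k/2$, hence $v>\varepsilon_dk/(2N)$), and conversely (ii) with constant $\varepsilon_d$ gives $\lceil vN/\varepsilon_d\rceil\ge k$ whenever a volume-$v$ box holds $k$ points. Given an arbitrary box $R=\prod_i[a_i,b_i)$ of volume $v$, if some side is shorter than $1/N$ then, as that coordinate projects onto the uniform grid, $R$ holds at most one point; otherwise choose for each $i$ the coarsest $b$-adic resolution $k_i$ with $b^{-k_i}\ge\ell_i:=b_i-a_i$, so that the $i$-th side meets at most two level-$k_i$ cells and $R$ is covered by at most $2^d$ boxes of resolution $K:=\sum_ik_i$. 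If $K\le m$ each is elementary, and counting points there gives $|R\cap P|\le 2^db^{-K}N\le(2b)^dvN$ because $b^{-K}<b^d\prod_i\ell_i$; if $K>m$ each covering box lies inside a resolution-$m$ elementary box and hence contains one point, so $|R\cap P|\le 2^d$, which is also at most $\lceil(2b)^dvN\rceil$ as soon as one knows that a box of volume below the relevant threshold cannot contain two points. This last fact --- no two distinct $p,q\in P$ satisfy $\prod_i|p_i-q_i|<\varepsilon_d'/N$ --- is the crux. Each $|p_i-q_i|$ is a nonzero multiple of $1/N$, giving the useless bound $N^{-d}$, so one must rule out all the differences being tiny at once; for the Hammersley set this is a ``shifting an index scrambles its radical inverse'' estimate, $\delta\cdot|\phi_b(i+\delta)-\phi_b(i)|=\Theta(N)$ uniformly in $i$ and $\delta$ (the estimate holding even in the worst case where the low digits nearly cancel the leading change), and a general $(0,m,d)$-net admits the analogous estimate; one then reconciles the various constants into a single $\varepsilon_d$.

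I expect this base case to be the main obstacle, and it is also where explicitness is unavoidable: for $N$ i.i.d.\ uniform points one gets in expectation $\Theta(N\log N)$ pairs with bounding-box volume $o(1/N)$, so property (ii) fails with high probability, and an independently jittered grid fails analogously on thin slabs. (If one insists on a probabilistic construction, the way around this is to jitter a grid and apply the asymmetric Lov\'asz Local Lemma over a polynomial-size family of test boxes, the delicate point then being the weighting, since a small box has large dependency degree precisely because it can be contained in many large test boxes.) A secondary obstacle, unless the existence of $(0,m,d)$-nets is taken as a black box, is constructing such a net for $d\ge3$, for which the cleanest route is Faure's polynomial construction over the field $\mathbb{F}_b$.
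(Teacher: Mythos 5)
Your counting skeleton is sound as far as it goes: the reduction of property~(ii) to property~(iii), and the covering of an arbitrary box by at most $2^d$ $b$-adic elementary boxes, correctly handle every box whose total resolution $K=\sum_i k_i$ is at most $m$, and you correctly isolate the remaining crux as a pairwise statement (no two points of the set span a box of volume below $\varepsilon'_d/N$). The gap is that this crux is not only left unproven --- it is \emph{false} for the construction you propose. The $(0,m,d)$-net property only constrains counts in $b$-adic elementary boxes and is compatible with two points sitting immediately on opposite sides of a $b$-adic hyperplane in every coordinate, and the concrete Hammersley set you name for $d=2$ already exhibits this. Take $b=2$, $N=2^m$, and indices $i=2^{m-1}-2$ and $j=2^{m-1}+1$: both are $m$-bit palindromes ($0111\cdots 110$ and $1000\cdots 001$), so $\phi_2(i)=i/N$ and $\phi_2(j)=j/N$, and the two Hammersley points $(i/N,\,i/N)$ and $(j/N,\,j/N)$ span a box of volume $9/N^2$. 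This violates both (ii) and (iii) of Definition~\ref{def:wd} for every fixed constant $\varepsilon_2$ once $N$ is large, it falsifies your claimed estimate $\delta\cdot|\phi_b(i+\delta)-\phi_b(i)|=\Theta(N)$ in any normalization (here $\delta=3$ and the product is $9/N$), it removes any basis for the assertion that a general $(0,m,d)$-net ``admits the analogous estimate,'' and the Faure route for $d\ge 3$ inherits the same defect. So this is not a missing technical lemma inside an otherwise sound plan: no argument that uses only net/elementary-box counts (plus grid projections) can deliver the pairwise volume bound, and the construction itself has to change.

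For comparison, the paper does not prove Lemma~\ref{lem:well} at all; it imports it from the cited references, where such sets are obtained by constructions whose pairwise separation comes from number theory rather than from discrepancy-type box counts. A standard construction takes the points $(\{i\alpha_1\},\dots,\{i\alpha_{d-1}\},\,i/n)$ for $0\le i<n$, where $(\alpha_1,\dots,\alpha_{d-1})$ is a badly approximable tuple (for $d=2$ a quadratic irrational such as $\sqrt2$; in general, e.g., a basis of a real algebraic number field of degree $d$), so that $q\cdot\prod_{j=1}^{d-1}\|q\alpha_j\|\ge c_d$ for every integer $q\ge 1$, with $\|\cdot\|$ denoting distance to the nearest integer. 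If two points have indices $i\ne i'$ and $q=|i-i'|$, the box they span has last side at least $q/n$ and $j$-th side at least $\|q\alpha_j\|$, hence volume at least $c_d/n$; properties (ii) and (iii) for general $k$ then follow by slicing and pigeonhole (or, in your framework, this pairwise bound is exactly what your $K>m$ case needs). If you prefer an explicitly digital construction, the separation has to be engineered directly --- for instance via pairwise coprime bases and a divisibility argument on index differences --- rather than inferred from the net property, which, as the palindrome pair shows, does not imply it.
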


\section{The Lower Bound}\label{sec:lb}
This section is devoted to the proof of our main theorem which is the following.

\begin{theorem}
  If $P$ is a  well-distributed point set of $n$ points in $\R^d$, any data structure that uses $\pS(n)$ storage,
  and answers $(2d-1)$-sided queries in
  $Q(n)$ query bound requires that 
  $\pS(n)\cdot Q(n) = \Omega(n (\log n \log\log n)^{d-1})$.
\end{theorem}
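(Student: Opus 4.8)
The plan is to induct on the dimension $d$. The base case $d=1$ is a $1$-sided (dominance, i.e.\ prefix-sum) problem on $\Theta(n)$ well-distributed points of $\R$, and there $\pS(n)\cdot Q(n)=\Omega(n)$ is just the easy half of Yao's theorem: a query ending right before the boundary of a prefix that the non-singleton sums cover only sparsely must be assembled from $\Omega(n/\pS(n))$ pieces. Since $(\log n\log\log n)^{0}=1$, this is the claim. For the inductive step the goal is to reduce the $d$-dimensional $(2d-1)$-sided problem to $(d-1)$-dimensional $(2(d-1)-1)$-sided problems; because a $(2d-1)$-sided box has one one-sided coordinate and $d-1$ two-sided ones, the reduction should strip off a \emph{two-sided} coordinate while retaining the unique one-sided one, matching the arithmetic $2d-1\mapsto 2d-3=2(d-1)-1$.

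For the reduction I single out a two-sided coordinate, say $x_d$, and impose a balanced, bounded-fan-out hierarchy $\mathcal{T}$ of nested slabs on the $x_d$-axis; by well-distributedness the slab at depth $\ell$ carries $\Theta(n/B^{\ell})$ points and, projected onto $x_1,\dots,x_{d-1}$, is again essentially well-distributed. I then feed the data structure only \emph{aligned} hard queries: $(2d-1)$-sided boxes whose $x_d$-extent is exactly one slab (or a contiguous run of a node's children) and whose restriction to $x_1,\dots,x_{d-1}$ is a hard $(2d-3)$-sided query provided by the inductive hypothesis. Alignment forces every stored sum usable for such a query to have its $x_d$-extent inside that slab, so the non-singleton sums ``visible'' at a node $v$ form a sub-structure $D_v$ that must solve a $(d-1)$-dimensional $(2d-3)$-sided instance on $\Theta(|P_v|)$ points. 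Applying the inductive bound to each $D_v$, together with $Q(D_v)\le Q(n)$, lower-bounds how many non-singleton sums $D_v$ holds; one then sums over $v$ and compares against $\sum_v |D_v|\le(\text{depth of }\mathcal{T})\cdot\pS(n)$ — valid because the nodes whose slab contains a fixed interval form a single root-to-node chain — and reads off a bound on $\pS(n)\cdot Q(n)$; the weighted sum $\sum_v|P_v|\,(\log|P_v|\log\log|P_v|)^{d-2}$ that appears is dominated by the shallowest levels.

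The crux — and essentially all the content of the theorem — is that the reduction as just sketched is lossy: the $\Theta(\log n)$ chain-length overhead in $\sum_v|D_v|\le(\text{depth of }\mathcal{T})\cdot\pS(n)$ cancels the gain from the weighted sum, so one recovers nothing better than the $(d-1)$-dimensional bound, not even Chazelle's $\Omega(n(\log n)^{d-1})$, let alone the target with its extra $(\log\log n)^{d-1}$. The fix must refuse to credit one stored sum with full weight in all $\Theta(\log n)$ nodes of its chain: a sum of size $\sigma$ is genuinely useful only at scales (slab populations) $\Theta(\sigma)$ — coarser, it fills a negligible fraction of a slab; finer, it is whole slabs — so its charge should decay geometrically with the ratio of slab population to $\sigma$. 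I would therefore strengthen the inductive statement to a \emph{scale-sensitive} lower bound, bookkeeping stored sums by the scale at which they operate, and exploit that a two-sided coordinate (unlike a one-sided one, which contributes only a prefix chain of canonical pieces) forces a genuinely nontrivial ``combine a run of children'' subproblem at \emph{every} internal node of $\mathcal{T}$; the fan-out of $\mathcal{T}$ has to be calibrated so that these per-node subproblems are collectively worth an honest $\log n\log\log n$ rather than $\log n$. The main obstacle is carrying out this calibration hand in hand with a worst-case analysis of how the data structure may split its storage budget $\pS(n)$ among the sibling subproblems at a node: one needs a convexity/averaging argument showing that no allocation — not even one that starves most subproblems to over-serve a few — defeats the bound, and that sums shared by many sibling subproblems are not double-credited. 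This is exactly the step that must be executed tightly enough to be matched by the companion upper-bound construction.
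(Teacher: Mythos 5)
There is a genuine gap, and you have in effect named it yourself: everything up to ``the crux'' is the easy part, and the crux is never carried out. Your recursion strips one two-sided coordinate, applies the $(d-1)$-dimensional bound to the sums visible in each slab node $v$, and compares against $\sum_v |D_v|$; since a stored sum is usable at every node on a root-to-leaf chain of the slab hierarchy, $\sum_v |D_v|$ can be $\Theta(\log n)\cdot \pS(n)$, and as you observe this cancels the per-level gain, so the argument as written proves nothing beyond the $(d-1)$-dimensional bound. The proposed repair --- a ``scale-sensitive'' strengthened inductive statement in which a sum of size $\sigma$ is charged with geometrically decaying weight away from its natural scale, a calibrated fan-out making each internal node worth $\log n\log\log n$, and a convexity/averaging argument over how the data structure splits $\pS(n)$ among sibling subproblems without double-crediting shared sums --- is exactly the missing content, but it is only a wish list: no strengthened hypothesis is formulated, no decay lemma is stated or proved, no calibration is exhibited, and no mechanism is given that actually produces a $\log\log n$ factor per two-sided dimension (in the paper this factor comes from harmonic sums $\sum_{j\le h/2} 1/j$ over subproblem scales, which your sketch has no analogue of). So the proposal does not establish the theorem.

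For comparison, the paper does not induct on dimension at all. It fixes a random $(2d-1)$-sided query built from $d-1$ slab trees simultaneously, defines for each scale vector $\aj=(j_1,\dots,j_{d-1})$ a subproblem whose ``$\lambda$-top'' contains $\lambda=\Theta\bigl(\frac{h^{d-1}}{j_1\cdots j_{d-1}}\cdot\frac{n}{\pS(n)}\bigr)$ points, and proves a Main Lemma saying that, after two Markov-type bad events are excluded via potential functions $\Phi$ and $\Psi$ (the latter with exactly the geometric decay in depth that you postulate), the sums covering at least $C$ points of a top cover only an $O(1/C')$ fraction of it in expectation; hence each subproblem forces $\Omega(\lambda/C)$ ``expensive'' sums, and summing over $\aj$ yields the $(\log n\log\log n)^{d-1}$ total. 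The double-counting danger you flag (one sum being expensive for many subproblems) is handled there by the separate ``gluing'' argument: extensions of sums and a comparison of the heights $\beta_\aj$ of nested tops show that a sum expensive for many subproblems must cover many points cheaply elsewhere, which the Main Lemma already bounds. Your plan would need proved substitutes for all three of these ingredients (the eligible-sum potential bounds, the per-subproblem covering lemma, and the gluing/charging step) before it could be considered a proof.
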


Let $\Q$ be the unit cube in $\R^d$. 
Throughout this section, the input point set is a set $P$ of $n$  well-distributed points in $\Q$.
Let $\D$ be a data structure that answers semigroup orthogonal range searching queries on $P$.

\subsection{Definitions and Set up}
We consider queries that have two boundaries in dimensions 1 to $d-1$ but only have
an upper bound in dimension $d$. 
For simplicity, we rename the axes such that the $d$-th axis
is denoted by $Y$ and the first $d-1$ axes are denoted by $X_1, \dots, X_{d-1}$. 
Thus, each query is in the form of $[x'_1,x_1]\times \dots [x'_{d-1},x_{d-1}]\times (-\infty,y]$.
The point $(x_1, \dots, x_{d-1},y)$ is defined as the \mdef{dot of $q$} and is denoted
by $\dts(q)$.
For every $1 \le i \le d-1$, the line segment that connects $\dts(q)$ to the point 
$(x_1, \dots, x_{i-1}, x'_{i}, x_{i+1}, \dots, x_{d-1},y)$ is called the $i$-th
marker of $q$ and it is denoted by $t_i(s)$.

\subparagraph{The tree $T_i$.}
For each dimension $i=1, \dots, d-1$, we define a balanced binary tree $T_i$ 
of height $h = \log n$ as follows.
Informally, we cut $\Q$ into $2^h$ congruent boxes with
hyperplanes perpendicular to axis $X_i$ which form the leaves of $T_i$.
To be more specific, every node in $T_i$ is assigned a box $r(v) \subset \Q$. 
The root of $T_i$ is assumed to have depth $0$ and it is assigned $\Q$. 
For every node $v$, we divide $r(v)$ into two congruent ``left'' and ``right'' boxes with
a hyperplane $\ell(v)$, perpendicular to $X_i$ axis. 
The left box is assigned to left child of $v$ and similarly the right box is assigned to the right child of $v$. 
We do not do this if $r(v)$ has volume less than $1/n$; these nodes become the leaves of $T$.
Observe that all trees $T_i$, $1 \le i \le d-1$ have the same height $h$. 
The volume of $r(v)$ for a node $v$ at depth $j$ is  $2^{-j}$.

\subparagraph{Embedding the problem in $\R^{2d-1}$.}
The next idea is to embed our problem in $\R^{2d-1}$.
Consistent with the previous notation, the first $d$ axes are $X_1, \dots, X_{d-1}$ and $Y$.
We label the next axis $Z_1, \dots, Z_{d-1}$.
We now represent $T_i$ geometrically as follows. 
Consider $h$ the height of $T_i$.
For each $i$, $1 \le i \le d-1$, 
we now define a \mdef{representative diagram $\Gamma_i$} which is a axis-aligned
decomposition of the unit (planar) square $Q_i$ in a coordinate system where the horizontal axis is $X_i$
and the vertical axis is $Z_i$.
As the first step of the decomposition, cut $Q_i$ into $h$ equal-sized sub-rectangles using $h-1$ horizontal lines.
Next, we will further divide each sub-rectangle into small regions and we will assign every
node $v$ of $T_i$ to one of these regions.
This is done as follows. 
The root $v$ of $T_i$ is assigned the topmost sub-rectangle as its region, $\gamma(v)$.
Assume $v$ is assigned a rectangle $\gamma(v)$ as its region.
We create a vertical cut starting from the middle point of the lower boundary of $\gamma(v)$ 
all the way down to the bottom of the rectangle $Q_i$. 
The children of $v$ are assigned to the two rectangles that lie immediately below $\gamma(v)$.
See Figure~\ref{fig:int}.

\begin{figure}[h]
  \centering
  \includegraphics[scale=0.65]{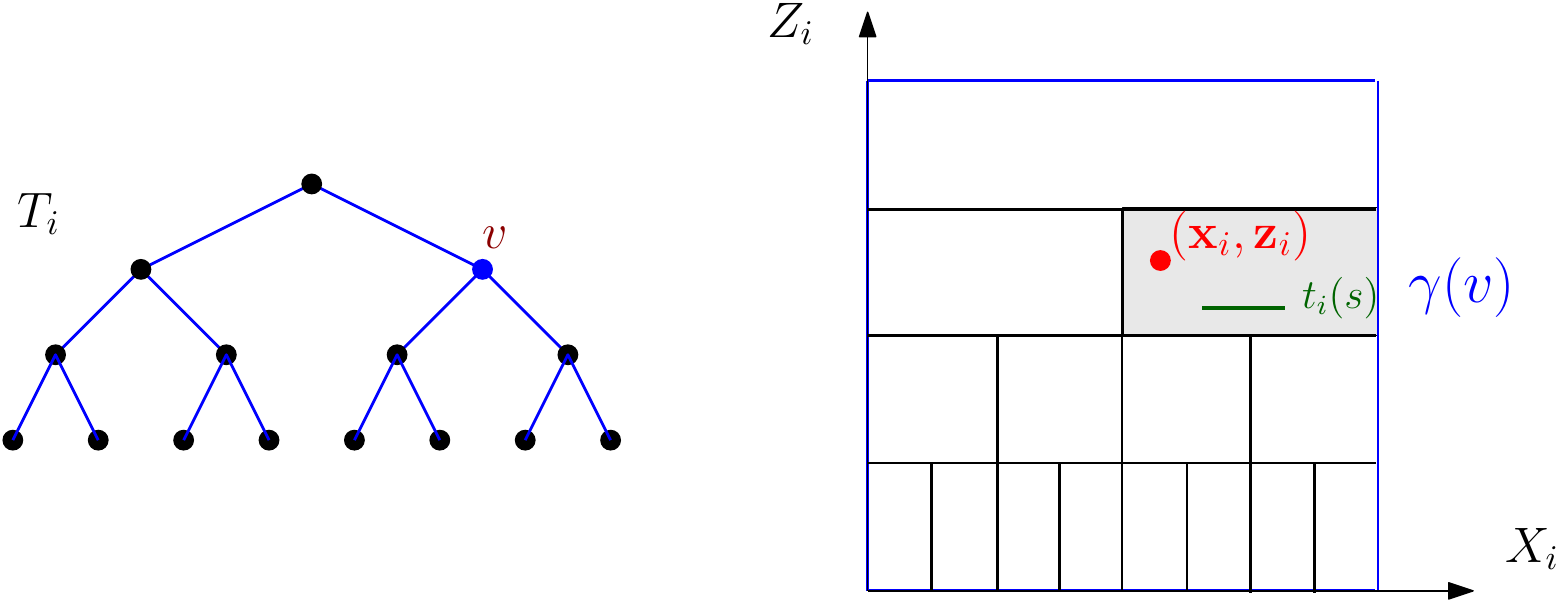}
  \caption{A tree and its representative diagram. The region of a node $v$ is highlighted in grey.}
  \label{fig:int}
\end{figure}

\subparagraph{Placing the Sums.}
Consider a semigroup sum $s$ stored by the data structure $\D$.
Our lower bound will also apply to semigroups that are idempotent which means without loss
of generality, we can assume that our semigroup is idempotent. 
As a result, we can assume that each semigroup sum $s$ stored by the data structure has the same
shape as the query.
Let $b(s)$ be the smallest box that is unbounded from below (along the $Y$ axis) that contains
all the points of $s$.
If $s$ does not include a point, $p$, inside $b(s)$, we can just  $p$ to $s$.
Any query that can use $s$ must contain the box $b(s)$ which means adding $p$ to $s$
can only improve things. 
Each sum $s$ is placed in one node of $T_i$ for every $1 \le i \le d-1$.
The details of this placement are as follows. 

A node $v_i$ in $T_i$ stores any sum $s$ such that the $i$-th marker of $s$, $t_i(s)$, intersects 
$\ell(v_i)$  with $v$ being the {\em highest} node with this property. 
Geometrically, this is equivalent to the following: we place $s$ at a node $v$ if $\gamma(v)$ is the
{\em lowest} region that fully contains the segment $t_i(s)$ (or to be precise, the projection  of
$t_i(s)$ onto the $Z_iX_i$ plane).
For example, in Figure~\ref{fig:int}(right), the sum $s$ is placed at $v$ in $T_i$ since
$t_i(s)$, the green line segment, is completely inside $\gamma(v)$ with $v$ being the lowest node of this property.
Remember that $s$ is placed at some node in each tree $T_i$, $1 \le i \le d-1$ (i.e., it is placed $d-1$ times in total).

\ignore{
\begin{observation}
  Our placement gives us three important properties:
  (i) For every ancestor $u$ of $v$, $s$ lies completely to one side of the hyperplane $\ell(u)$.
  (ii) $s$ contains one point from each side of $\ell_{v_i}$.
  (iii) For every descendant $w$ of $v$, $s$ contains the weight of a point that is outside $r(w)$. 
\end{observation}
}

\subparagraph{Notations and difficult queries.}
We will adopt the convention that random variables are denoted with bold math font. 
The difficult query is a $2d-1$ sided query chosen randomly as follows. 
The query is defined as $[\bx'_1, \bx_1]\times \dots \times [\bx'_{d-1},\bx_{d-1}] \times (-\infty,\by]$
where $\bx'_i, \bx_i$ and $\by$ are also random variables (to be described).
$\by$ is chosen uniformly in $[0,1]$.
To choose the remaining coordinates, we do the following. 
We place a random point $(\bx_i,\bz_i)$ uniformly inside the representative
plane $\Gamma_i$ (i.e., choose $\bx_i$ and $\bz_i$ uniformly in $[0,1]$).
Let $\bv_i$ be the random variable denoting the node in $T_i$ 
s.t., region $\gamma(\bv_i)$ contains the point $(\bx_i,\bz_i)$.
$\bx'_i$ is the $X_i$-coordinate of the left boundary of $\gamma(\bv_i)$.
Let $\bell_i$ be the depth of $\bv_i$ in $T_i$.
We denote the point $(\bx_1, \dots, \bx_{d-1},\by)$ by $\bq$ and denote the $2d-1$ sided query
by $\dom_{\bv_1, \ldots, \bv_{d-1}}(\bq)$.
See Figure~\ref{fig:int}(right).
Note that a query $\dom_{v_1, \ldots, v_{d-1}}(q)$ is equivalent to a dominance query defined by point $q$
in $r(v_1)\cap \dots \cap r(v_{d-1})$. 
To simplify the presentation and to stop redefining these concepts, we will reserve the notations introduced in this
paragraph to only represent the concepts introduced here.

\begin{observation}\label{ob:highsubtree}
    A necessary condition for being able to use a sum $s$ to answer
 $\dom_{v_1, \dots, v_{d-1}}(q)$  is that  $s$ is stored at the subtree of $v_i$,
for every $1 \le i \le d-1$.
\end{observation}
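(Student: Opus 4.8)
The plan is to unwind the definition of ``being able to use $s$'', transfer the resulting constraint to the bounding box $b(s)$, and then match it against the rule that places $s$ in each tree $T_i$. So first, if some valid cover of $\dom_{v_1, \dots, v_{d-1}}(q)$ uses $s$, then $s \subseteq r_P \subseteq r$, where $r = [x'_1,x_1] \times \cdots \times [x'_{d-1},x_{d-1}] \times (-\infty,y]$ is the query box. Since $r$ is itself a box unbounded from below along $Y$ and $b(s)$ is the smallest such box containing $s$, we get $b(s) \subseteq r$; in particular, for every $i$ the $X_i$-span of $b(s)$ is contained in $[x'_i,x_i]$. But this $X_i$-span is exactly the $X_i$-span of the $i$-th marker $t_i(s)$ (the marker varies only along $X_i$), and every coordinate of $t_i(s)$ lies in $[0,1]$ because the points of $s$ do; hence $t_i(s) \subseteq \Q$ and its $X_i$-span sits inside $[x'_i,x_i]$.

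Next I would use that, by the construction of the query, $x'_i$ is the left endpoint of the $X_i$-span of $r(v_i)$ and $x_i$ lies inside that span, so $[x'_i,x_i]$ is contained in the $X_i$-span of $r(v_i)$. Since $r(v_i)$ is a slab cut only along $X_i$, and $t_i(s) \subseteq \Q$ has its $X_i$-span inside that of $r(v_i)$, we conclude $t_i(s) \subseteq r(v_i)$. Now let $w$ be the node of $T_i$ at which $s$ is actually stored, i.e.\ the highest node with $t_i(s) \cap \ell(w) \neq \emptyset$. A short argument shows $t_i(s) \subseteq r(w)$: otherwise $t_i(s)$ meets the midline $\ell(w)$ of $r(w)$ yet runs past one end of the $X_i$-span of $r(w)$, but by the dyadic structure of $T_i$ that end is the midline of the parent of $w$, so $t_i(s)$ would also meet the splitting hyperplane of the parent, contradicting the maximality of $w$. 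Then $t_i(s)$ is a nonempty set contained in both $r(w)$ and $r(v_i)$, so these two boxes intersect; since $\{r(u) : u \in T_i\}$ is laminar, $v_i$ and $w$ are comparable, and $v_i$ cannot be a \emph{proper} descendant of $w$ (then $r(v_i)$ would lie entirely on one side of $\ell(w)$ and $t_i(s) \subseteq r(v_i)$ could not meet $\ell(w)$). Hence $w$ lies in the subtree of $v_i$, and running this over all $i \in \{1,\dots,d-1\}$ gives the observation.

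The argument is essentially bookkeeping; the only step that is not purely formal is the dyadic observation that the highest node whose splitting hyperplane the marker crosses actually contains the marker, which I would isolate as a one-line sub-lemma. The other place to be careful is the identification between the node $v_i$ --- defined through the representative diagram $\Gamma_i$ and a random point $(x_i,z_i)$ --- and its slab $r(v_i)$ in $T_i$, in particular the fact that $x'_i$ really is the left endpoint of the $X_i$-span of $r(v_i)$ and that the tree's boxes form a laminar family, both of which follow directly from the construction of $T_i$ and $\Gamma_i$.
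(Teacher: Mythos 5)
Your proposal is correct and is essentially the paper's argument run in the forward direction: both hinge on the two facts that $\dom_{v_1,\dots,v_{d-1}}(q)\subseteq r(v_i)$ and that the placement rule forces the marker of a sum stored outside the subtree of $v_i$ to protrude from $r(v_i)$, the paper simply stating the contrapositive (a sum stored at an ancestor of $v_i$ has a point outside $r(v_i)$, hence cannot lie inside the query) where you unfold it via $b(s)\subseteq$ query, $t_i(s)\subseteq r(v_i)$, $t_i(s)\subseteq r(w)$, and laminarity. One harmless slip: the end of the $X_i$-span of $r(w)$ that the marker would overshoot is the splitting hyperplane of \emph{some proper ancestor} of $w$ (or the cube boundary), not necessarily of the parent, but the contradiction with the maximality of $w$ is unchanged.
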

\begin{proof}
 Due to how we have placed the sums, 
 the sums stored at the ancestors of $v_i$ contain at least one point that lies outside $r(v_i)$
 and since $\dom(q)$ is entirely contained inside $r(v_i)$ those sums cannot be used to answer the query. 
\end{proof}

\subparagraph{Subproblems.}\label{par:sub}
Consider a query $\dom(q)=\dom_{v_1, \dots, v_{d-1}}(q)$.
We now define subproblems of $\dom(q)$.
A subproblem is represented by an array of $d-1$ integral indices $\aj = (j_1, \dots, j_{d-1})$ and it is denoted
as $\aj$-subproblem. 
The state of $\aj$-subproblem of a query $\dom_{v_1, \ldots, v_{d-1}}(q)$ could either be {\em undefined},
or it could refer to covering a particular subset of points inside the query.
In particular, given $\dom(q)$, a $\aj$-subproblem is undefined if  for some $1 \le i \le d-1$, there is no node 
$u_i \in T_i$ with the following properties:
$u_i$ has depth $\ell_i + j_i$,  $u_i$ has a right sibling
$u'_i$ with $r(u'_i)$ containing the query point $q$. 
See Figure~\ref{fig:highsub2}.
However, if such nodes $u_i$ exist for all $1 \le i \le d-1$, then the
$\aj$-subproblem of $\dom(q)$ is \mdef{well-defined} and it refers to the problem of covering all the points inside the
region $\dom(q) \cap r(u_1) \cap \dots \cap r(u_{d-1})$; observe that this is equivalent to covering
all the points inside the region $r(u_1) \cap \dots \cap r(u_{d-1})$ that have $Y$-coordinate at most $y$.
Further observe that for $u_i$ to exist in $T_i$, it needs to pass two \mdef{checks}:
\mdef{(check I)} $\ell_i + j_i \le h$ as otherwise, there are no nodes with depth $\ell_i + j_i$ and
\mdef{(check II)} a node $u_i$ at depth $\ell_i + j_i$ has a right sibling $u'_i$ with $r(u'_i)$ containing $q$.
The nodes $u_1, \dots, u_{d-1}$ are called the \mdef{defining nodes} of the $\aj$-subproblem.
Thus, the random variable $\bv_i$ defines the random variable $\bu_i$ where $\bu_i$ could be
either undefined or it could be a node in $T_i$.
Clearly, the distribution of $\bu_i$ is independent of the distributions of
$\bu_j$ and $\bv_j$ for $i \not = j$  as $\bu_i$ only depends on $\bv_i$.

\begin{restatable}{observation}{obhighsub}\label{ob:highsubproblem}
    Consider a well-defined $\aj = (j_1, \dots, j_{d-1})$ subproblem of a query $\dom_{v_1, \dots, v_{d-1}}(q)$
    and its defining nodes $u_1, \dots, u_{d-1}$.
    To solve the $\aj$-subproblem (i.e., to cover the points inside the subproblem),
    the data structure can use a sum $s$ only if for every $1 \le i \le d-1$,
    we either have case (i) where $s$ is stored at ancestors of $u_i$ but not the ancestors of $v_i$
    or case (ii) where $s$ is stored at the subtree of $u_i$. 
    If a sum $s$ violates one of these two conditions for some $i$, then it cannot
    be used to answer the $\aj$-subproblem.
    See Figure~\ref{fig:highsub2}.
\end{restatable}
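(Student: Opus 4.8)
The plan is to reduce the claim to Observation~\ref{ob:highsubtree} applied inside a cleverly chosen sub-region, combined with the combinatorial structure of the placement of sums in each tree $T_i$. I would fix an index $i$ and argue about that coordinate in isolation; since the placement of $s$ in $T_i$ depends only on the $i$-th marker $t_i(s)$, and since the conditions in the statement are a conjunction over $i$, it suffices to show that for each $i$, a usable sum $s$ must satisfy either case (i) or case (ii). First I would recall the key geometric fact behind the placement: if $s$ is placed at node $w$ of $T_i$, then (a) for every strict ancestor $w'$ of $w$, the marker $t_i(s)$ lies entirely on one side of the splitting hyperplane $\ell(w')$, hence $s$ contains a point outside $r(w)$ in the $X_i$ direction only if $w$ is passed; more usefully, for every descendant $z$ of $w$, the sum $s$ contains a point of $P$ whose $X_i$-coordinate lies outside $r(z)$ (this is exactly the property recorded in the commented-out observation, and it is what made Observation~\ref{ob:highsubtree} work).

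The main step is then a case analysis on where in $T_i$ the sum $s$ is placed, relative to the two relevant nodes $v_i$ (at depth $\ell_i$) and $u_i$ (at depth $\ell_i + j_i$), recalling that $u_i$ is a descendant of $v_i$ lying on the path toward... no: $u_i$ is the left sibling of the node $u_i'$ whose box contains $q$, and $u_i$ hangs off the $v_i$-to-$q$ search path at depth $\ell_i+j_i$. The subproblem region is contained in $r(u_i)$, so by the argument of Observation~\ref{ob:highsubtree} a usable $s$ must be stored in the subtree of $u_i$ \emph{or} at a proper ancestor of $u_i$ whose marker still spans a region containing all of $r(u_i)$'s $X_i$-extent inside the query — i.e.\ $s$ must be stored on the path from the root to $u_i$ or in the subtree of $u_i$. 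If $s$ is in the subtree of $u_i$ we are in case (ii) and done. Otherwise $s$ sits at some ancestor $w$ of $u_i$; I must rule out that $w$ is also an ancestor of $v_i$ (or $v_i$ itself), which is precisely the exclusion in case (i). Here I would use that the subproblem is well-defined, so $u_i'$ (right sibling of $u_i$) contains $q$; this forces the $X_i$-marker of any sum stored at an ancestor of $v_i$ (equivalently, at depth $\le \ell_i$ on the search path) to extend across $\ell(v_i)$ in a way that places a point of $s$ with $X_i$-coordinate outside $r(v_i) \supseteq r(u_i) \supseteq$ the subproblem region — contradicting usability, exactly as in the proof of Observation~\ref{ob:highsubtree}. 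Conversely a sum at an ancestor of $u_i$ that is \emph{not} an ancestor of $v_i$ lies strictly below depth $\ell_i$ but above depth $\ell_i+j_i$ on the search path, and for such a sum the relevant point guaranteed outside $r(\cdot)$ is outside a box \emph{below} $v_i$ but still containing the subproblem, so it need not intersect the subproblem region — hence such sums are permitted, giving case (i).

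The step I expect to be the main obstacle is pinning down precisely why an ancestor of $u_i$ that is not an ancestor of $v_i$ is genuinely usable while an ancestor of $v_i$ is not — i.e.\ getting the "not the ancestors of $v_i$" half of case (i) exactly right. The asymmetry comes entirely from the fact that the query $\dom_{v_1,\dots,v_{d-1}}(q)$ is \emph{restricted} to $r(v_1)\cap\dots\cap r(v_{d-1})$: a sum stored at an ancestor of $v_i$ necessarily protrudes outside $r(v_i)$ in coordinate $i$ (by the placement property, since it was \emph{not} placed at $v_i$ or below), so it already fails to be usable for the original query, let alone the subproblem; whereas a sum placed strictly between $v_i$ and $u_i$ protrudes only outside some intermediate box and can still be contained in the subproblem region. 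I would make this rigorous by invoking the placement property once for the node just above $v_i$ and once for each node on the search segment from $v_i$ down to $u_i$, and by using well-definedness (check II) to locate $q$, and hence the subproblem region, strictly to the correct side at each branching. The remaining details are a routine intersection-over-$i$ argument to combine the per-coordinate conclusions into the stated conjunction.
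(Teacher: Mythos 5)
Your overall route is the same as the paper's: fix a dimension $i$, use Observation~\ref{ob:highsubtree} to place $s$ in the subtree of $v_i$, and then case-analyse the node $w$ of $T_i$ storing $s$ relative to $u_i$, using the placement property of the marker $t_i(s)$; the conjunction over $i$ is then immediate. Two corrections are needed, one of them substantive. Your plan to ``rule out that $w$ is also an ancestor of $v_i$ \emph{(or $v_i$ itself)}'' proves too much, and the justification you give fails at exactly that boundary: a sum stored at $v_i$ has its marker crossing $\ell(v_i)$ but, by the placement rule, the marker is \emph{contained} in $r(v_i)$, so crossing $\ell(v_i)$ does not place any point of $s$ outside $r(v_i)$. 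Such a sum can be contained in the query and can cover points of the $\aj$-subproblem, and the statement's case (i) deliberately includes it (only strict ancestors of $v_i$ are excluded, which is already Observation~\ref{ob:highsubtree}); the later parts of the paper rely on this, since the ``path'' equivalence class consists of sums stored on the path from $u_i$ up to and including $v_i$. So in the ancestor case there is nothing left to rule out beyond what Observation~\ref{ob:highsubtree} gives; the parenthetical and the argument attached to it should be deleted rather than repaired.

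The step that genuinely needs a new argument is the dichotomy you assert ``by the argument of Observation~\ref{ob:highsubtree}'', namely that a usable $s$ must sit at an ancestor of $u_i$ or in the subtree of $u_i$. Observation~\ref{ob:highsubtree}'s reasoning (the sum contains a point outside the query region, hence cannot be used at all) does not cover this: a sum stored at a node $w$ in the subtree of $v_i$ that is neither an ancestor nor a descendant of $u_i$ may lie entirely inside the query; it is excluded because it covers no point of the subproblem, not because it protrudes. The paper makes this precise via the least common ancestor $v'$ of $w$ and $u_i$: since $w$ is not an ancestor of $u_i$, $v'$ is a strict ancestor of $w$, so by your fact (a) the marker $t_i(s)$ lies entirely on the side of $\ell(v')$ containing $r(w)$, hence the $X_i$-range of $s$ is disjoint from $r(u_i)$, which contains the subproblem region. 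You do have the needed fact on the table, but your clause about a marker that ``spans a region containing all of $r(u_i)$'s $X_i$-extent inside the query'' is not the right condition (eligible ancestor sums need not span $r(u_i)$) and should be replaced by this disjointness argument.
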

\begin{proof}
    First we use Observation~\ref{ob:highsubtree}.
    $s$ must be stored at the subtree of $v_i$. 
    Let $w$ be the node that stores $s$.
    If $w$ is in the subtree of $u_i$, then we are done.
    Otherwise, let $v'$ be the least common ancestor of $u_i$
    and $w$. 
    If $v' = w$ then we are done again but otherwise,
    $u$ belongs to the subtree of one child of $v'$ while $w$ belongs to the
    subtree of the other child of $v'$. 
    By our placement rules, this implies that $s$ is entirely outside $r(u_i)$ and thus it cannot
    be used to answer the $\aj$-subproblem. 
\end{proof}

\begin{figure}[t]
  \centering
  \includegraphics[scale=0.5]{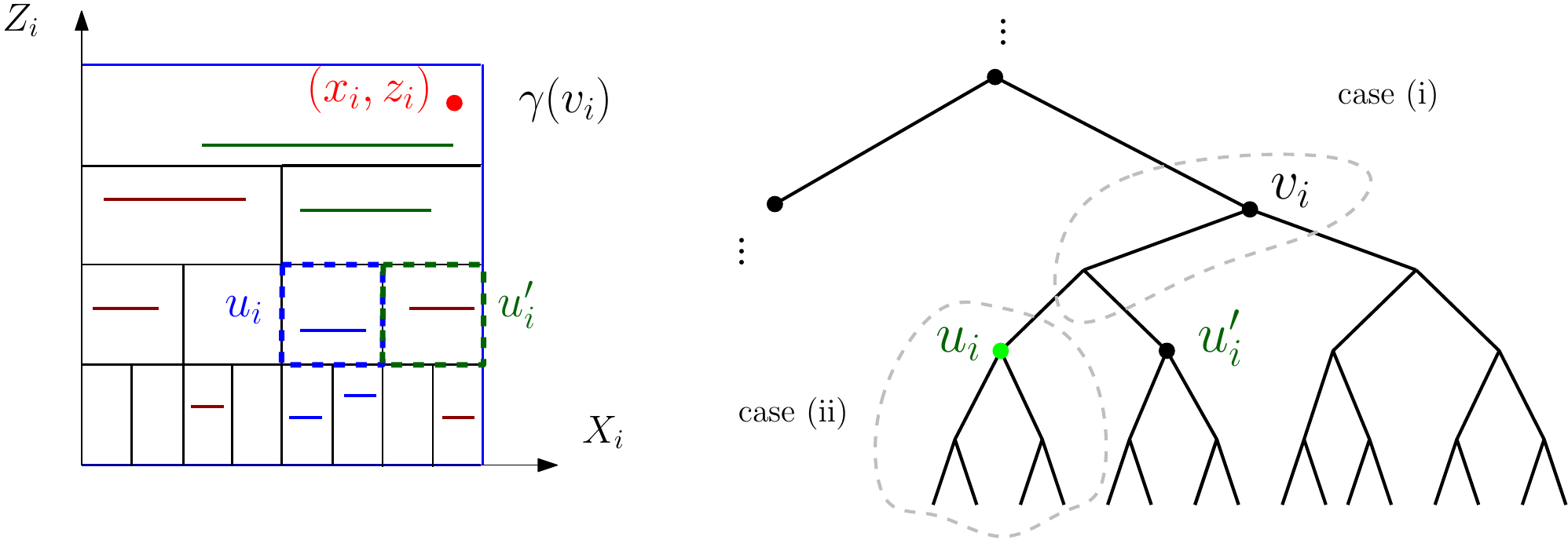}
  \caption{$u_i$ is a defining node. The blue line segments correspond to $t_i(s)$ of a sum $s$ that is placed in the subtree
  of $u_i$. The green ones correspond to those placed at ancestors of $u_i$ but not at ancestors of $v_i$.
  The red ones correspond to sums that cannot be used to answer the subproblem.}
  \label{fig:highsub2}
\end{figure}

\subsection{The Main Lemma}
\begin{figure}[h]
    \centering
    \includegraphics{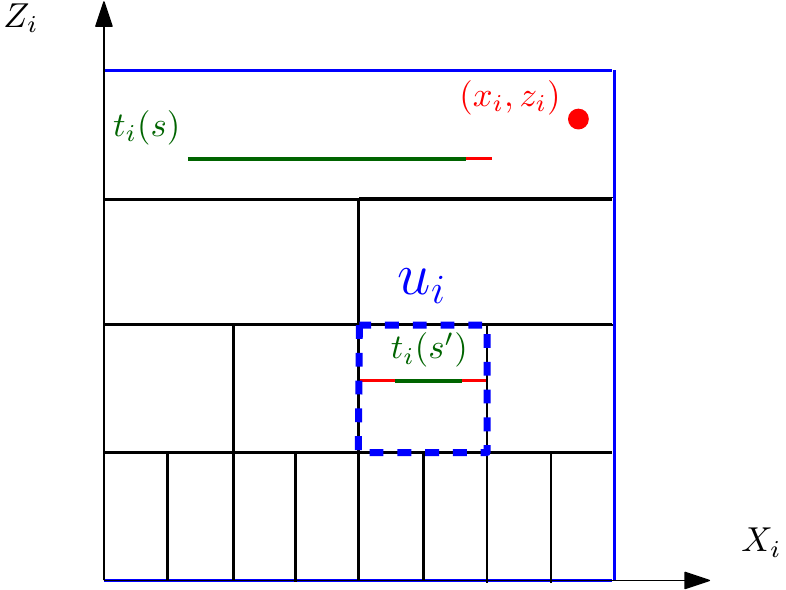}
    \caption{The extensions of two sums that can be used to answer a subproblem.}
    \label{fig:cover}
\end{figure}

In this subsection, we prove a main lemma which is the heart of our lower bound proof. 
To describe this lemma, we first need the following notations.
Consider a well-defined $\aj$-subproblem of a query $\dom_{v_1, \ldots, v_{d-1}}(q)$ where $j_i \le \frac{h}2$ for 
$1 \le i \le d-1$. 
As discussed, this subproblem corresponds to covering all the points in the region
$r(u_1) \cap \cdots \cap r(u_{d-1})$ whose $Y$-coordinate is below $y$, the $Y$-coordinate of point $q$;
thus, the $\aj$-subproblem of the query can be represented as the problem of covering all the points inside the box
$[a_1, b_1] \times \cdots \times [a_{d-1}, b_{d-1}] \times (-\infty, y]$ where $a_i$ and $b_i$ correspond to the
left and the right boundaries of the slab $r(u_i)$.
Let $0 < \lambda$ be a parameter. 
Consider the region 
$[a_1, b_1] \times \cdots \times [a_{d-1}, b_{d-1}] \times [y-\beta, y]$  in which 
$\beta$ is chosen such that the region contains $\lambda$ points;
as our pointset is well-distributed, this implies that the volume of the region is
$\Theta(\lambda/n)$.
We call this region \mdef{the $\lambda$-top box}.
The \mdef{$\lambda$-top}, denoted  by $\top(\aj,\lambda)$,
is then the problem of covering all the points inside the $\lambda$-top box of the $\aj$-subproblem. 
With a slight abuse of the notation, we will use $\top(\aj,\lambda)$ to refer also to the set of points
inside the $\lambda$-top box.
If there are not enough points in the $\lambda$-top box, the $\lambda$-top 
is undefined, otherwise, it is well-defined. 
\newcommand{\tp}{\mbox{Top}}
These of course also depend on the query but we will not write the dependency on the query as it will clutter the notation.
Furthermore,  observe that when the query is random, then $\top(\aj,\lambda)$ becomes a random variable which is either
undefined or it is some subset of points. 

\subparagraph{Extensions of sums.}
Due to technical issues, we slightly extend the number of points each sum covers. 
Consider a sum $s$ stored at a subtree of $v_i$ such that $s$ can be used to answer the $\aj$-subproblem. 
By Observation~\ref{ob:highsubtree}, $s$ is either placed at the subtree of $u_i$ or on the path connecting
$v_i$ to $u_i$. 
We extend the $X_i$ range of the sum $s$ (i.e., the projection of $s$ on the $X_i$)  to include the left and the right boundary of the
node $u_i$ along the $X_i$-dimension. 
We do this for all $d-1$ first dimensions to obtain an extension $e(s)$ of sum $s$.
We allow the data structure to cover any point in $e(s)$ using $s$. 

\begin{restatable}{mlemma}{mainlb}{\normalfont [The Main Lemma]}
    \label{lem:mainlb}
  Consider a $\aj = (j_1, \dots, j_{d-1})$ subproblem of a random query $\dom_{\bv_1, \dots, \bv_{d-1}}(\bq)$, for
  $1 \le j_i \le h/2$.
  Let $\lambda = \frac{\delta h^{d-1}}{j_1 j_2 \dots j_{d-1}}\cdot \frac{n}{\pS(\A)}$ where $\delta$ is a small enough constant
  and $\pS(\A)$ is the storage of the data structure.
  Let $\bS_\aj$ be the set of sums $s$ such that (i) $s$ is contained inside the query $\dom_{\bv_1, \dots, \bv_{d-1}}(\bq)$,
  and (ii) $e(s)$ covers at least $C$ points from $\top(\aj,\lambda)$, meaning, $|e(s) \cap \top(\aj,\lambda)| \ge C$
  where $C$ is  a large enough constant.

  With $\Omega(1)$ probability,  the $\aj$-subproblem and the  $\top(\aj,\lambda)$ are well-defined.
  Furthermore conditioned on both of these being well-defined, 
  with probability $1-O(\sqrt{\delta/\varepsilon_d})$, the nodes $\bv_1, \cdots, \bv_{d-1}$ will
  be sampled as nodes $v_1, \cdots, v_{d-1}$, s.t., the following holds:
  $\E[\sum_{s \in \bS_\aj} |e(s) \cap \top(\aj,\lambda)|] < \frac{|\top(\aj,\lambda)|}{C'}$ where the expectation
  is over the random choices of $\by$ and $C'$ is another large constant.
\end{restatable}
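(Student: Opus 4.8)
The plan is to decompose the statement into three separate probabilistic claims and treat them in order. First I would establish that the $\aj$-subproblem is well-defined with constant probability. This amounts to a union bound over the $d-1$ coordinates: for each $i$, the defining node $\bu_i$ exists iff \textbf{(check I)} $\bell_i + j_i \le h$ and \textbf{(check II)} the depth-$(\bell_i+j_i)$ ancestor-path node has a right sibling whose slab contains $\bq$. Since $\bv_i$ is sampled by dropping a uniform point into $\Gamma_i$, the depth $\bell_i$ is geometrically distributed (roughly $\Pr[\bell_i = \ell] \approx 2^{-\ell}$ up to the truncation at $h$), so with $j_i \le h/2$ check I fails only with probability $2^{-\Omega(h)} = n^{-\Omega(1)}$; check II holds with probability at least some constant (it depends on which side of $\ell(\bu_i)$ the random point $\bx_i$ landed, and on $\bz_i$, which are independent uniforms). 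Because the coordinates are mutually independent (the excerpt already notes $\bu_i$ depends only on $\bv_i$), the joint probability that all $d-1$ subproblems are defined is $\Omega(1)$ for constant $d$. The same independence and a volume computation via well-distributedness (Definition~\ref{def:wd}) shows $\top(\aj,\lambda)$ is well-defined — i.e.\ the $\lambda$-top box fits inside the subproblem box and actually contains $\lambda$ points — with $\Omega(1)$ probability, provided $\lambda$ is not too large, which is guaranteed by the choice $\lambda = \frac{\delta h^{d-1}}{j_1\cdots j_{d-1}}\cdot\frac{n}{\pS(\A)}$ with $\delta$ small.

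Second, conditioned on everything being well-defined, I would bound $\E\big[\sum_{s\in\bS_\aj}|e(s)\cap\top(\aj,\lambda)|\big]$ by charging it to the storage $\pS(\A)$. The key structural fact is Observation~\ref{ob:highsubproblem}: a sum $s$ usable for the $\aj$-subproblem is stored either in the subtree of $\bu_i$ or on the path from $\bv_i$ to $\bu_i$, for \emph{every} $i$. I would fix the nodes $v_i$ and the point $q$ and take the expectation only over $\by$ (as the lemma dictates). For a fixed sum $s$, the event that $s\in\bS_\aj$ and that $e(s)$ meets the $\lambda$-top box in many points requires $t_d(s)$'s $Y$-extent to be close to $\by$; since $\by$ is uniform and the $\lambda$-top box has $Y$-width $\beta = \Theta(\lambda/n)$ (again by well-distributedness), this happens with probability $O(\lambda/n)$ — actually one must be more careful and weight by $|e(s)\cap\top(\aj,\lambda)|$, so the contribution of $s$ is at most (number of points of $P$ in the relevant "swept" slab) $\times O(1/n)$, and summing over all stored sums that can possibly land in this slab gives roughly $\pS(\A)$ candidate sums, each contributing $O(\lambda/n)$ in expectation after accounting for the geometry. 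Matching this against $|\top(\aj,\lambda)| = \lambda$ and plugging in the value of $\lambda$ is exactly where the product $j_1\cdots j_{d-1}$ and the factor $h^{d-1}$ appear: the number of sums that can reach a given subproblem through the path-or-subtree structure is controlled by $\prod_i (\text{path length } j_i)$ against the total depth $h$, so the "budget per subproblem" scales like $\pS(\A)\cdot\frac{j_1\cdots j_{d-1}}{h^{d-1}}$, and choosing $\delta$ small forces $\E[\cdots] < \lambda/C' = |\top(\aj,\lambda)|/C'$.

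Third, I would handle the $1-O(\sqrt{\delta/\varepsilon_d})$ claim. The expectation bound above holds on average over the choice of $(v_1,\dots,v_{d-1})$ (conditioned on well-definedness), with average value comfortably below $|\top|/C''$ for an even larger constant; by Markov's inequality applied to the random choice of the nodes, the fraction of node-tuples for which $\E_{\by}[\cdots]$ exceeds $|\top|/C'$ is $O(C'/C'') = O(\sqrt{\delta/\varepsilon_d})$ once $C''/C'$ is calibrated to $\sqrt{\varepsilon_d/\delta}$. This is where the square-root shows up — it is the natural loss when converting an averaged bound into a high-probability statement via a single Markov step with the constants tuned to the parameters $\delta$ and $\varepsilon_d$.

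The main obstacle I expect is the careful bookkeeping in the second step: precisely counting how many stored sums can simultaneously satisfy the path-or-subtree condition in all $d-1$ trees and still have extensions overlapping the $\lambda$-top box, and showing that this count, weighted appropriately, telescopes to give exactly the $\frac{j_1\cdots j_{d-1}}{h^{d-1}}\pS(\A)$ scaling. One has to be scrupulous that the "extension" $e(s)$ — which enlarges the $X_i$-range of $s$ to the full width of $\bu_i$ — does not let a single sum cover too much of the $\lambda$-top box; bounding $|e(s)\cap\top(\aj,\lambda)|$ requires using property (iii) of well-distributedness on the extended box, and ensuring the extension only grows volume by a controlled factor. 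Getting the dependence on $C$ right (the constant threshold in the definition of $\bS_\aj$) so that sums below the threshold are simply irrelevant, while sums above it are few, is the delicate combinatorial core.
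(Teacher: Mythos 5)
Two problems, one factual and one structural. The factual one: you claim the depth $\bell_i$ of $\bv_i$ is geometrically distributed ($\Pr[\bell_i=\ell]\approx 2^{-\ell}$), so that (check I) fails with probability $2^{-\Omega(h)}$. By construction of the representative diagram $\Gamma_i$ (the unit square is cut into $h$ \emph{equal} horizontal strips, one per depth), $\bell_i$ is \emph{uniform} on the $h$ depths, and (check I) fails with probability exactly $j_i/h$, which is why the hypothesis $j_i\le h/2$ is needed at all (Observation~\ref{ob:distu}). This is not cosmetic: the uniform depth distribution, i.e.\ $\Pr[\bu_i=u_i]=O(\frac{1}{h2^{\ell'}})$ for a node $u_i$ at depth $\ell'$, is exactly what the paper's counting rests on, and with a geometric depth your probability calculations in the second step would not produce the required $\frac{j_1\cdots j_{d-1}}{h^{d-1}}$ scaling.

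The structural gap is in your second step, which you yourself flag as "the delicate combinatorial core" but do not supply. Summing "roughly $\pS(\A)$ candidate sums, each contributing $O(\lambda/n)$" is far too weak: per subproblem you must show the (suitably weighted) number of usable sums is only about $\pS(\A)\cdot\frac{\prod_i j_i}{h^{d-1}\prod_i 2^{\ell_i+j_i}}$, and this requires distinguishing, per dimension, whether a sum is stored on the path from $\bv_i$ to $\bu_i$ (case (i) of Observation~\ref{ob:highsubproblem}) or in the subtree of $\bu_i$ (case (ii)); the paper splits $\bS_\aj$ into $2^{d-1}$ equivalence classes accordingly. For the path dimensions it proves $\E\bigl[\Phi\cdot\prod_{i\le t}\frac{h2^{\bell_i+j_i}}{j_i}\bigr]=O(\pS(\D))$ (Lemma~\ref{lem:phi}, using the Type~I regions of Observation~\ref{ob:cases}: each stored sum is compatible with only $j_i+1$ choices of $\bu_i$, each of probability $O(\frac{1}{h2^{\ell'}})$, so each sum contributes $O(1)$). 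For the subtree dimensions a plain count fails because a subtree can contain many sums; one needs the geometrically weighted potential $\Psi$ with weights $\prod_{i>t}2^{-\zeta_i(s)}$ (Lemma~\ref{lem:psi}), which is then matched against the fact that the extension $e(s)$ of a sum stored $\zeta_i$ levels below $\bu_i$ has its cross-section, hence its share of $\top(\aj,\lambda)$ by well-distributedness, shrunk by exactly $\prod_{i>t}2^{-\zeta_i(s)}$. Your single Markov step corresponds to the paper's two bad events $\bad_1,\bad_2$ (which is where $\varepsilon^2$, hence the square root, actually comes from), but without the two potentials and the case split there is nothing to apply Markov to; the hoped-for "telescoping" is precisely the content of Lemmas~\ref{lem:phi} and~\ref{lem:psi} and is missing from the proposal. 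Also note $\beta$ is the $Y$-width of the $\lambda$-top box, $\beta=\Theta(\frac{\lambda}{n}\prod_i 2^{\ell_i+j_i})$, not $\Theta(\lambda/n)$ (that is its volume), another place where your accounting would go off.
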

 
Let us give some intuition on what this lemma says and why it is critical for our lower bound.
For simplicity assume $\pS(\A) = n$ and assume we sample $\bv_1, \cdots, \bv_{d-1}$ as the first step, and and then 
sample $\by$ as the last step. 
The above lemma  implies that if we focus on one particular subproblem, the sums in the data structure cannot
cover too many points; to see this consider the following. 
The lemma first says that after the first step, with positive constant probability, $\aj$-subproblem and $\top(\aj,\lambda)$ are well-defined.
Furthermore, here is a very high chance that our random choices will ``lock us'' in a ``doomed'' state, 
after sampling $v_1, \cdots, v_{d-1}$.
Then, when considering the random choices of $\by$, 
sums that cover at least $C$ points in total cover a very small fraction of the points.
As a result, we will need $\Omega(\lambda/C) = \Omega(\frac{\delta\log^{d-1}n}{C j_1 j_2 \dots j_{d-1}})$ sums to cover the points inside
the $\lambda$-top of the subproblem. 
Summing these values over all possible subproblems, 
$j_i, 1 \le j_i \le h/2$, $1 \le i \le d-1$ will create a lot of Harmonic sums of the type
$\sum_{x=1}^{h/2}x = O(\log \log n)$ which will eventually lead to our lower bound. 
In particular,  we will have $\sum_{j_i, 1 \le j_i \le h/2}\Omega(\frac{h^{d-1}}{j_1 j_2 \dots j_{d-1}}) = (\log n \log\log n)^{d-1}.$ 
There is however, one very big technical issue that we will deal with later: a sum can cover very few points from each
subproblem but from very many subproblems!
Without solving this technical issue, we only get the bound $\max_{j_i, 1 \le j_i \le h/2}\Omega(\frac{h^{d-1}}{j_1 j_2 \dots j_{d-1}}) = (\log n)^{d-1}$ 
which offers no improvements over Chazelle's lower bound. 
Thus, while solving this technical issue is important, nonetheless, it is clear that the lemma
we will prove in this section is also very critical. 

As this subsection is devoted to the proof of the above lemma, we will assume that we are considering
a fixed $\aj$-subproblem and thus the indices $j_1, \ldots, j_{d-1}$ are fixed. 
 
\subsubsection{Notation and Setup} 
By Observation~\ref{ob:highsubproblem}, only a particular set of sums can be used to answer the 
$\aj$-subproblem of a query. 
Consider a sum $s$ that can be used to answer the subproblem of some query.
By the observation, we must have that $s$ must either satisfy case (i) or case (ii) for every
tree $T_i$, $1\le i \le d-1$. 
Over all indices $i$, $1\le i \le d-1$, they describe $2^{d-1} = O(1)$ different cases. 
This means that we can partition $\bS_\aj$ into $2^{d-1}$ different \mdef{equivalent
classes} s.t., for any two sums $s_1$ and $s_2$ in an equivalent class,  either they both satisfy case (i) or 
they both satisfy case (ii) in Observation~\ref{ob:highsubproblem} and for any dimension
$i$. 
Since $2^{d-1}$ is a constant, it suffices 
to show that our lemma holds when only considering sums of particular equivalent class.
In particular, let $S'_\aj$ be the subset of eligible sums that all belong to one equivalent class. 
Now, it suffices to show that 
$\E[\sum_{s \in S'_\aj} |e(s) \cap \top(\aj,\lambda)|] < \frac{|\top(\aj,\lambda)|}{2^d C'}$.
since summing these over all $2^{d-1}$ equivalent classes will yield the lemma. 
Furthermore, w.l.o.g and by renaming the $X$-axes, we can assume that there exists 
a fixed value $t$, $0 \le t \le d-1$, such that for every sum $s \in S'_\aj$,
for dimensions $1 \le i \le t$, $s$ satisfies case (i) in $T_i$ and for
$t < i \le d-1$, $s$ is within case (ii). 
Note that if $t=0$, then it implies that we have no instances of case (i) and for
$t=d-1$ we have no instances of case (ii).

\subparagraph{The probability distribution of subproblems.}
To proceed, we need to understand the distribution of the subproblems.
This is done by the following observation. 
\begin{restatable}{observation}{obdistu}\label{ob:distu}
    Consider a $\aj = (j_1, \dots, j_{d-1})$ subproblem of a random query $\dom_{\bv_1, \dots, \bv_{d-1}}(\bq)$
    defined by random variables $\bu_1, \dots, \bu_{d-1}$.
    We can make the following observations.
    (i) the distribution of the random variable $j_i + \bell_i$ is uniform among the
    integers $j_i + 1, \dots, h+j_i$.
    (ii) With probability  $j_i/h$, $\bu_i$ will be undefined because it fails (Check I).
    (iii) If (Check I) does not fail for $\bu_i$, there is exactly 0.5 probability that $\bu_i$
    is undefined.
    (iv) For  a fixed $j_i$,  the probability distribution, $\mu_i$, of $\bu_i$ is as follows:
    with probability $1-\frac{1-j_i/h}2$, $\bu_i$ is undefined. 
    Otherwise, $\bu_i$ is a node in $T_i$ sampled in the following way:
    sample a random integer (depth) $\bell'$ uniformly among integers in $j_i+1, \dots, h$
    and select a random node uniformly among all the nodes at depth $\bell'$ that have a right sibling.
\end{restatable}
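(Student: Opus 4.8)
The plan is to derive all four parts directly from the construction of the representative diagram $\Gamma_i$, since for a fixed $i$ the node $\bv_i$ — and therefore $\bu_i$ — is a deterministic function of the single uniform point $(\bx_i,\bz_i)$ drawn in $Q_i$. Fix $i$ throughout. The one structural fact I need about $\Gamma_i$, proved by induction on the depth $j$, is: for every $v\in T_i$ at depth $j$, the region $\gamma(v)$ equals $\pi(r(v))\times(\text{the }(j{+}1)\text{-st horizontal strip of }\Gamma_i)$, where $\pi$ denotes projection onto the $X_i$-axis; all $h$ strips have height $1/h$; and the regions of the $2^j$ nodes at depth $j$ tile their strip into congruent intervals of width $2^{-j}$. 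The base case is the root, whose region is the top strip and whose box is $\Q$. For the inductive step, the cut creating the children of a depth-$m$ node runs from the midpoint of the lower edge of that node's region straight down to the bottom of $Q_i$, so it meets the $(j{+}1)$-st strip (for any $m\le j-1$) at an odd multiple of $2^{-(m+1)}$; taking the union over all ancestors shows the $(j{+}1)$-st strip is cut exactly at the multiples of $2^{-j}$ in $(0,1)$, and one checks alongside that $\gamma(v)$ keeps its $x$-range equal to $\pi(r(v))$.

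Once this is in hand the sampling of $\bv_i$ decomposes: because the $h$ strips are congruent, $\bz_i$ picks a strip — hence the depth $\bell_i$ — uniformly among the $h$ possibilities; and conditioned on $\bv_i$ having depth $\ell$, $\bx_i$ is uniform on $[0,1]$, which makes $\bv_i$ uniform among the $2^{\ell}$ depth-$\ell$ nodes and, crucially, leaves $\bx_i\mid(\bv_i=v)$ uniform on $\pi(r(v))$. Part (i) is then the assertion that $j_i+\bell_i$ is uniform over an $h$-element set, which one reads off as $\{j_i+1,\dots,h+j_i\}$. Part (ii) follows because (Check~I) fails precisely when $j_i+\bell_i>h$, i.e.\ for the top $j_i$ of those $h$ equally likely values, so with probability $j_i/h$.

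For parts (iii) and (iv) I would condition on $\bv_i=v$ with $\mathrm{depth}(v)=\ell\le h-j_i$ (so (Check~I) holds) and examine the unique depth-$(\ell+j_i)$ node $w$ whose $X_i$-slab contains $\bx_i$. Since $\bx_i$ is uniform on $\pi(r(v))$ and the $2^{j_i}$ depth-$(\ell+j_i)$ descendants of $v$ tile $\pi(r(v))$ evenly, $w$ is uniform among those $2^{j_i}$ nodes; exactly half of them are right children — then (Check~II) holds and $\bu_i$ is the left sibling of $w$ — and the other half are left children — then (Check~II) fails and $\bu_i$ is undefined. This is part (iii). For part (iv), fix the target depth $\bell'=\ell+j_i$ and uncondition on $v$: composing ``$v$ uniform among depth-$\ell$ nodes'' with ``$w$ uniform among the $2^{j_i}$ depth-$\bell'$ descendants of $v$'' makes $w$ uniform among all $2^{\bell'}$ depth-$\bell'$ nodes; conditioning on (Check~II) restricts $w$ to the right children, and replacing $w$ by its left sibling makes $\bu_i$ uniform among the depth-$\bell'$ nodes that possess a right sibling. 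Since, given (Check~I), $\bell'=\bell_i+j_i$ is uniform on $\{j_i+1,\dots,h\}$ by (i), this is exactly the description in (iv); and the total probability that $\bu_i$ is undefined is $\frac{j_i}{h}+\bigl(1-\frac{j_i}{h}\bigr)\cdot\frac12=1-\frac{1-j_i/h}{2}$.

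The content here is elementary; the only place needing care rather than ideas is the inductive verification of the structure of $\Gamma_i$ — keeping the vertical cuts of the diagram bookkept strip by strip and checking that $\gamma(v)$ never drifts out of alignment with $\pi(r(v))$ — together with the equally routine point that conditioning on which node $\bv_i$ equals does not spoil the uniformity of $\bx_i$ inside that node's slab. I do not expect a genuine obstacle: once the diagram is pinned down, (i)--(iv) are a short chain of conditional-probability computations.
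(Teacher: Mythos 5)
Your proposal is correct and takes essentially the same route as the paper's (very terse) proof: depth uniform because the strips of $\Gamma_i$ are congruent, (Check I) failing exactly when $\bell_i+j_i>h$, a probability-$\tfrac12$ (Check II) because at every non-root depth exactly half the nodes are right children, and (iv) obtained by composing these. The extra work you do --- the induction showing $\gamma(v)=\pi(r(v))\times(\text{strip})$ and the check that conditioning on $\bv_i=v$ leaves $\bx_i$ uniform on $\pi(r(v))$ --- is just a careful filling-in of what the paper asserts ``by construction,'' not a different argument.
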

\begin{proof}
  (i) follows directly from our definition: 
  first, note that each coordinate of the query point is chosen independently of other coordinates,
  and second, $\bv_i$ is sampled by placing a random point inside $T_i$ which by construction implies
  the depth $\bell_i$ of $\bv_i$ is a uniform random integer in $[h]$.
  (ii) This directly follows from (i): with probability $j_i/h$, the random variable $\bell_i$ is larger than
  $h-j_i$ which implies we fail Check I.
  (iii) We need to make two observations: one is that $j_i + \bell_i \ge 2$ at all times since
  $j_i \ge 1$ and second that 
  at any depth of $T_i$, except for the top level (i.e., the root), exactly half the nodes have
  a right sibling. 
  (iv) This is simply a consequence of parts (i-iii).
\end{proof}

\subparagraph{Partial Queries.}
Observe that w.l.o.g., we can assume that we first generate the dimensions $1$ to $t$ of the query, and then
the dimensions $t+1$ to $d-1$ of the query, and then the value $\by$. 
A partial query is one where only the dimensions $1$ to $t$ have been generated.
This is equivalent to only sampling $t$ random points $(\bx_i, \bz_i)$ for $1 \le i \le t$.
To be more specific, assume we have set $\bv_i = v_i$, for $1 \le i \le t$ where
each $v_i$ is a node in $T_i$. 
Then, the partial query is equivalent to the random query
$\dom_{v_1, \dots, v_t, \bv_{t+1}, \dots, \bv_{d-1}}(\bq)$ and in which the first $t$ coordinates of
$\bq$ are known (not random).  
Thus, we can still talk about the $\aj$-subproblem of a partial query; 
it could be that  the $\aj$-subproblem is already known to be undefined (this happens when
one of the nodes $u_i$, $1 \le i \le t$ is known to be undefined) but otherwise,
it is defined by defining nodes $u_1, \dots, u_t$ and the random variables
$\bu_{t+1}, \dots, \bu_{d-1}$; these latter random variables could later turn out to be undefined and 
thus rendering the $\aj$-subproblem of the query undefined. 

After sampling a partial query, we can then talk about \mdef{eligible} sums:
a sum $s$ is eligible if it could potentially be used to answer the $\aj$-subproblem once the
full query has been generated. 
Note that the emphasis is on answering the $\aj$-subproblem.
This means, there are multiple ways for a sum to be ineligible:
if $\aj$-subproblem is already known to be undefined then there are no eligible sums.
Otherwise, the defining nodes $u_1, \cdots, u_t$ are well-defined.
In this case, if
it is already known that $s$ is outside the query, or it is already known that $s$
cannot cover any points from the $\aj$-subproblem then $s$ becomes ineligible.
Final and the most important case of ineligibility is when 
$s$  is placed at a node $w_i$ which is a descendant of node $u_i \in T_i$ for some $1 \le i \le t$.
If this happens, even though $s$ can be potentially used to answer the $\aj$-subproblem, it can do so
from a different equivalent class, as the reader should remember that  we only consider sums that are stored
in the path that connects $u_i$ to $v_i$ for $1 \le i \le t$.
If a sum passes all these, then it is eligible.
Clearly, once the final query is generated, the set $S'_\aj$ is going to be a subset of the eligible sums.

\begin{definition}
    Given a partial query $\dom_{v_1, \dots, v_t, \bv_{t+1}, \dots, \bv_{d-1}}(\bq)$,
    and considering a fixed $\aj$-subproblem, 
    we define the potential function $\Phi_{v_1, \dots, v_t}$ to be the number of 
    eligible sums.
\end{definition}

\begin{restatable}{mlemma}{lemphi}\label{lem:phi}
  We have
  \[
      \E(\Phi_{\bv_1, \dots, \bv_t}\cdot \prod_{i=1}^{t}\frac{h2^{\bell_i +j_i}}{j_i} ) \le O(\pS(\D)).
  \]
\end{restatable}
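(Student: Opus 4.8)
The plan is to prove the bound by linearity of expectation over the sums stored by $\D$, reducing everything to an estimate inside a single tree $T_i$. Write $\Phi_{\bv_1,\dots,\bv_t}=\sum_{s}\mathbf{1}[s\text{ eligible}]$, the sum ranging over all stored sums. Then
\[
  \E\!\left(\Phi_{\bv_1,\dots,\bv_t}\cdot\prod_{i=1}^{t}\frac{h\,2^{\bell_i+j_i}}{j_i}\right)=\sum_{s}\E\!\left(\mathbf{1}[s\text{ eligible}]\cdot\prod_{i=1}^{t}\frac{h\,2^{\bell_i+j_i}}{j_i}\right).
\]
Fix a sum $s$ and, for $1\le i\le t$, let $w_i$ be the node of $T_i$ (at depth $d_i$) at which $s$ is placed. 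For $s$ to be eligible, for every $i\le t$ the following event $E_i$ must hold (see Observation~\ref{ob:highsubproblem}): (a) $w_i$ lies on the path from $\bv_i$ to $\bu_i$, i.e.\ $\bv_i$ is the unique ancestor of $w_i$ at depth $\bell_i$, $\bu_i$ is a descendant of $w_i$ at depth $\bell_i+j_i$, and in particular $\bell_i\in[d_i-j_i,\,d_i-1]$; (b) $s$ is not already known to protrude from the query, i.e.\ the right endpoint of $t_i(s)$ is at most $\bx_i$; and (c) $s$ is not already known to miss the $\aj$-subproblem, i.e.\ $t_i(s)$ meets the slab $r(\bu_i)$. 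Since the coordinates of the random query are drawn independently across dimensions and $\bell_i$ (hence the $i$-th factor) depends only on the $i$-th coordinate, the events $E_1,\dots,E_t$ together with their weights are mutually independent, and $\mathbf{1}[s\text{ eligible}]\le\prod_{i\le t}\mathbf{1}[E_i]$. It therefore suffices to show $\E\bigl(\mathbf{1}[E_i]\cdot h\,2^{\bell_i+j_i}/j_i\bigr)=O(1)$ for each $i$; multiplying the $t=O(1)$ factors and summing over the (non-singleton) sums gives $O(\pS(\D))$.

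For one fixed $i$, condition on $\bell_i=\ell$. By Observation~\ref{ob:distu}(i) this has probability $1/h$, and then $\bv_i$ is uniform over the $2^{\ell}$ nodes of depth $\ell$. The event $E_i$ is impossible unless $\ell\in[\max(0,d_i-j_i),\,\min(d_i-1,\,h-j_i)]$, an interval containing at most $j_i$ integers, and in that case $\bv_i$ must be the single ancestor of $w_i$ at depth $\ell$, contributing a factor $2^{-\ell}$. The crux is the geometric claim that, given this choice of $\bv_i$, conditions (b) and (c) confine $\bx_i$ to an interval of length $O(2^{-(\ell+j_i)})$: by definition $r(\bu_i)$ is the width-$2^{-(\ell+j_i)}$ slab immediately left of the width-$2^{-(\ell+j_i)}$ slab $r(\bu'_i)$ that contains $\bx_i$; (b) gives $q\le\bx_i$ for the right endpoint $q$ of $t_i(s)$, while (c) forces $q$ to be at least the left boundary of $r(\bu_i)$, which is $\ge\bx_i-2\cdot 2^{-(\ell+j_i)}$, so that $\bx_i\in[q,\,q+2\cdot 2^{-(\ell+j_i)}]$. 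As $\bx_i$ is uniform over $r(\bv_i)$, a slab of width $2^{-\ell}$, this event (further intersected with ``$\bu_i$ well-defined'', which only helps) has conditional probability $O(2^{-j_i})$. Hence $\Pr[E_i\mid\bell_i=\ell]=O(2^{-\ell-j_i})$ for each of the at most $j_i$ admissible $\ell$, and
\[
  \E\!\left(\mathbf{1}[E_i]\cdot\frac{h\,2^{\bell_i+j_i}}{j_i}\right)=\sum_{\ell\text{ admissible}}\frac1h\cdot O\!\left(2^{-\ell-j_i}\right)\cdot\frac{h\,2^{\ell+j_i}}{j_i}=\sum_{\ell\text{ admissible}}\frac{O(1)}{j_i}=O(1).
\]

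I expect the main obstacle to be making the geometric claim fully rigorous, in particular the boundary cases: when $q$ is within $O(2^{-(\ell+j_i)})$ of the right boundary of $r(w_i)$ the requirement that $\bu_i$ be a descendant of $w_i$ imposes an extra restriction (which only lowers the probability); the degenerate case $w_i=\bu_i$; and singletons, where $t_i(s)$ is a single point and $w_i$ a leaf, so that $\ell$ is forced to equal $h-j_i$ but the same window estimate still applies. It is also essential that eligibility be read off the \emph{un-extended} sum $s$ for conditions (b) and (c): it is exactly the ``fits in the query'' constraint (b) on $t_i(s)$ that pins $\bx_i$ down, and this would be lost if one used the extension $e(s)$, whose $X_i$-range always meets $r(\bu_i)$. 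Finally, a singleton is eligible with probability small enough that the analogous computation gives at most $\prod_{i\le t}1/j_i\le 1$ per singleton, so singletons contribute only an $O(n)$ term handled separately; alternatively one simply lets $\Phi$ count the non-singleton eligible sums.
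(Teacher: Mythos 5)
Your proof is correct and is essentially the paper's own argument: a per-sum accounting in which, for each dimension $i\le t$, the subproblem can align with the sum's placement node at only $O(j_i)$ admissible depths, and each alignment occurs with probability $O\bigl(\tfrac{1}{h2^{\ell_i+j_i}}\bigr)$, exactly cancelling the weight up to the $1/j_i$ factor — the paper packages the geometry via Observation~\ref{ob:cases} and the distribution of $\bu_i$ from Observation~\ref{ob:distu}(iv), while you re-derive the same $O(2^{-(\ell+j_i)})$ window directly for $\bx_i$, with the cross-dimension independence made explicit. Two harmless nits: the admissible depth range for $\bell_i$ is $[d_i-j_i,\,d_i]$ (the case $w_i=\bv_i$ is permitted by case~(i) of Observation~\ref{ob:highsubproblem}), i.e.\ $j_i+1$ values rather than $j_i$, which changes nothing asymptotically; and your choice of letting $\Phi$ count only non-singleton eligible sums is indeed the reading that the stated bound $O(\pS(\D))$ requires.
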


To prove the above lemma, we need the following definitions and observations. 
\begin{definition}
  Consider the $\aj$-subproblem for a partial query 
  $\dom_{v_1, \dots, v_t, \bv_{t+1}, \dots, \bv_{d-1}}(\bq)$ together with corresponding nodes
  $u_1, \dots, u_t$. 
  In the representative diagram $\Gamma_i$, the \mdef{Type I region} of $u_i$ is defined as 
  a rectangular region whose bottom and left boundary are the same
  the bottom and the left boundary of $\gamma(u_i)$, its right boundary is the right boundary of
  $\gamma(u'_i)$, and its top boundary is the top boundary of $\gamma(v)$.
  We denote this region by $\tI(u_i)$. See Figure~\ref{fig:typeI}.
\end{definition}
\begin{figure}[h]
    \centering
    \includegraphics[scale=0.75]{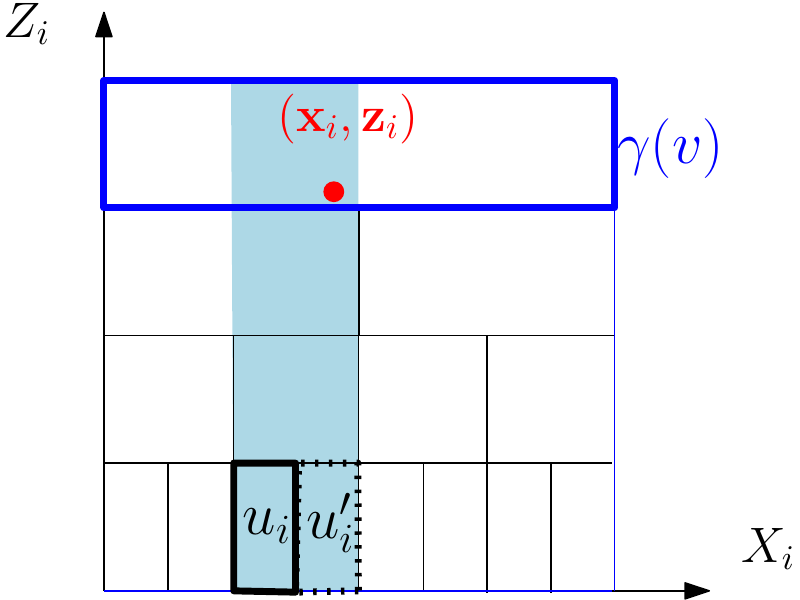}
    \caption{The type I region of $u_i$ is highlited.}
    \label{fig:typeI}
\end{figure}
\begin{restatable}{observation}{obcase}\label{ob:cases}
  Consider a partial query 
  $\dom_{v_1, \dots, v_t, \bv_{t+1}, \dots, \bv_{d-1}}(\bq)$ and assume the nodes 
  $u_1, \dots, u_t$ that correspond to the $\aj$-subproblem of the query exist. 
  A necessary condition for a sum $s$  to be eligible is that 
  $\dts_i(s)$ must lie inside $\tI(u_i)$ for $1\le i \le t$.
\end{restatable}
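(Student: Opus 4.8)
The plan is to unpack the word ``eligible'' into three separate necessary conditions on $s$, one arising from each of the three ways the definition allows a sum to be ineligible, and then to check the point $\dts_i(s)$ against the four sides of $\tI(u_i)$ one at a time. Fix $i$ with $1\le i\le t$ and let $w_i$ be the node of $T_i$ at which $s$ is stored. Since we are considering only sums of the fixed equivalence class, $s$ must fall in case (i) of Observation~\ref{ob:highsubproblem} in $T_i$; combining this with Observation~\ref{ob:highsubtree} (which forbids nodes that are ancestors of $v_i$) and with the fact that a sum stored in the subtree of $u_i$ belongs to the case-(ii) class rather than ours, we conclude that $w_i$ lies on the path of $T_i$ from $v_i$ to $u_i$, so its depth is at least that of $v_i$ and at most that of $u_i$. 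Next, since at the partial query it is not already known that $s$ lies outside the query, the $X_i$-projection of $s$ is contained in the $X_i$-range $[x'_i,x_i]$ of the query; writing $\rho_s$ for the largest $X_i$-coordinate occurring in $s$ --- which is exactly the horizontal coordinate of $\dts_i(s)$ --- we get $\rho_s\le x_i$. Finally, since it is not already known that $s$ covers no point of the $\aj$-subproblem, the $X_i$-projection of $s$ must meet the $X_i$-range of $r(u_i)$, whose left endpoint we call $a_i$; this gives $\rho_s\ge a_i$.

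It then remains to compare $\dts_i(s)$ with $\tI(u_i)$. Its horizontal extent runs from the left boundary of $\gamma(u_i)$, which equals $a_i$, to the right boundary of $\gamma(u'_i)$, which equals the right boundary of $r(u'_i)$; call the latter $c_i$. By the definition of the defining node $u_i$, the query point $q$ lies in $r(u'_i)$, so $x_i\le c_i$, and hence $a_i\le\rho_s\le x_i\le c_i$: the horizontal coordinate of $\dts_i(s)$ lies in the horizontal extent of $\tI(u_i)$. For the vertical coordinate, note that $\dts_i(s)$ is an endpoint of the marker $t_i(s)$, which is drawn inside $\gamma(w_i)$; since the depth of $w_i$ lies between the depth of $v_i$ and the depth of $u_i$, the horizontal strip of $\Gamma_i$ holding $\gamma(w_i)$ lies between the strip of $\gamma(v_i)$ (the top side of $\tI(u_i)$) and the strip of $\gamma(u_i)$ (the bottom side of $\tI(u_i)$). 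Thus $\dts_i(s)\in\tI(u_i)$, and as $1\le i\le t$ was arbitrary the observation follows.

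I expect the only delicate part to be the bookkeeping in the first paragraph: one has to be sure that ``eligible'' genuinely forces $w_i$ onto the path from $v_i$ to $u_i$ (this really does need both Observation~\ref{ob:highsubtree} and the restriction to the fixed equivalence class), and that the two ``already known'' exclusions translate into precisely $\rho_s\le x_i$ and $\rho_s\ge a_i$, and not into something weaker that would fail to pin $\dts_i(s)$ between $a_i$ and $c_i$. Once those three consequences of eligibility are isolated, matching them to the four boundaries of $\tI(u_i)$ --- two of them (left and right) horizontal, two (top and bottom) vertical --- is entirely routine.
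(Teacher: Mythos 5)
Your proposal is correct and takes essentially the same route as the paper: the paper proves the contrapositive by a four-case check on where $\dts_i(s)$ could fall relative to $\tI(u_i)$, and those four cases correspond exactly to your bounds $a_i \le \rho_s$ (else $s$ misses the subproblem), $\rho_s \le x_i \le c_i$ via Check II (else $s$ sticks out of the query), and the storing node's depth lying between those of $v_i$ and $u_i$ (via Observation~\ref{ob:highsubtree} and the eligibility/equivalence-class restriction). The only difference is presentational: you phrase it as a direct derivation of coordinate inequalities rather than a case analysis, which changes nothing of substance.
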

\begin{proof}
  As $u_i$ is a defined node in $T_i$, it means that
  we can identify the node $u'_i$, the sibling of $u_i$, and the
  node $v_i$, the node at depth $\ell_i'-j_i$ that is the ancestor of $u_i$. 
  If $\dts_i(s)$ is not inside $\tI(u_i)$, then we have a few cases:
  \begin{itemize}
    \item $\dts_i(s)$ is to the left of the left boundary of $\tI(u_i)$: 
      well in this case, $s$ cannot contain
      any point from the points in the subtree of $u_i$ so clearly it cannot be used to answer the $\aj$-subproblem.
    \item $\dts_i(s)$ is to the right of the right boundary of $\tI(u_i)$: 
      Observe that $u_i$ satisfies Check II, which means the $i$-th coordinates of the query
      is within the $i$-th coordinates of $u'_i$. Thus, in this case, it follows that
      $\dts_i(s)$ is outside the query region and so cannot be used to answer the query.
    \item $\dts_i(s)$ is below the lower boundary of $\tI(u_i)$: 
      This violates the assumption that $s$ is an eligible sum. In particular, this implies that $s$ is stored
      at the subtree of $u_i$ in $T_i$. 
    \item $\dts_i(s)$ is above the top boundary of $\tI(u_i)$: 
      This violates Observation~\ref{ob:highsubtree} as it implies $s$ is stored at a node which is
      not in the subtree of  $v_i$. 
  \end{itemize}
\end{proof}

We now return to the proof of the Lemma~\ref{lem:phi}.
\lemphi*
\begin{proof}
    If the query does not have a $\aj$-subproblem then the potential is zero and thus there is nothing
    left to prove.
    So in the rest of the proof, we will  assume $\aj$-subproblem is defined.

  Consider an eligible sum $s$ and assume $s$ has been placed at nodes $w_i$ of $T_i$, for $1\le i \le t$. 
  Let $a_i$ be the depth of $w_i$. 
  We now focus on the distribution of the random variables $\bu_i$, instead of $\bv_i$ 
  using Observation~\ref{ob:cases}:
  for $s$ to be eligible, it is necessary that 
  $\bu_i$ is selected to be a node $u_i$ with depth $\ell'$ such that $a_i \le \ell' \le a_i + j_i$ as otherwise, 
  $\dts_i(s)$ will either be below or above $\tI(u_i)$. 
  Furthermore, by Observation~\ref{ob:cases}, it follows that for every depth $\ell'_i$ such that
  $a_i \le \ell'_i \le a_i + j_i$, there exists exactly one node $u_i$ of depth $\ell'_i$ 
  for which it holds that $\dts_i(s)$ is inside $\tI(u_i)$. 
  Consider $t$ nodes $u_i$, $1 \le i \le t$ such that
  $a_i \le \ell'_i \le a_i + j_i$.
  By Observation~\ref{ob:distu}, the probability that $\bu_i = u_i$ for every $1 \le i \le t$ is 
  $O(\prod_{i=1}^t\frac{1}{h 2^{\ell'_i}})$. 
  Note that the event $\bu_i = u_i$ also uniquely determines the nodes $v_i$, $1 \le i \le t$.
  Furthermore, in this case, the depth of the node $v_i$ is $\ell_i = \ell'_i-j_i$ which means the
  contribution of $s$ to the expected value claimed in the lemma is
  \[
      O(\prod_{i=1}^t\frac{1}{h 2^{\ell'_i}})\cdot \prod_{i=1}^t\frac{h2^{\ell_i +j_i}}{j_i} = O( \prod_{i=1}^t \frac{1}{j_i}).
  \]
  Summing this over all the choices of $a_i \le \ell'_i \le a_i + j_i$ yields that the contribution of
  $s$ to the expected value is $O(1)$.
  Summing this over all sums $s$ yields the lemma. 
\end{proof}

By the above lemma, we except only few eligible sums for a random partial query. 
Let $\bad_1$ be the ``bad'' event that the nodes $\bv_1, \dots, \bv_t$ are sampled to be nodes
$v_1, \dots, v_t$ such that 
$\Phi_{v_1, \dots, v_t}\cdot \prod_{i=1}^{t}\frac{h2^{\ell_i +j_i}}{j_i}  >  |\pS(\D)/\varepsilon|$.
By Markov's inequality and Lemma~\ref{lem:phi}, $\Pr[\bad_1] = O(\varepsilon)$.

Now, fix $\bv_1 = v_1, \dots, \bv_t = v_t$.
In the rest of the proof we will assume these values are fixed and we are going to generate the rest of the query. 
Next, we  define another potential function. 

\begin{definition}
  The potential $\Psi_{w_{t+1}, \dots, w_{d-1}}$ for
    $w_{t+1} \in T_{t+1}, \dots, w_{d-1}\in T_{d-1}$, where the depth of $w_i$ in $T_i$ is $d_i$ is defined as follows.
    First define $\#_{x_{t+1}, \dots, x_{d-1}}$ for nodes $x_i \in T_i$ to  be the number of
    eligible sums $s$ such that $s$ is placed at  $x_{i}$ for $t+1 \le i \le d-1$.
    Given the nodes $w_{t+1}, \cdots, w_{d-1}$, and
    for non-negative integers $k_{t+1}, \dots, k_{d-1}$, we define $\#_{k_{t+1}, \dots, k_{d-1}}$  
    as the sum of all $\#_{w'_{t+1}, \dots, w'_{d-1}}$  over all nodes $w'_i$ where $w'_i$ has depth
    $d_i + k_i$ in $T_i$ and $w'_i$ is a descendant of $w_i$. 
    We define the potential function as follows.
\[
    \Psi_{w_{t+1}, \dots, w_{d-1}} = \sum_{k_{t+1}=0}^\infty \dots  \sum_{k_{d-1}=0}^\infty \frac{\#_{k_{t+1}, \dots, k_{d-1}}}{2^{k_{t+1}+ \dots + k_{d-1}}}.
\]
\end{definition}
\begin{restatable}{mlemma}{lempsi}\label{lem:psi}
    Having fixed the nodes $v_1, \dots, v_t$, we have,
    \[
        \E[\Psi_{\bu_{t+1}, \dots, \bu_{d-1}} \cdot \prod_{i=t+1}^{d-1}(h 2^{\bell_{i} + j_{i}})] = O(\Phi_{v_1, \dots, v_t})
    \]
    where $\bell_i$ is the depth of $\bv_i$, $\bu_i$ is the defining node of the $\aj$-subproblem, the expectation is taken over the random choices of $\bv_i$, $t+1\le i \le d-1$
    and the potential is defined to be zero if any of the nodes $\bu_i$ is undefined. 
\end{restatable}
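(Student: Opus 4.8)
Looking at Lemma~\ref{lem:psi}, the plan is to mirror the proof of Lemma~\ref{lem:phi}, but now for the ``case (ii)'' dimensions $t+1, \dots, d-1$, where each eligible sum is stored \emph{inside} the subtree of the defining node $\bu_i$ rather than on the path from $\bv_i$ to $\bu_i$. The potential $\Psi$ is designed precisely so that the geometrically decaying weights $2^{-(k_{t+1}+\dots+k_{d-1})}$ absorb the fact that a sum stored deep inside the subtree of $\bu_i$ is less and less likely to be relevant; the product $\prod_{i}(h2^{\bell_i+j_i})$ is the normalization that (as in Lemma~\ref{lem:phi}) converts the tiny probability that $\bu_i$ takes a specific value into an $O(1)$ contribution. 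So the proof is a direct computation of $\E[\Psi_{\bu_{t+1},\dots,\bu_{d-1}}\cdot\prod_{i=t+1}^{d-1}(h2^{\bell_i+j_i})]$ by linearity over the eligible sums.

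\textbf{The main steps.} First, if the $\aj$-subproblem is undefined (some $\bu_i$ fails a check) the potential is zero by definition, so we may condition on all $\bu_i$, $t+1\le i\le d-1$, being defined nodes; note the $\bv_1=v_1,\dots,\bv_t=v_t$ are already fixed, so the count of eligible sums $\Phi_{v_1,\dots,v_t}$ is a fixed quantity and $\bu_i$ depends only on $\bv_i$ for $i>t$, with the three distributions mutually independent (by the remark following the definition of the defining nodes, and Observation~\ref{ob:distu}). Second, fix an eligible sum $s$, placed at node $w_i\in T_i$ of depth $a_i$ for each $t+1\le i\le d-1$. By Observation~\ref{ob:highsubproblem}(ii), for $s$ to actually contribute to $\Psi_{\bu_{t+1},\dots,\bu_{d-1}}$ the node $\bu_i$ must be an \emph{ancestor} of $w_i$ (including possibly $w_i$ itself): only then is $w_i$ in the subtree of $\bu_i$ so that $s$ is counted in some $\#_{k_{t+1},\dots,k_{d-1}}$ with $k_i=a_i-\mathrm{depth}(\bu_i)\ge 0$. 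Third, for a fixed candidate ancestor $u_i$ of $w_i$ at depth $\ell'_i\le a_i$: by Observation~\ref{ob:distu}(iv) the event $\bu_i=u_i$ has probability $O\!\big(\frac{1}{h2^{\ell'_i}}\big)$ (for $\ell'_i$ in the admissible range $j_i+1\le \ell'_i\le h$; outside this range the probability is $0$, which only helps), and on this event $\bell_i=\ell'_i-j_i$. Hence the contribution of $s$ through this particular choice of ancestors $(u_{t+1},\dots,u_{d-1})$ to the claimed expectation is
\[
  \prod_{i=t+1}^{d-1}\Big(O\big(\tfrac{1}{h2^{\ell'_i}}\big)\cdot \tfrac{1}{2^{a_i-\ell'_i}}\cdot h2^{(\ell'_i-j_i)+j_i}\Big)
  \;=\; \prod_{i=t+1}^{d-1} O\big(2^{-a_i}\big),
\]
where the $\frac{1}{2^{a_i-\ell'_i}}$ factor is the weight $s$ carries inside $\Psi$ when $\bu_i=u_i$. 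Crucially the $2^{\ell'_i}$ from the probability cancels the $2^{\ell'_i}$ from the normalization, so the exponent of $\ell'_i$ vanishes — this is exactly the cancellation that makes the weighting in the definition of $\Psi$ the ``right'' one. Fourth, sum over all choices of ancestors: for each $i$ there are at most $a_i+1$ admissible ancestors $u_i$ of $w_i$, but the per-choice contribution does not depend on $\ell'_i$ any more, so summing over the (at most $a_i+1 = O(h)$) choices of each ancestor gives $\prod_{i=t+1}^{d-1}O(h\cdot 2^{-a_i})$ — and actually we must be a little more careful: the weight-in-$\Psi$ factor $2^{-(a_i-\ell'_i)}$ \emph{does} depend on $\ell'_i$, so the sum over $\ell'_i$ of $2^{-\ell'_i}\cdot 2^{\ell'_i}\cdot 2^{-(a_i-\ell'_i)} = 2^{-(a_i-\ell'_i)}$ is a geometric series in $\ell'_i$ summing to $O(1)$, giving total contribution of $s$ equal to $\prod_{i=t+1}^{d-1}O(1) = O(1)$. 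Finally, summing the $O(1)$ contribution over all eligible sums — of which there are exactly $\Phi_{v_1,\dots,v_t}$ — yields $\E[\Psi_{\bu_{t+1},\dots,\bu_{d-1}}\cdot\prod_{i=t+1}^{d-1}(h2^{\bell_i+j_i})] = O(\Phi_{v_1,\dots,v_t})$, as claimed.

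\textbf{Where the difficulty lies.} The routine part is the geometric-series bookkeeping above; the part that needs genuine care is justifying that an eligible sum $s$ placed at $w_i$ contributes to $\Psi_{\bu_{t+1},\dots,\bu_{d-1}}$ \emph{only} when $\bu_i$ is an ancestor of $w_i$ on the relevant range of depths, and contributes there with exactly the weight $2^{-\sum_i(a_i - \mathrm{depth}(\bu_i))}$. This requires unwinding the nested definition of $\#_{k_{t+1},\dots,k_{d-1}}$ (the inner $\#_{x_{t+1},\dots,x_{d-1}}$ counts sums placed \emph{exactly} at the $x_i$'s, and the outer one sums over descendants at depth $d_i+k_i$ of the given $w_i$'s), together with the eligibility constraints from Observation~\ref{ob:cases} and Observation~\ref{ob:highsubproblem} — in particular that in the case-(ii) dimensions an eligible sum is genuinely stored in the subtree of $\bu_i$, not merely in the subtree of $\bv_i$, so that the depths line up. The other subtlety is merely to note that sums which are ineligible, or for which $\bu_i$ fails to be an ancestor of $w_i$, contribute $0$ and can be dropped; and that restricting the admissible depth range of $\bu_i$ (Check I / Check II via Observation~\ref{ob:distu}(iv)) only removes terms, so the upper bound is unaffected.
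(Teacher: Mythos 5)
Your proposal is correct and is essentially the paper's own argument: the normalization $h2^{\bell_i+j_i}$ cancels the $O\bigl(\frac{1}{h2^{\ell'_i}}\bigr)$ probability that $\bu_i$ equals a given ancestor of $w_i$ (Observation~\ref{ob:distu}), and the remaining $\Psi$-weights $2^{-(a_i-\ell'_i)}$ form geometric series summing to $O(1)$ per eligible sum, which is exactly the computation the paper packages as a global ``dispatching'' of potential over tuples rather than per-sum linearity. The only blemish is your displayed product, which should evaluate to $\prod_{i=t+1}^{d-1}2^{-(a_i-\ell'_i)}$ rather than $\prod_{i}O(2^{-a_i})$ --- a slip you yourself identify and correct in the sentence that follows it.
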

\begin{proof}
    We consider the definition of the potential function $\Psi$.
    We observe that we can look at this potential function from a different angle.
    This potential is defined on the tuples of vertices.
    We first initialize $\Psi_{w_{t+1}, \dots, w_{d-1}}$ to $\#_{w_{t+1}, \dots, w_{d-1}}$ for every
    $w_i \in T_i$, $t+1 \le i \le d-1$.
    Then, every tuple ``dispatches'' some potential to some other tuples in the following way:
    the tuple $(w'_{t+1}, \dots, w'_{d-1})$ dispatches $\frac{\#_{w'_{t+1}, \dots, w'_{d-1}}}{2^{k_{t+1}+ \dots + k_{d-1}}}$
    potential to the tuple $(w_{t+1}, \dots, w_{d-1})$ in which $w_i$ is the ancestor of $w'_i$ in $T_i$
    that is placed $k_i$ levels higher than $w'_i$. 
    This is done for all integers $0 \le k_i$, for $t+1 \le i \le d-1$ and it is clear that by rearrnging the terms
    in the sum, it gives the same sum that was used to define the $\Psi$ potential.

    Observe that total amount of potential dispatched from a tuple $(w_{t+1}, \dots, w_{d-1})$
    is 
    \[
        \sum_{k_{t+1}=0}^\infty \dots  \sum_{k_{d-1}=0}^\infty \frac{\#_{w_{t+1}, \dots, w_{d-1}}}{2^{k_{t+1}+ \dots + k_{d-1}}} =\#_{w_{t+1}, \dots, w_{d-1}} \sum_{k_{t+1}=0}^\infty \dots  \sum_{k_{d-1}=0}^\infty \frac{1}{2^{k_{t+1}+ \dots + k_{d-1}}} = O(\#_{w_{t+1}, \dots, w_{d-1}}).
   \]
   Thus, the total amount of $\Psi$ potential is bounded by
   \[
        \sum_{w_{t+1} \in T_{t+1}} \dots \sum_{w_{d-1} \in T_{d-1}} \Psi_{w_{t+1}, \dots, w_{d-1}} = \sum_{w_{t+1} \in T_{t+1}} \dots \sum_{w_{d-1} \in T_{d-1}} O(\#_{w_{t+1}, \dots, w_{d-1}})  = O( \Phi_{v_1, \dots, v_t})
    \]
    where the last step follows from the definition of $\Phi$ potential as it counts all the eligible sums.

   Or in other words, the total amount of $\Psi$ potential is no more than the $\Phi$ potential.
   However, remember that the vertices $v_{t+1}, \dots, v_{d-1}$ are not sampled uniformly.
   Thus, to evaluate the expected value claimed in the lemma, we need to consider the exact
   distribution of the random variables $\bu_{t+1}, \dots, \bu_{d-1}$.
   We use Observation~\ref{ob:distu}.
   Define $\bell'_i = \bell_i + j_i$ and $\ell'_i = \ell_i + j_i$. 
   Thus, 
   \begin{align*}
     \E[\Psi_{\bu_{t+1}, \dots, \bu_{d-1}} \cdot (h 2^{\bell'_{t+1}}) \cdot \dots (h 2^{\bell'_{d-1}})] &= \\
     \sum_{u_{t+1} \in T_{t+1}} \dots \sum_{u_{d-1}\in T_{d-1}} \left(\Psi_{u_{t+1}, \dots, u_{d-1}} \cdot (h 2^{\ell'_{t+1}}) \cdot \dots (h 2^{\ell'_{d-1}})\right)\cdot \frac{1}{h 2^{\ell'_{t+1}}}\dots  \frac{1}{h 2^{\ell'_{d-1}}} & =  O(\Phi_{v_1, \dots, v_t}).
    \end{align*}
\end{proof}

Now we define the second bad event $\bad_2$ to be the event that
$\Psi_{u_{t+1}, \dots, u_{d-1}} \cdot (h 2^{\ell_{t+1} + j_{t+1}}) \cdot \dots (h 2^{\ell_{d-1} + j_{d-1}}) \ge  \Phi_{v_1, \dots, v_t}/\varepsilon$.
By Markov's inequality and Lemma~\ref{lem:psi},  $\Pr[\bad_2] = O(\varepsilon)$.

\subsection{Proof of the main lemma.} 
We now prove  our main lemma (Lemma~\ref{lem:mainlb} at page~\pageref{lem:mainlb}).
We restate it for convenience.
\mainlb*

Remember that we will focus on one equivalent class $\bS'_\aj$ of $\bS_\aj$.
Observe that the summation 
$\sum_{s \in \bS'_\aj} |e(s) \cap \top(\aj,\lambda)|$ 
counts how many times a point in $\top(\aj,\lambda)$ is covered by extensions of sums that cover at least $C$ points of the
$\top(\aj,\lambda)$ and this only takes into account the random choices of $\by$ as the nodes $v_1, \cdots, v_{d-1}$ have been fixed. 
As a result, $\bS'_\aj$ is a random variable that only depends on $\by$.
To make this clear, let $\S_\aj$ be the set that includes all the sums that can be part of $\bS'_\aj$ over all the
random choices of $\by$.
As a result, $\bS'_\aj$ is a random subset of $\S_\aj$. 
Observe that every sum $s \in \S_\aj$ has the property that it is stored in some node on the path from $u_i$ to $v_i$
for $1 \le i \le t$ and at the subtree of $u_i$ for $ t+1 \le i \le d-1$. 
Since $\top(\aj,\lambda)$ has exactly, $\lambda$ points, we can label them from one to $\lambda$
under some global ordering of the points (e.g., lexicographical ordering). 
Thus, let $f(g)$ be the $x$-th point in $\top(\aj,\lambda)$, $1 \le g \le \lambda$.
Also, let $m(g)$ be the number of sums $s \in \bS'_\aj$  s.t., $e(s)$ contains $f(g)$.
Then, we can do the following rewriting:
\[
  \sum_{s \in \bS'_\aj} |e(s) \cap \top(\aj,\lambda)| = \sum_{g=1}^\lambda m(g).
\] 
By linearity of expectation,
\begin{align}
  \E[\sum_{s \in \bS'_\aj} |e(s) \cap \top(\aj,\lambda)|]  = \sum_{g=1}^\lambda \E[m(g)] =\sum_{s \in \S_\aj} \sum_{g=1}^\lambda \Pr[\mbox{$s$ covers $f(g)$}, s \in \bS'_\aj] .\label{eq:target}
\end{align}

In the rest of the proof, we bound the right hand side of Eq.~\ref{eq:target} and
note that the probability is over the choices of $\by$.
Consider a particular outcome of our random trials in which the random variable
$\bv_i$ has been set to node $v_i$, for $1 \le i \le d-1$ in which 
none of the bad events $\bad_1$ and $\bad_2$ have happened. 
Set the parameter $\varepsilon$ used in the definition of these bad events to
$\varepsilon = \sqrt{\delta/\varepsilon_d}$.
Thus, none of the bad events happen with probability at least $1-O(\sqrt{\delta/\varepsilon_d})$,
conditioned on the
event that the $\aj$-subproblem of the query is defined. 
Note that can we assume the random variable $\by$ has not been assigned yet. 
This is a valid assumption since the subproblem of a query only depend on the selection of the nodes
$v_1, \dots, v_{d-1}$ and not on the $Y$-coordinate of the query. 

As  $\bad_1$ has not occurred, we have
$\Phi_{v_1, \dots, v_t}\cdot \prod_{i=1}^{t}\frac{h2^{\ell_i +j_i}}{j_i}  \le |S(\D)/\varepsilon|$.
As $\bad_2$ has not occurred either, we know that
$\Psi_{u_{t+1}, \dots, u_{d-1}} \cdot \prod_{i=t+1}^{d-1}(h 2^{\ell_{i} + j_{i}}) <  \Phi_{v_1, \dots, v_t}/\varepsilon$.
Together, they imply
\begin{align}
    \Psi_{v_{t+1}, \dots, v_{d-1}} < \frac{\Phi_{v_1, \dots, v_t}}{\varepsilon \prod_{i=t+1}^{d-1}(h 2^{\ell_{i} + j_{i}}) } \le \frac{|\pS(\D)|}{\varepsilon \prod_{i=1}^{t}\frac{h2^{\ell_i +j_i}}{j_i}}.
    \frac{1}{\varepsilon \prod_{i=t+1}^{d-1}(h 2^{\ell_{i} + j_{i}}) } = \frac{|\pS(\D)|\prod_{i=1}^{t}j_i}{\varepsilon^2 h^{d-1}\prod_{i=1}^{d-1}2^{j_i + \ell_i}}.\label{eq:psib}
\end{align}

\subparagraph{The experiment.}
To bound the sum at the Eq.~\ref{eq:target}, we will use the above inequality combined with the following experiment. 
We select a random point $\bp$ from $\top(\aj,\lambda)$ by sampling an integer $\bg \in [1, \cdots, \lambda]$ and considering
$f(\bg)$. 
We compute the probability that $f(\bg)$ can be covered by the
extension of a sum in $\bS'_{\aj}$ where the probability is computed over the choices of 
$\bg$ and the $Y$-coordinate of the query $\by$. 

We now look at the side lengths of the box $\top(\aj,\lambda)$.
The $i$-th side length of $\lambda$-top box is $\frac{1}{2^{\ell_i + j_i}}$ for
$1 \le i \le d-1$; this is because the $\aj$-subproblem was defined by nodes $u_i$ where $u_i$ 
has depth $\ell_i + j_i$.
Let $\beta$ be the side length of $\top(\aj,\lambda)$ along the $Y$-axis. 
As $\beta$ is chosen such that $\top(\aj,\lambda)$ contains
$\lambda$ points and the pointset 
well distributed, the volume of 
$\lambda$-top box  is $\Theta(\lambda/n)$.
This implies, it suffices to pick $\beta = \Theta(\frac{\lambda}{n} \prod_{i=1}^{d-1}2^{\ell_i + j_i})$.
Now remember that the $Y$-coordinate of the top boundary of the $\lambda$-top box is $y$ and the
$Y$-coordinate of its lower boundary is $y-\beta$. 

Consider a sum $s \in \S_\aj$.
Now consider the smallest box enclosing $e(s)$; w.l.o.g., we use the notation $e(s)$ to refer to this box.
For $t+1 \le i \le d-1$, the $i$-th side length of $e(s)$ is $2^{-\ell_i - j_i -\zeta_i(s)}$ because $s$ was placed at node $w_i \in T_i$
which is below $u_i$ and thus our extensions extends the $i$-dimension of the box to match that of $w_i$.
However, for $1 \le i \le t$, the $i$-th side length of $e(s)$ is $2^{-\ell_i - j_i}$.
\ignore{
Clearly, for $s$ to be in $S'_\aj$, $e(s)$ must cover $C$ points from inside $\top(\aj,\lambda)$ which requires that
$e(s)$ must intersect $\top(\aj,\gamma)$ which consequently requires that the intervals obtained by projecting $\top(\aj,\lambda)$ and $e(s)$ 
along any dimension intersect;
we divide these requirements into two groups: (a) that $Y$-coordinates of $\top(\aj,\lambda)$ and $e(s)$ intersect 
and (b) the remaining coordinates of $\top(\aj,\lambda)$ and $e(s)$ also intersect. 
For requirement (a) it is clear that we must have $y-\beta \le y_b \le y$ as otherwise we have a problem:
if $y_b < y-\beta$, then
$s$ entirely below the $\top(\aj,\gamma)$ and thus it cannot cover any points
inside $\top(\aj,\gamma)$ and if $y_b > y$, then $s$ contains at least one point from outside the query and thus
it cannot be used to cover any point in the query. 
Over random choices of the random variable $\by$, the probability of $\by-\beta \le y_b \le \by$
is at most $\beta$.
Regarding requirement (b), we look at the volumes of boxes $e(s)$ and $\top(\aj,\lambda)$.
}
We have
\begin{align}
    \vol(e(s) \cap \top(\aj,\lambda)) \le  \beta \prod_{i=1}^{t} 2^{-\ell_i - j_i } \prod_{i=t+1}^{d-1} 2^{-\ell_i - j_i -\zeta_i(s)} = \Theta(\frac{\lambda}{n} \prod_{i=t+1}^{d-1}2^{-\zeta_i(s)}).
    \label{eq:voltop}
\end{align}

Observe that we have assumed $s$ covers at least $C$ points inside $\top(\aj, \lambda)$.
However, our point set is well-distributed which implies the number of points covered by $s$ is at most
$\frac{n}{\varepsilon_d} \vol(e(s) \cap \top(\aj,\lambda))$ which by Eq.~\ref{eq:voltop}
is bounded by
$O(\frac{\lambda}{\varepsilon_d} \prod_{i=t+1}^{d-1}2^{-\zeta_i(s)})$.
We are picking the point $f(\bg)$ randomly among the $\lambda$ points inside the $\top(\aj,\lambda)$ which implies the probability that
$f(\bg)$ gets covered is at most
\begin{align}
  O(\frac{1}{\varepsilon_d} \prod_{i=t+1}^{d-1}2^{-\zeta_i(s)}).\label{eq:probc}
\end{align}
Note that above inequality is only with respect to the {\em random choices of $\bg$}
and ignores the probability of $s \in \bS'_\aj$.
However, the only necessary condition for a sum $s \in \S_\aj$ to  be in $\bS'_\aj$ is that its
$Y$-coordinate falls within the top and bottom boundaries of $\top(\aj,\lambda)$ along the $Y$-axis. 
The probability of this event is at most $\beta$ by construction.
As this probability is indepdenent of choice of $\bp$, we have 
\begin{align}
  \Pr[\mbox{$s$ covers $f(\bg)$}, s \in \bS'_\aj] = O(\frac{\beta}{\varepsilon_d} \prod_{i=t+1}^{d-1}2^{-\zeta_i(s)}).\label{eq:probcover}
\end{align}

Now we consider the definition of the potential function $\Psi$ to realize that we have
\begin{align}
    \sum_{k_{t+1}=0}^\infty \dots  \sum_{k_{d-1}=0}^\infty \frac{\#_{k_{t+1}, \dots, k_{d-1}}}{2^{k_{t+1}+ \dots + k_{d-1}}} = \Psi_{v_{t+1}, \dots, v_{d-1}} = \sum_{s\in S'_\aj} \prod_{i=t+1}^{d-1} 2^{-\zeta_i(s)}.\label{eq:psieq}
\end{align}
The left hand side is the definition of the potential function $\Psi$ where as the right hand side counts exactly the same concept:
a sum $s$ placed at depth $\ell_i + j_i + \zeta_i(s)$ of $T_i$ and at a descendant of $v_i$, for $t+1 \le i \le d-1$, contributes exactly $\prod_{i=t+1}^{d-1} 2^{-\zeta_i(s)}$
to the potential $\Psi$. 

Remember that $m(\bp)$ is the number of sums that cover a random point $\bp$ selected uniformly among the points
inside $\top(\aj,\lambda)$.
We have 
\begin{align*}
     & \sum_{s\in \S_\aj} \Pr[\mbox{$s$ covers $f(\bg)$}, s \in \bS'_\aj] = \sum_{s\in \S_\aj} O\left(\frac{\beta \prod_{i=t+1}^{d-1}2^{-\zeta_i(s)}}{\varepsilon_d}\right) =  && \eqnote{from Eq.~\ref{eq:probcover}} \\
    & O\left( \frac{\beta \Psi_{v_{t+1}, \dots, v_{d-1}}}{\varepsilon_d} \right) =  O\left( \frac{\beta }{\varepsilon_d}\cdot  \frac{|\pS(\D)|\prod_{i=1}^{t}j_i}{\varepsilon^2 h^{d-1}\prod_{i=1}^{d-1}2^{j_i + \ell_i}} \right)  = && \eqnote{from Eq.~\ref{eq:psieq} and Eq.~\ref{eq:psib}} \\
& O\left( \frac{\frac{\lambda}{n} \prod_{i=1}^{d-1}2^{\ell_i + j_i} }{\varepsilon_d}\cdot  \frac{|\pS(\D)|\prod_{i=0}^{t}j_i}{\varepsilon^2 h^{d-1}\prod_{i=1}^{d-1}2^{j_i + \ell_i}} \right) =  && \eqnote{from definition of $\beta$} \\
&O\left( \frac{\lambda}{n\varepsilon_d}\cdot  \frac{|\pS(\D)|\prod_{i=0}^{t}j_i}{\varepsilon^2 h^{d-1}} \right)  = O\left( \frac{\frac{\delta h^{d-1}}{j_1 j_2 \dots j_{d-1}}\cdot \frac{n}{\pS(\A)}}{n\varepsilon_d}\cdot  \frac{|\pS(\D)|\prod_{i=0}^{t}j_i}{\varepsilon^2 h^{d-1}} \right) =  && \eqnote{from the definition of $\lambda$} \\
& O\left( \frac{\delta}{\varepsilon_d \varepsilon^2}\right) < \frac{1}{2^d C'}. && \eqnote{from simplification and picking $\delta=O(\varepsilon_d \varepsilon^2C'^{-1}2^{-d})$ small enough}
\end{align*}
Observe that 
$\Pr[\mbox{$s$ covers $f(\bg)$}, s \in \bS'_\aj] = \frac{1}{\lambda}\sum_{g=1}^\lambda \Pr[\mbox{$s$ covers $f(g)$}, s \in \bS'_\aj]$.
Now our Main Lemma follows from plugging this  in Eq.~\ref{eq:target}.
\subsection{The Lower Bound Proof}\label{sec:lbproof}

Our proof strategy is to use Lemma~\ref{lem:mainlb} to show that the query algorithm
is forced to use a lot of sums that only cover a constant number of points inside the query,
leading to a large query time. 
\begin{theorem}
    Let $P$ be a well-distributed point set containing $\Theta(n)$ points
    in $\R^d$.
  Answering semigroup queries on $P$ using $\pS(n)$ storage and with
  $Q(n)$ query bound requires that 
  $\pS(n)\cdot Q(n) = \Omega(n (\log n \log\log n)^{d-1})$.
\end{theorem}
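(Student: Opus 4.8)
The plan is to combine the Main Lemma (Lemma~\ref{lem:mainlb}) with a charging argument that sums up, over all choices of the subproblem index vector $\aj$, the number of ``local'' sums that the query algorithm is forced to use. First I would fix a data structure $\D$ with storage $\pS(n)$ and query bound $Q(n)$, and set $\lambda_\aj = \frac{\delta h^{d-1}}{j_1\cdots j_{d-1}}\cdot\frac{n}{\pS(n)}$ as in the lemma. For a fixed index vector $\aj$ with $1\le j_i\le h/2$, the Main Lemma says that with $\Omega(1)$ probability over the choice of $\bv_1,\dots,\bv_{d-1}$ the $\aj$-subproblem and $\top(\aj,\lambda_\aj)$ are well-defined, and conditioned on that, with probability $1-O(\sqrt{\delta/\varepsilon_d})$ the nodes are ``doomed'': $\E_{\by}[\sum_{s\in\bS_\aj}|e(s)\cap\top(\aj,\lambda_\aj)|]<|\top(\aj,\lambda_\aj)|/C'$. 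By Markov (in $\by$), in a doomed configuration with $\Omega(1)$ probability over $\by$ the extensions of all sums covering $\ge C$ points of $\top(\aj,\lambda_\aj)$ together cover at most $|\top(\aj,\lambda_\aj)|/2 = \lambda_\aj/2$ point-slots; hence at least $\lambda_\aj/2$ points of $\top(\aj,\lambda_\aj)$ are each covered only by sums that hit fewer than $C$ points of that top box. Since each such sum covers $<C$ points, the query that is the $\aj$-subproblem of this random query needs $\Omega(\lambda_\aj/C)$ distinct sums just to handle its $\lambda_\aj$-top.

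The second step is to turn this per-$\aj$ bound into a single query with a large $Q(n)$. The key point is that all the $\aj$-subproblems, for varying $\aj$ with $1\le j_i\le h/2$, are subproblems of the \emph{same} query $\dom_{\bv_1,\dots,\bv_{d-1}}(\bq)$, and that a sum covering few points of $\top(\aj,\lambda_\aj)$ but many points overall could in principle be reused across many $\aj$; this is precisely the ``one very big technical issue'' flagged after the Main Lemma. I would address it by an averaging/linearity argument over the randomness: let $N_\aj$ denote the (random) number of sums the algorithm uses whose extension covers at least one but fewer than $C$ points of $\top(\aj,\lambda_\aj)$. By the previous paragraph $\E[N_\aj]=\Omega(\lambda_\aj/C)$ where the expectation is over $\bv_1,\dots,\bv_{d-1},\bq,\by$. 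Summing over all $\aj$ gives $\E[\sum_\aj N_\aj] = \Omega\big(\frac{1}{C}\sum_{j_i\le h/2}\frac{\delta h^{d-1}}{j_1\cdots j_{d-1}}\cdot\frac{n}{\pS(n)}\big) = \Omega\big(\frac{\delta n}{\pS(n)}\cdot h^{d-1}(\LL n)^{d-1}\big) = \Omega\big(\frac{n(\log n\LL n)^{d-1}}{\pS(n)}\big)$, using $h=\log n$ and $\sum_{x=1}^{h/2}1/x=\Theta(\log h)=\Theta(\LL n)$. If I can argue that for the query realizing the expectation, the total number of sums used by the query algorithm is $\Omega(\sum_\aj N_\aj)$ — i.e.\ that each sum is double-counted across subproblems only $O(1)$ times on average — then $Q(n)=\Omega(n(\log n\LL n)^{d-1}/\pS(n))$, which is the theorem.

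The main obstacle, as the excerpt itself emphasizes, is exactly the last implication: a single stored sum $s$ can be an ``eligible small-cover'' sum for $\top(\aj,\lambda_\aj)$ for many different $\aj$ simultaneously (it lives on $d-1$ root-to-node paths, and for each of the trees $T_i$ it can be charged at the $\le h/2$ many scales $j_i$ along the path), so a naive union would overcount by a polylog factor and only recover Chazelle's bound $(\log n)^{d-1}$. To handle this I would set up a fractional charging scheme: when a sum $s$ is used in the covering of the $\aj$-subproblem, charge it weight $w_\aj(s)$ proportional to $|e(s)\cap\top(\aj,\lambda_\aj)|/\lambda_\aj$ (which is $O(C/\lambda_\aj)$ for a small-cover sum), and bound the \emph{total} weight $\sum_\aj\sum_{s} w_\aj(s)$ charged to any fixed stored sum $s$ by $O(1)$ — this is where one must carefully use the geometric decay built into the potentials $\Phi$ and $\Psi$ (the factors $2^{-\zeta_i(s)}$, $2^{-k_i}$) together with the fact that $\top(\aj,\lambda_\aj)$ boxes at different $\aj$ are essentially disjoint in the $Y$-direction because the number of points $\lambda_\aj$ shrinks geometrically as $j_i$ grows. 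Once the per-sum weight is $O(1)$, the number of sums used is at least the total weight, which is $\Omega(\sum_\aj\lambda_\aj/C)$, giving the claimed bound; combining with the trivial $Q(n)\ge 1$ and with well-distributedness (Lemma~\ref{lem:well}) to realize $P$ on $\Theta(n)$ points completes the proof. I expect the disjointness-in-$Y$ and decay-in-$j_i$ bookkeeping to be the delicate part, and to consume most of the remaining sections of the paper.
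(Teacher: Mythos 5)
Your first half — invoking the Main Lemma per subproblem, using Markov over $\by$ to force $\Omega(\lambda_\aj/C)$ expensive sums for each well-defined $\aj$, and summing the harmonic series over $1\le j_i\le h/2$ to get the $(\log n\log\log n)^{d-1}$ total — is exactly the paper's strategy, and you correctly identify the reuse of a single sum across many subproblems as the remaining obstacle. The gap is in your proposed resolution of that obstacle. First, your geometric premise is backwards: the boxes $\top(\aj,\lambda_\aj)$ are not ``essentially disjoint in the $Y$-direction''; they all share the same top boundary $y$, and their heights $\beta_\aj=\Theta\bigl(\frac{\lambda_\aj}{n}\prod_{i}2^{\ell_i+j_i}\bigr)$ \emph{grow} as the $j_i$ grow (the paper computes $\beta_{\aj'}/\beta_\aj=\Omega\bigl(\prod_i 2^{x_i/2}\bigr)$), so the boxes are nested, not disjoint. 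Second, the fractional charging with $w_\aj(s)\propto|e(s)\cap\top(\aj,\lambda_\aj)|/\lambda_\aj$ cannot yield the theorem even if the per-sum total were $O(1)$: an expensive sum contributes only $O(C/\lambda_\aj)$ weight at subproblem $\aj$, so the $\Omega(\lambda_\aj/C)$ expensive sums forced by the Main Lemma contribute $O(1)$ total weight per subproblem, and the grand total over all $\aj$ is $O((h/2)^{d-1})=O(\log^{d-1}n)$ — you only recover a Chazelle-type bound and lose the $(\log\log n)^{d-1}$ factor. To count sums, each used expensive sum must receive weight $\Omega(1)$, and then bounding its total weight by $O(1)$ amounts to showing each sum is expensive for only $O(1)$ subproblems, which is false in general and is precisely the difficulty.

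The paper resolves this by exploiting the nesting you tried to argue away. Because of the extensions $e(s)$, a sum usable for several subproblems overlaps their top boxes fully in the $X_i$ directions; if its top $Y$-coordinate lies in $[y-\beta_\aj,y]$ (so that it touches $\top(\aj,\lambda_\aj)$ at all), then for any $\aj'$ whose coordinates exceed those of $\aj$ by a total of at least a constant $c$, the box $\top(\aj',\lambda_{\aj'})$ is taller by a factor $2^{\Omega(c)}$, hence $e(s)$ covers a $1-O(2^{-c/2})$ fraction of its volume, i.e.\ covers it \emph{cheaply}. Consequently a sum can be expensive only for the $O(c^{d-1})$ subproblems whose indices are among the $c$ largest available in every coordinate; for all other subproblems its fewer than $C$ expensively covered points are charged to cheaply covered points of a deeper subproblem, each such point charged at most $d-1$ times, and the Main Lemma caps the total cheap coverage (with multiplicity) at an $O(1/C')$ fraction of $|\top(q)|$ in expectation. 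This combination — not per-sum weight bookkeeping — is what converts the per-subproblem counts into $Q(n)=\Omega\bigl(n(\log n\log\log n)^{d-1}/\pS(n)\bigr)$. If you want to salvage your scheme, the quantity to control per sum is the number of subproblems for which it is used expensively, and the only route offered by the structure of the problem is the cheap-coverage charging just described.
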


We pick a random query according to the distribution defined in the previous subsection.
By Lemma~\ref{lem:mainlb}, every $\aj$-subproblem for $1 \le j_i \le h/2$, has a constant probability of
being well-defined.
Let $\W$ be the set of all the well-defined subproblems. 
For a $\aj$-subproblem, let $\lambda_\aj$ be the value $\lambda$ as it is defined in Lemma~\ref{lem:mainlb}.
Observe that if a $\aj$-subproblem for $\aj = (j_1, \cdots, j_{d-1})$, is well-defined, then $\top(\aj,\lambda_\aj)$ contains
$\lambda_\aj = \frac{\delta h^{d-1}}{j_1 j_2 \dots j_{d-1}}\cdot \frac{n}{\pS(\A)}$ points. 
However, if $\aj$-subproblem or $\top(\aj,\lambda_\aj)$ is not well-defined, then 
we consider $\top(\aj,\lambda_\aj)$ to contain $0$ points.
We define the top of the query, $\top(q)$, to be the set of points
$\cup_{\aj=(j_1, \ldots, j_{d-1}), 1 \le j_1, \ldots, j_{d-1} \le h/2} \top(\aj,\lambda_\aj)$.
As each $\aj$-subproblem and $\top(\aj,\lambda_\aj)$, for $j_i \le h/2$ has a constant probability of being well-defined, we have
\begin{align}
    \E[|\top(q)|] = \sum_{\aj=(j_1, \ldots, j_{d-1}), 1 \le j_1, \ldots, j_{d-1} \le \frac{h}2} \E[\top(\aj,\lambda_\aj)] &= \Theta(1) \sum_{j_1=1}^{h/2}\ldots \sum_{j_{d-1}=1}^{h/2} \frac{\delta h^{d-1}}{j_1 j_2 \dots j_{d-1}}\cdot \frac{n}{\pS(\A)} = \nonumber \\
 &\sum_{j_1=1}^{h/2}\ldots \sum_{j_{d-2}=1}^{h/2} \frac{\delta \Theta(\log h)  h^{d-1}}{j_1 j_2 \dots j_{d-2}}\cdot \frac{n}{\pS(\A)} = \ldots =  \nonumber \\
 & \Theta\left(\frac{\delta \log^{d-1}h h^{d-1} n}{\pS(\A)}\right)\label{eq:final}.
\end{align}

\subparagraph{``Gluing'' subproblems.} 
We  now show that we can find a subset of the points $\top(q)$ that contain at least a constant fraction its points,
s.t., every sum can cover at most a constant number of points in this subset.
As a result, the total number of sums required to cover the points in $\top(q)$ is asymptotically the same 
as Eq.~\ref{eq:final}, our claimed lower bound.
The main idea is the following.
We say $s$ covers the points from a $\aj$-subproblem expensively, if $s$ covers less than $C$ points from 
$\top(\aj, \lambda_\aj)$ (otherwise, it covers them cheaply). 
If $s$ can only be used to cover points expensively from a constant
number of subproblems, then we are good.
Otherwise, we show that the number of points $s$ covers expensively is less than the number 
points $s$ covers cheaply.
But from Lemma~\ref{lem:mainlb}, we know that only a very small fraction of the points in $\top(q)$
can be covered cheaply, even when counting with multiplicity. 
As a result, most sums are expensive and cover points from a constant number of subproblem, i.e., cover a constant number of points. 
Thus, the bound of Eq.~\ref{eq:final} emerges
as an asymptotic lower bound for the query time. 

Remember that we have bounded in Eq.~\ref{eq:final} that
\begin{align}
    \E[|\top(q)|] =\Theta\left(\frac{\delta \log^{d-1}h h^{d-1} n}{\pS(\A)}\right).
\end{align}

We now show that this is an asymptotic lower bound on the query time. 
For a point $p$, in $\top(\aj,\lambda)$, if there exists a sum $s$ that covers $p$ together with at least $C$ other points
from $\top(\aj,\lambda)$, we say $p$ is \mdef{cheaply covered}. 
We denote by $\eC(q)$, the total number of times the points in $\top(q)$ are cheaply covered (this is counted with
multiplicity, i.e., if a point is covered cheaply by multiple sum, it is counted multiple times). 
By Lemma~\ref{lem:mainlb}, and using linearity of expectation we have
\begin{align}
  \E[\eC(q)] \le \frac{|\top(q)|}{C'}.\label{eq:cheap}
\end{align}

Since, by Lemma~\ref{lem:mainlb}, for every well-defined $\top(\aj,\lambda_\aj)$, on average,
only a constant fraction of the points can be cheaply covered, meaning, most sums that cover points from a 
subproblem, will be \mdef{expensive} and subsequently, 
on average a well-defined $\top(\aj,\lambda_\aj)$ will require $\Omega(|\top(\aj,\lambda_\aj)|/C)$ distinct sums
to be covered. 
If we can add these numbers together, we will obtain our lower bound.
However, there is one technical difficulty and that is the same sum $s$ can be an expensive sum but with respect to many different
subproblems (making it economical for the data structure to use).
We would like to show that there cannot be too many sums like this. 

\subparagraph{The details of gluing subproblems.}
The main idea is that if a sum $s$ is expensive with respect to a lot of subproblems, then $s$ covers a lot of points cheaply.
However, as we have a limit on how many points can be cheaply covered, this implies that a sum $s$ cannot be expensive with respect to
a lot of subproblems. 
To be able to do this, we need to understand how different subproblems are related to each other;
so far, we have treated each subproblem individually but now we have to ``glue'' them together!

Consider a query $\dom_{v_1, \ldots, v_{d-1}}(q)$ and a sum $s$ that can be used to answer this query. 
By Observation~\ref{ob:highsubtree}, $s$ is stored at the subtree of a node $v_i$ in $T_i$ for every $1 \le i \le d-1$
(see Figure~\ref{fig:intd} for an example).
Consider the point $(x_i, z_i)$ that is used to define the query. 
We know the following: $v_i$ is the unique node in $T_i$ such that $\gamma(v_i)$ contains $(x_i,z_i)$. 
Let $w_i$ be the leaf node in $T_i$ such that $r(w_i)$ contains the point $q$ and let $\pi_i$ be the path
that connects $v_i$ to  $w_i$ (shown in red in Figure~\ref{fig:intd}).
Any node $u$ in $T_i$ that hangs to the left of the path $\pi$ could be a defining node of a subproblem of
the query. 
In other words, if $u_i$ is a node $T_i$ that has a right sibling on $\pi_i$, 
then $u_i$ could be among the defining nodes of some subproblem of the query. 
Let $u^{(1)}_i, \cdots, u^{(f_i)}_i$ be the list of nodes with this property
and let $\ell_i + j^{(k)}_i$ be the depth of $u^{(k)}_i$. 
Observe that the Cartesian product
$\left\{ u^{(1)}_1, \cdots, u^{(f_1)}_1 \right\} \times \cdots \times \left\{ u^{(1)}_{d-1}, \cdots, u^{(f_{d-1})}_{d-1} \right\}$
captures all the possible tuples of $d-1$ nodes, one from each $T_i$, that are defining nodes of some subproblem of the query;
every choice in this Cartesian product will yield a subproblem and for every subproblem its tuple of $d-1$ defining nodes
can be found in this Cartesian product. 
Thus, for every $\aj \in \left\{ j^{(1)}_1, \cdots, j^{(f_1)}_1 \right\} \times \cdots \times \left\{ j^{(1)}_{d-1}, \cdots, j^{(f_{d-1})}_{d-1} \right\}$
we have a $\aj$-subproblem of the query with the corresponding defining nodes from the aforementioned Cartesian product. 

Now let us look back at the sum $s$. 
The line segment $t_i(s)$ denotes the  $X_i$-range of the sum $s$.
Let us examine its projection on the $X_iZ_i$ plane and in the representative diagram $\Gamma_i$. 
The $X_i$-range of $t_i(s)$ could be disjoint from the $X_i$-range of some prefix of the list of nodes
$u^{(1)}_i, \cdots, u^{(f_i)}_i$ as well as some suffices of this list. 
This means, there will be indices $f_i(s)$ and $e_i(s)$ such that $s$ cannot be used for any subproblem
involving nodes 
$u^{(1)}_i, \cdots, u^{(f_i(s)-1)}$ or the nodes 
$u^{(e_i(s)+1)}_i, \cdots, u^{(f_i)}$.
However, for any subproblem
$\aj \in \ej_s = \left\{ j^{(f_1(s))}_1, \cdots, j^{(e_1(s))}_1 \right\} \times \cdots \times \left\{ j^{(f_{d-1})}(s)_{d-1}, \cdots, j^{(e_{d-1}(s))}_{d-1} \right\}$ 
$s$ can potentially be used for $\aj$-subproblem, provided its $Y$-coordinate is below that of the query. 
We now estimate the volume of the intersection of $e(s)$ with $\top(\aj,\lambda_\aj)$ 
for different $\aj \in \ej_s$.

Observe that $e(s)$ and $\top(\aj,\lambda_\aj)$ will fully intersect along any dimension other than $Y$ for any
$\aj \in \ej_s$.
In fact, this property is the entire reason why we had to deal with extensions of sums rather than the sums
themselves.
However, observe that the $e(s)$ does not have a bottom boundary (or a lower bound) along the $Y$-axis and its
top boundary is fixed. 
On the other hand, the top boundary of all boxes $\top(\aj,\lambda_\aj)$ is $y$ but their bottom boundary is
variable; it is $y-\beta_\aj$ for a parameter $\beta_\aj$ that depends on the subproblem. 
Consider two subproblems, $\aj = (j_1, \cdots, j_{d-1})$ and $\aj' = (j_1+x_1, \cdots, j_{d-1} + x_{d-1})$ where
$x_i \ge 0$.
Remember that $\beta_\aj$ was defined $\beta_\aj = \Theta(\frac{\lambda_\aj}{n} \prod_{i=1}^{d-1}2^{\ell_i + j_i})$.
We now calculate the ratio $\beta_{\aj'}/\beta_\aj$ and observe that
\[
    \frac{\beta_{\aj'}}{\beta_\aj} = \Omega\left(\frac{\lambda_{\aj'}}{\lambda_\aj}\right) \prod_{i=1}^{d-1}2^{x_i} = \Omega\left(\prod_{i=1}^{d-1}\frac{j_i2^{x_i}}{j_i + x_i}\right) = \Omega(\prod_{i=1}^{d-1}2^{x_i/2})
\]
Let $y_b$ be the $Y$-coordinate of top boundary of $e(s)$.
If $e(s)$ and $\top(\aj, \lambda_\aj)$ intersect, it follows that $y-\beta_\aj \le y_b \le y$. 
As discussed, $\beta_{\aj'}$ will be larger (by a $2^{x_1/2 + \ldots + x_{d-1}/2}$ factor at least) 
which implies not only $e(s)$ and $\top(\aj',\lambda_{\aj'})$ intersect, but the volume of their intersection is
a factor $1- O(2^{-x_1/2 - \ldots - x_{d-1}/2})$ fraction of the entire volume of 
$\top(\aj',\lambda_{\aj'})$!
As a result, this means that $e(s)$ will cover almost all the points of 
$\top(\aj',\lambda_{\aj'})$ as long as $x_1 + \cdots + x_{d-1}\ge c$ for a large enough constant $c$.

Fix a value $i$, $1 \le i \le d-1$. 
Consider the $i$-th coordinate of all the subproblems $\ej_s$.
By what we have discussed, this coordinate can take any of the values
in $\left\{ j^{(f_i(s))}_i,j^{(f_i(s)+1)}_i, \cdots, j^{(e_i(s))}_i \right\}$.
Consider a sum $s$ that can be used to answer a $\aj$-subproblem for
$\aj = (j_1, \cdots, j_{d-1})$.
We consider two cases:
\begin{enumerate}
  \item For all $i$, $1 \le i \le d-1$, $j_i$ is among the $c$ largest values of the $i$-coordinate.
    It follows that there can be at most $c^{d-1} = O(1)$ such subproblems $\aj$.
  \item At least one value $j_i$ is not among the $c$ largest values of the $i$-coordinate, meaning,
    $j_i = j_i^{(k)}$ where $k < e_i(s) - c$.
    Consider value $\aj' = (j_1, \cdots, j_{i-1}, j_i^{(k+c)}, j_{i+1}, \cdots, j_{d-1}) \in \ej_s$ where we have
    only replaced the $i$-coordinate of $\aj$ with a different value. 
    And the value we have replaced it with has a rank $c$ higher.
    In this case, we know that $s$ almost entirely covers $\top(\aj',\lambda_{\aj'})$.
    Now, we can charge
    any point that $s$ covers expensively in $\aj$-subproblem to one point that $s$ covers cheaply
    in $\aj'$-subproblem. 
    It is clear that any point in $\aj'$-subproblem can be charged at most $(d-1)$ times, since they can only
    be charged once along any dimension. 
\end{enumerate}

Now we are almost done. 
If a sum $s$ can cover points from many different subproblems, then it also covers a lot of points cheaply.
However, we know that only a small fraction of the points in $\top(q)$ can be covered cheaply.
As a result, at least a constant fraction of the points in $\top(q)$ should be covered by sums that are only used
for a constant number of subproblems. 
Each such sum covers a constant number of points and thus the number of sums required to cover the points in 
$\top(q)$ is asymptotically bounded by Eq.~\ref{eq:final}. 
This concludes the proof.

\begin{figure}[h]
    \centering
    \includegraphics[scale=0.75]{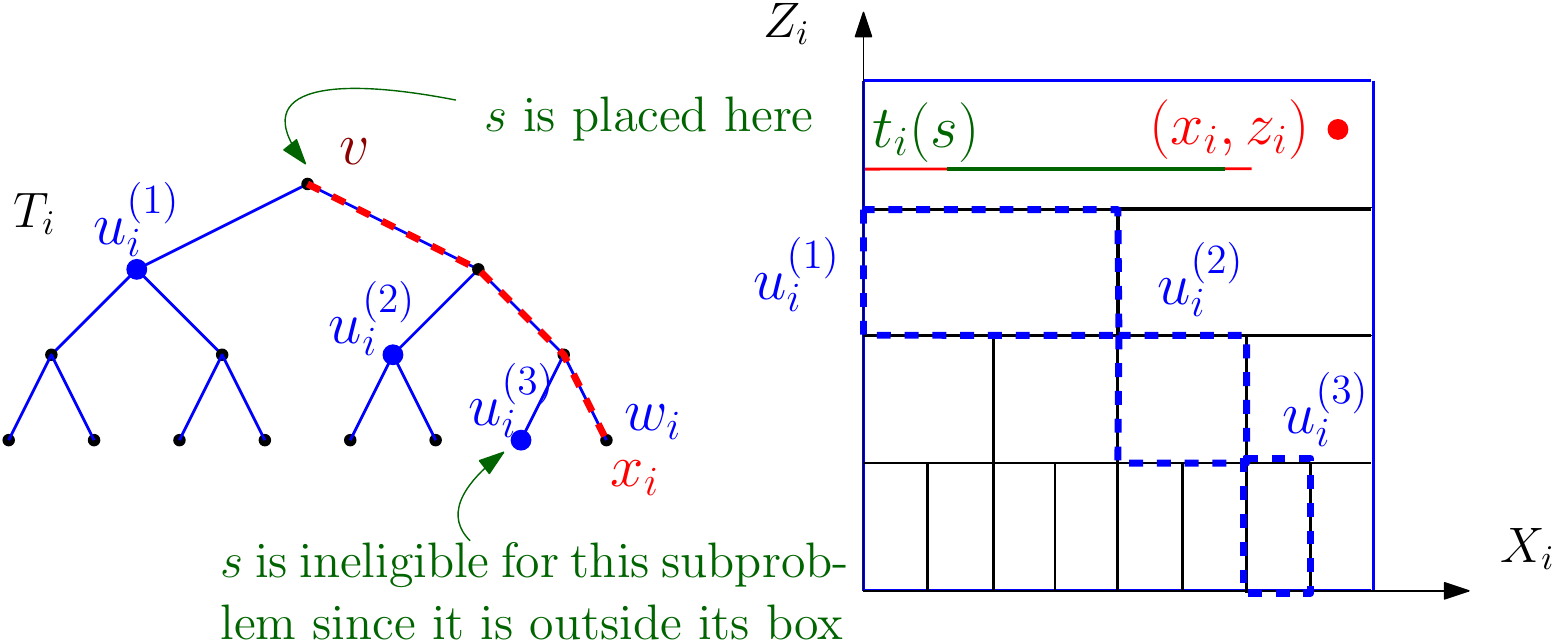}
    \caption{The query in the picture has three possible defining nodes $u_i^{(1)}$, $u_i^{(2)}$ and $u_i^{(3)}$ for its subproblems
    in $T_i$. $s$ will be ineligible for any subproblem that involves $u_i^{(3)}$ as its defining node.}
    \label{fig:intd}
\end{figure}

\section{The Upper Bounds}\label{sec:ub}
We build data structures for idempotent semigroups and for
well-distributed point sets (or a set
of $n$ points placed uniformly at random inside a square) and show 
that our analysis in the previous section is tight. 
Due to lack of space, the technical parts of the proof have been moved to the appendix but
the main idea is to simulate the phenomenon we have captured in our lower bound:
the idea that one can store sums such that the sums from different subproblems ``help'' each other.
To do that, we define the notion of ``collectively well-distributed'' point sets. 
Intuitively, collectively well-distributed point sets is a collection of point sets $\P$ where
each element of $\P$ is a well-distributed point set but importantly, certain unions of the point sets in $\P$ are also well-distributed point sets.
See Fig.~\ref{fig:col} for an example.

\begin{figure}[h]
  \centering
  \includegraphics[scale=0.5]{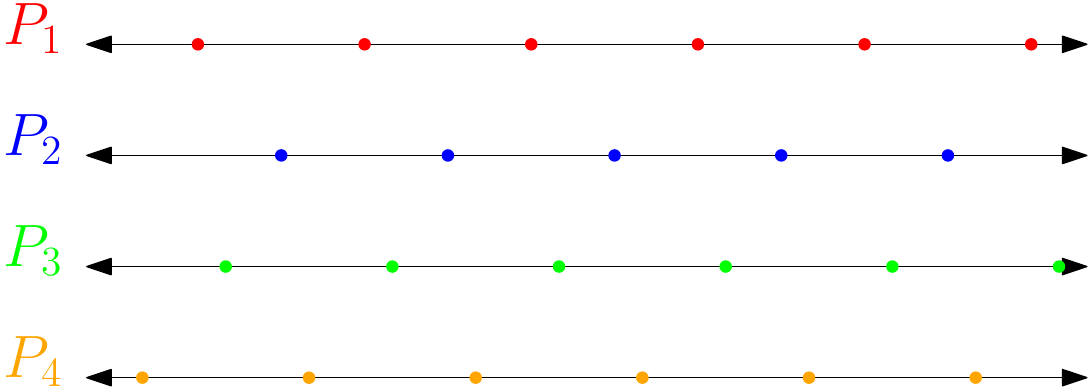}
  \caption{The point sets $P_1, P_2, P_3, P_4$ are well-distributed. For any continuous set of integers $I \subset \left\{ 1, \dots, 4 \right\}$, 
  $\cup_{i\in I}P_i$ is also well-distributed but $P_1 \cup P_3$ might not be well-distributed. }
  \label{fig:col}
\end{figure}
\begin{definition}
  Let $\sI= [t]^k$ be a set of indices, for an integer $t$ and a constant integer $k$.
  Let $\P$ be a collection of  point sets of roughly equal size indexed by $I$. 
  That is, for each $\si \in \sI$, there exists a  point set $P_\si \in \P$ containing $\Theta(n)$ points in $\R^d$.
  We say $\P$ is \mdef{collectively well-distributed} if the following holds for any $2k$ integers
  $1 \le i_1 \le j_1 \le t$, $1 \le i_2 \le j_2 \le t, \dots, $, and $1 \le i_k \le j_k \le t$:
  The point set $\cup_{i_1 \le \ell_1 \le j_1} \dots \cup_{i_k \le \ell_k \le j_k} P_{(\ell_1, \dots, \ell_k)}$ is
  well-distributed. 
\end{definition}

\begin{restatable}{mlemma}{lemcwd}\label{lem:cwd}
  For every $N$, $h$, and given constants $d$, and $k$, there is a collectively well-distributed point set $\P$
  indexed by $\mathscr{I}=[h]^k$ such that each point set in $\P$ contains $\Theta(N)$ points in $\R^d$.
\end{restatable}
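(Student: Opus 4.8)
The plan is to construct the collectively well-distributed family $\P$ indexed by $\sI = [h]^k$ directly, by subdividing the unit cube $\Q \subset \R^d$ into a grid and assigning points to grid cells according to their index in $\sI$. First I would recall the construction underlying Lemma~\ref{lem:well}: a single well-distributed point set can be obtained by placing points at (or near) the centers of the cells of a sufficiently fine grid, or equivalently by taking a suitable lattice-like point set; the two key facts we need are (ii) that every cell of side length comparable to the point spacing contains $\Theta(1)$ points and (iii) that every box of volume $v$ contains $O(v N)$ points. The natural idea is to build a \emph{single} well-distributed ``master'' point set $P^\star$ of size $\Theta(h^k N)$, then partition $P^\star$ into $h^k$ blocks $P_\si$, $\si \in \sI$, so that each block is itself well-distributed and so that the block decomposition is ``aligned'' with $\sI$ in a way that makes contiguous unions well-distributed.

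Concretely I would do the following. Reserve $k$ of the $d$ coordinate axes (say axes $1,\dots,k$) as ``indexing'' axes; along axis $c$, cut $\Q$ into $t = h$ equal slabs, so the product of these cuts gives $h^k$ congruent sub-boxes $B_\si$, one for each $\si = (\ell_1,\dots,\ell_k) \in \sI$. Inside each sub-box $B_\si$ place a well-distributed point set $P_\si$ of $\Theta(N)$ points scaled to fit $B_\si$ (this uses Lemma~\ref{lem:well} applied inside $B_\si$ after an affine rescaling; well-distributedness is affine-invariant up to changing the constant $\varepsilon_d$, as long as the rescaling is by bounded aspect ratio, which it is since all $B_\si$ are congruent). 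Then for any contiguous product of index ranges $\prod_{c=1}^k [i_c, j_c]$, the union $\bigcup P_{(\ell_1,\dots,\ell_k)}$ over that range lives in the axis-aligned box $R = \prod_{c=1}^k (\text{slabs } i_c \text{ through } j_c) \times \prod_{c>k}[0,1]$, which is again a box of bounded aspect ratio, and one must check it is well-distributed. Property (i) is immediate since $R \subset \Q$. For (iii): a query box $\rho$ of volume $v$ intersects some number of the sub-boxes $B_\si$ inside $R$, and within each it is a box of volume at most $\min(v, \vol(B_\si))$; summing the per-block bound $O(\vol(\rho \cap B_\si)\cdot N / \vol(B_\si)) = O(\vol(\rho\cap B_\si)\cdot h^k N)$ over the $O(1 + v h^k)$ relevant blocks gives $O(v\, h^k N) = O(v\, |P_R|)$ as required (here $|P_R| = \Theta(h^k N \cdot \prod(j_c - i_c + 1)/h^k)$ when $\rho$ is large, and $O(1)$ extra blocks contribute $O(h^k N \cdot \vol(B_\si)) = O(N)$ each when $\rho$ is small, which is absorbed). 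For (ii): a box containing $r \ge 2$ points of the union meets at least one block in $\ge 1$ point, and a short case analysis (either it is contained in $O(1)$ blocks, where block-level property (ii) applies directly, or it spans many blocks, where its volume is already $\Omega(\vol(B_\si)) = \Omega(1/h^k) = \Omega(r/|P_R|)$ once $r \le |P_R|$) gives the volume lower bound.

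The main obstacle I expect is the seam/boundary behavior: a small query box straddling the interface between two adjacent sub-boxes $B_\si$ and $B_{\si'}$ could, if the local point sets are placed carelessly, either create a tiny box with $2$ points of volume below the $\varepsilon_d r/|P|$ threshold (breaking (ii)) or, conversely, a box packed with too many points (breaking (iii)). Handling this requires choosing the per-block point sets so they ``mesh'' across slab boundaries — e.g.\ ensuring a uniform minimum inter-point distance $\Theta((h^k N)^{-1/d})$ globally across all blocks, which one can get by carving the blocks' point sets out of one common fine lattice and perturbing consistently, rather than by $h^k$ independent invocations of Lemma~\ref{lem:well}. So the cleaner route is probably: build one master lattice-based well-distributed set $P^\star$ of size $\Theta(h^k N)$ on $\Q$ whose points never straddle the $h^k$ grid hyperplanes (achievable by making the grid cut positions coincide with lattice hyperplane positions, i.e.\ taking the number of master-grid cells along axis $c\le k$ to be a multiple of $h$), set $P_\si := P^\star \cap B_\si$, and then inherit properties (ii) and (iii) for every contiguous union $\bigcup_{i_c \le \ell_c \le j_c} P_{(\ell_1,\dots,\ell_k)} = P^\star \cap R$ \emph{directly} from the corresponding properties of $P^\star$ restricted to the sub-box $R$ — since $R$ is a box, $P^\star \cap R$ is exactly the master well-distributed set clipped to $R$, and well-distributedness of a uniform lattice-like set is preserved under clipping to an axis-aligned box (again up to adjusting $\varepsilon_d$). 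With that observation the lemma reduces to the single-set construction of Lemma~\ref{lem:well} plus the elementary remark that clipping a well-distributed (lattice-like) set to a bounded-aspect-ratio box keeps it well-distributed, and the rescaling $h \mapsto t$, $N \mapsto$ per-block size bookkeeping; I would state that remark as a short sub-claim and verify (i)--(iii) for it, which is where the only real (but routine) work lies.
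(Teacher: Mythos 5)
Your construction does not satisfy the definition you need to meet, and the failure is structural rather than a matter of seam details. In Definition~\ref{def:wd}, well-distributedness is measured against the \emph{unit cube} and against the set's \emph{own} cardinality, with a constant $\varepsilon_d$ depending only on $d$. Your set $P_\si$ consists of $\Theta(N)$ points confined to the sub-box $B_\si$ of volume $h^{-k}$; taking the rectangle $B_\si$ itself (volume $h^{-k}$, containing all $\Theta(N)$ points) violates property (iii), which allows only $O(h^{-k}N/\varepsilon_d)$ points, and likewise there are pairs of points at distance $\Theta((Nh^k)^{-1/d})$ violating property (ii). Since the collective definition includes singleton index ranges ($i_c=j_c$), each individual $P_\si$ must itself be well-distributed, so the lemma already fails for your family; the same counting shows every \emph{proper} contiguous union fails too, because its $\Theta(N\prod(j_c-i_c+1))$ points sit in a region of volume $\prod(j_c-i_c+1)/h^k$, giving local density $\Theta(Nh^k)$ instead of the allowed $\Theta(|X|)$. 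The step where this slips in is the claim that ``well-distributedness is affine-invariant up to changing $\varepsilon_d$, as long as the rescaling is by bounded aspect ratio'': the map from $\Q$ onto $B_\si$ shrinks by $1/h$ along $k$ axes and by $1$ along the rest, so its aspect ratio is $h$ (unbounded), and even an isotropic shrinking changes volumes by a factor $h^{-k}$ that cannot be absorbed into a constant depending only on $d$. The fallback ``clip a master lattice to $R$'' inherits exactly the same defect, since clipping does not rescale: the clipped set is judged against its own (smaller) cardinality but still has the master set's (much higher) density. Note also that the intended use in Section~\ref{sec:ub} needs each $D_\si$, and each admissible union, to have two-sided density $\Theta(v\,|X|)$ in boxes \emph{anywhere} in the unit cube (e.g.\ the bound $|\sX''|=\Omega(|\sX|\alpha)$), which spatially segregated blocks cannot provide.

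The missing idea is to realize the index structure in \emph{auxiliary} dimensions rather than inside $\R^d$: the paper takes a single well-distributed set of $\Theta(Nh^k)$ points in $\R^{d+k}$ (Lemma~\ref{lem:well}), cuts the last $k$ coordinates into $h$ slices each to get $h^k$ congruent cells $\Q_\si$, and defines $P_\si$ as the \emph{projection} of the points in $\Q_\si$ onto the first $d$ coordinates. Projection spreads every $P_\si$ over the whole $d$-dimensional unit cube, and a contiguous union of indices corresponds to a box in the auxiliary coordinates, so for any $d$-dimensional query rectangle $r$ one lifts it to the $(d+k)$-dimensional rectangle $r'$ over that index box and reads off $\Theta(v\,|X|)$ points directly from the well-distributedness of the master set. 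If you want to salvage your write-up, replace the in-dimension slab partition by this lift-and-project construction; the ``meshing across seams'' concern you raise then disappears, because there is only one master set and only one application of Lemma~\ref{lem:well}.
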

\begin{proof}
  Our main idea  is that we can obtain the collection $\P$ 
  by projecting a well-distrusted point set in $\R^{d+k}$ down to $\R^d$.
  By Lemma~\ref{lem:well}, there exists a point set $\P$ of size $\Theta(Nh^k)$ such that $\P$ is well-distributed in $\R^{d+k}$.
  Consider the dimensions $d+1, d+2, \dots, d+k$ of the $(d+k)$-dimensional unit cube $\Q$ and divide each side of $\Q$ along those
  dimensions into $t$ equal pieces.
  This divides $\Q$ into $h^k$ congruent subrectangles, and we 
  naturally index them with elements of $[h]^k$, to obtain $h^k$ subrectangles $\Q_\si$, for $\si \in \sI$.
  Let $P'_\si$ be the subset of $\P$ in $\Q_\si$.
  By construction, the volume of $\Q_\si$ is $h^{-k}$, which by the properties of a well-distributed point set implies
  $|P'_\si| = \Theta(N)$.
  Let $P_\si$ be the projection of $P'_\si$ onto the first $d$-dimensions. 
  We have $|P_\si| = \Theta(N)$.

  Now consider $2k$ indices  $1 \le i_1 \le j_1 \le h$, $1 \le i_2 \le j_2 \le h, \dots, $, and $1 \le i_k \le j_k \le h$
  and the point set $X = \cup_{i_1 \le \ell_1 \le j_1} \dots \cup_{i_k \le \ell_k \le j_k} P_{(\ell_1, \dots, \ell_k)}$.
  Let $\mathscr{L} = \left\{ (\ell_1, \dots, \ell_k) |i_1 \le \ell_1 \le j_1, \dots, i_k \le \ell_k \le j_k  \right\}$
  which means $X = \cup_{\si \in \mathscr{L}}P_\si$.
  Define $\Q_\mathscr{L} = \cup_{\si\in \mathscr{L}}\Q_\si$. 
  We now need to show that $X$ is well-distributed in $\R^d$.
  We show the property (iii) of a well-distributed point set, the other property follows very similarly. 
  Consider a $d$-dimensional rectangle $r$ with ($d$-dimensional) volume $v$ inside the unit cube in $\R^d$.
  Let $r'$ be the $(d+k)$-dimensional rectangle whose projection onto the first $d$ dimensions is $r$ and whose $(d+\ell)$-th 
  side is the same as the $(d+\ell)$-th side of $\Q_\mathscr{L}$.
  Thus, the length of the $(d+\ell)$-th side of $\Q_\mathscr{L}$ is $\frac{j_\ell - i_\ell}{h}$ which implies
  the ($(d+k)$-dimensional) volume of $r'$ is $v \frac{j_1 - i_1}{h}\frac{j_2 - i_2}{h}\dots \frac{j_k - i_k}{h}  $. 
  This implies $r'$ contains $\Theta((v \prod_{\ell=1}^k\frac{j_\ell- i_\ell}{h} \cdot  N)$ points
  but observe that $|X| =  N \prod_{\ell=1}^k\Theta((j_\ell- i_\ell) )$ which implies
  $r'$ contains $\Theta( v|X|)$ points. 
\end{proof}

\subparagraph{A rough sketch.}
We  first describe a rough sketch of our approach. 
Assume the input $P$ is a well-distributed point set and that we are interested in
answering queries of the form $q = [a_1, b_1] \times \dots \times [a_{k},b_{k}] \times (-\infty, b_{k+1}] \dots \times (-\infty, b_d]$.
Let $h=\log n$.
We use Lemma~\ref{lem:cwd} to create a collection $\D$ containing $h^k$ point sets,
with each point set containing $\Theta(n/\log^k n)$ points. 
These point sets are indexed by $\mathscr{I}=[h]^k$.
Then, a point $X=(x_1, \dots, x_d)$ in the set $D_\si \in \D$ for $\si=(i_1, \dots, i_k)$ is
turned into a $(d+k)$-sided box $B(X)$ in the form of
$[\ell_1, x_1]\times \dots \times [\ell_k, x_k] \times (-\infty, x_{k+1}]\times \dots \times (-\infty,x_d]$.
The index $i_j$, $1 \le j \le k$ determines how long is the $j$-th side
of the box $B(X)$, i.e., the length of the interval $[\ell_j, x_j]$.
This length will be around $1/2^{i_j}$. 
Next, we will analyse how to answer a query $q$. 
We will show this query can be reduced to answering up to $h^k$ different subproblems using a range
tree approach, e.g., answering queries $q_{j_1, j_2, \dots, j_k}$ for (possibly) all choices of
$j_1, \dots, j_k \in [h]$.
These subproblems will correspond to covering smaller and smaller regions. 
Crucially, since $\D$ was collectively well-distributed, it follows that 
the subproblems become progressively easier to answer. 
After some careful analysis, we will show that answering 
$q_{j_1, j_2, \dots, j_k}$ requires $\frac{h^k \log^{d-k}n}{\prod_{i=1}^k j_i}$
sums, asymptotically. 
Finally, we observe that the summing this bound over all choices of 
$j_1, \dots, j_k \in [h]$ yields the desired bound and thus we prove the following theorem. 

We now return to our main result of the section.
\begin{restatable}{theorem}{thmub}\label{thm:ub}
  For a set $P$ of $n$ points placed uniformly randomly inside the unit cube in $\R^d$, one can build
  a data structure that uses  $O(n)$  storage such that a $(d+k)$-sided query can be answered with the 
  expected query bound of $O(\log^{d-1}n (\log\log n)^k)$, for $1 \le k \le d-1$.

  If $P$ is well-distributed, then the query bound can be made worst-case. 
\end{restatable}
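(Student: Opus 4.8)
The plan is to realize, on the data-structure side, the ``gluing'' phenomenon that drives the lower bound: we store each input point only once, but encoded as a box whose side lengths are tuned to one particular scale, so that a box paid for at a fine scale already discharges almost all of the work of every coarser scale. Set $h=\log n$. By Lemma~\ref{lem:cwd} (with the given constant $k$) I would first obtain a collectively well-distributed collection $\P=\{P_\si\}_{\si\in[h]^k}$ with $|P_\si|=\Theta(n/h^k)$, so that for any box of indices $\prod_{j=1}^{k}[i_j,i'_j]$ the union $\bigcup P_{(\ell_1,\dots,\ell_k)}$ is again well-distributed. Split $P$ into the $P_\si$'s; when $P$ is a random point set this succeeds with the stated probability and every estimate below holds in expectation (the sole source of the ``expected'' qualifier), whereas for a genuinely well-distributed $P$ everything is deterministic. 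To a point $X=(x_1,\dots,x_d)\in P_\si$ with $\si=(i_1,\dots,i_k)$, attach the box $B(X)=[\ell_1,x_1]\times\cdots\times[\ell_k,x_k]\times(-\infty,x_{k+1}]\times\cdots\times(-\infty,x_d]$, where $[\ell_j,x_j]$ is the canonical dyadic interval of length $\Theta(2^{-i_j})$ ending at $x_j$. For each $\si$ I would build a $d$-dimensional dominance data structure on $P_\si$ (a range tree with large fan-out over the optimal one-dimensional structures, so the inverse-Ackermann terms are absorbed) of size $O(|P_\si|)$; together they use $\sum_\si O(|P_\si|)=O(n)$ storage.

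To answer a query $q=[a_1,b_1]\times\cdots\times[a_k,b_k]\times(-\infty,b_{k+1}]\times\cdots\times(-\infty,b_d]$, I would decompose each $[a_j,b_j]$ into its $O(\log n)$ canonical dyadic intervals grouped by depth; choosing one depth $j_i$ per coordinate $i\le k$ yields $O(1)$ ``aligned'' subqueries at scale vector $(j_1,\dots,j_k)\in[h]^k$, whose union is the subproblem $q_{j_1,\dots,j_k}$. The key point, and the heart of the analysis, is that $q_{j_1,\dots,j_k}$ can be answered almost entirely from sums stored at the coarsest compatible scales: because $\P$ is collectively well-distributed, the box $B(X)$ of a point stored at a coarse scale fully covers the corresponding finer subbox in the $k$ range-coordinates and in the $d-k$ unbounded coordinates, so answering $q_{j_1,\dots,j_k}$ reduces to a $(d-k)$-dimensional dominance query on a well-distributed set plus $O(1)$ boundary corrections. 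A careful volume-and-counting analysis --- the constructive dual of the $\beta_{\aj'}/\beta_{\aj}$ estimate from the lower bound --- then shows $q_{j_1,\dots,j_k}$ is answerable with $O(\log^{d-k-1}n\cdot\prod_{i=1}^{k}(h/j_i))$ sums.

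Summing over all scale vectors, $\sum_{j_1,\dots,j_k\in[h]}\prod_{i=1}^{k}(h/j_i)=\prod_{i=1}^{k}\left(h\sum_{j=1}^{h}1/j\right)=\Theta\big((h\log h)^k\big)$, so the total query bound is $O(\log^{d-k-1}n\,(h\log h)^k)=O(\log^{d-1}n\,(\log\log n)^{k})$; the storage is $O(n)$ by construction, and for a well-distributed $P$ the construction is deterministic, so the bound is worst-case. This matches the lower bound of Section~\ref{sec:lb} at $k=d-1$, certifying that the analysis there is tight.

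I expect the main obstacle to be the middle step: making precise how a single coarse-scale collection discharges almost all of every finer subproblem while each $P_\si$ --- and hence the total storage --- stays of size only $\Theta(n/h^k)$. Getting the dyadic scale alignment, the $(d-k)$-dimensional dominance recursion, and the ceiling/rounding slack built into Definition~\ref{def:wd} to all cooperate is exactly the constructive counterpart of the ``gluing subproblems'' argument of the lower bound, and it is where the $(\log\log n)^k$-versus-$(\log n)^k$ saving is actually realized.
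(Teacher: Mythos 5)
Your outline follows the paper's route at the top level (a collectively well-distributed family indexed by $[h]^k$, boxes whose $i$-th side length is tied to the $i$-th index, a dyadic decomposition of the query into scale-vector subproblems, reduction of each subproblem to $(d-k)$-dimensional dominance, and a harmonic summation giving the $(\log\log n)^k$ factor), but it has a genuine gap at its foundation. You treat the collection of Lemma~\ref{lem:cwd} as a \emph{partition of the input}: ``split $P$ into the $P_\si$'s.'' In the paper the collection $\D$ is a synthetic anchor set built independently of the input; for each anchor $X\in D_\si$ the data structure stores a single precomputed sum, namely the sum of the weights of the \emph{input} points lying in $B(X)$, and the input is never partitioned by scale. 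This decoupling is not cosmetic: Lemma~\ref{lem:cwd} constructs a set with the required property, it does not assert that a random (let alone an arbitrary well-distributed) input admits such a partition, and you give neither a splitting rule nor a proof that contiguous unions of your classes stay well-distributed. It is also what makes the probabilistic bookkeeping come out right: with synthetic anchors, the only effect of a random input is that a region of volume $v$ contains $\Theta(nv)$ points merely in expectation (this, via the singleton corrections and the region $R$ in Lemma~\ref{lem:domds}, is the sole source of the ``expected'' qualifier), not the success of a splitting step. Relatedly, the per-$\si$ ``range tree with large fan-out'' structures of size $O(|P_\si|)$ are neither available with the query bounds you would need nor necessary: the paper stores exactly one sum per anchor, and all the work at query time is done by the maxima-based dominance covering of Lemma~\ref{lem:domds} applied to the \emph{union} of anchors over all compatible scales.

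The second gap is the step you yourself flag as the main obstacle: that $q_{j_1,\dots,j_k}$ can be answered with $O(\log^{d-k-1}n\cdot\prod_{i=1}^{k}(h/j_i))$ sums is asserted, and this is exactly where the whole proof lives. The paper needs three concrete ingredients here: the balanced prefix cover (Lemma~\ref{lem:binary}), which guarantees that the defining node of the $j_i$-th pair sits at least $j_i/2$ levels below $v_i$; the containment/covering statement (Lemma~\ref{lem:crucial}), which shows that an anchor box at \emph{any} scale $z_i$ with $\lambda_i\le z_i\le\lambda_{i,j_i}$ simultaneously stays inside the query and spans the subquery's full extent in each of the first $k$ coordinates --- this uses the left-neighbour-slab definition of $\ell_j$, which your ``canonical dyadic interval of length $\Theta(2^{-i_j})$ ending at $x_j$'' does not quite deliver; and the counting step that pools anchors over the $\Omega(\prod_i j_i)$ compatible scale vectors inside $R_1$, so that collective well-distributedness gives $\Omega(\alpha n\prod_i j_i/h^k)$ anchors against $O(\alpha n)$ input points and Lemma~\ref{lem:domds} in dimension $d-k$ produces the factor $h^k/\prod_i j_i$. (Your exponent $\log^{d-k-1}n$ is indeed the right one for $(d-k)$-dimensional dominance and sums to the bound in the theorem statement.) Without these three pieces, the harmonic summation at the end has nothing to sum, so as written the proposal is an accurate plan of attack rather than a proof.
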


We now present the details.

Chazelle~\cite{Chazelle.LB.II} showed that for a set of randomly placed points inside the unit cube,
one can build an efficient data structure matching his lower bound. 
The same analysis can be applied to a well-distributed point set to obtain a worst-case query bound;
after all, a well-distributed point set guarantees that a
shape of volume $v$ contains $\Theta(nv)$ points whereas a randomly placed point set only guarantees it
in the expectation. 
Furthermore, the analysis can be easily generalized to obtain a trade-off curve for when less than 
$n$ storage is used by the data structure. 
Thus, we can have the following result. 
\begin{restatable}{mlemma}{lemdomds}\label{lem:domds}\cite{Chazelle.LB.II}
  Let $n$ and $m$ be parameters such that $1 < m < n/2$ and let 
  $S$ be a well-distributed points in $\Q$ containing $n/m$ points. 
  Consider an input point set $P$ containing $n$ points. 
  We can build a data structure $D$ by 
  summing the weights of all the points of $P$ dominated by a point $s \in S$.
  $D$ will use at most $m$ storage and it can answer 
  any dominance query $q$ with the
  expected query bound of $O(m\log^{d-1}n)$ if $P$ is uniformly randomly placed in $\Q$. 
  If $P$ is a well-distributed point set, then the query bound is worst-case.
\end{restatable}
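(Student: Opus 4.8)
\subparagraph{Proof proposal.}
The plan is to lift Chazelle's dominance data structure~\cite{Chazelle.LB.II} to the two slightly more general settings in the statement: an arbitrary well-distributed (rather than uniformly random) input $P$, and a storage budget $n/m$ that may be asymptotically smaller than $n$. First I would build $D$ exactly as described: take $S$ to be Chazelle's recursively-refined ``multi-resolution grid'' on $\Q$, scaled so that $|S| = n/m$; this point set is in particular well-distributed, and crucially it has the property that for every dominance corner $a$ the set of $S$-points dominated by $a$ has a \emph{short} Pareto frontier (staircase). Then store, for each $s \in S$, the single sum $\Sigma(s) = \sum\{\,w(p) : p \in P,\ p \preceq s\,\}$. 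This stores $|S| = n/m$ sums and no singletons, which is the claimed storage.

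Next I would analyse the query. Given a dominance query with corner $a = (a_1, \dots, a_d)$, the goal is to express $q \cap P$ as a union of few stored sums and singletons, and here I may freely let the chosen sums overlap since the semigroup is idempotent. The query algorithm is the standard recursive staircase sweep: over $d-1$ of the coordinates one performs, one coordinate at a time, a dyadic subdivision into $O(\log n)$ residual slabs, covering the bulk of each slab by a single dominance range $\Sigma(s)$ with $s \in S$ and recursing on the remaining coordinates. Chazelle's combinatorial lemma --- which I would re-verify carefully for an $S$ of size $n/m$ --- states that unrolling this recursion uses $O(\log^{d-1} n)$ dominance ranges of $S$ and leaves uncovered only a union $G$ of $O(\log^{d-1} n)$ pairwise-disjoint ``gap'' cells, each a box of volume $O(1/|S|) = O(m/n)$; every $P$-point in $G$ is then taken as a singleton. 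Note that $G$ depends only on $q$ and $S$, not on $P$.

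Finally I would invoke the density hypothesis on $P$ to bound $|P \cap G|$. If $P$ is well-distributed, property (iii) of Definition~\ref{def:wd} bounds the number of $P$-points in a cell of volume $O(m/n)$ by $O(m)$, so $|P \cap G| = O(m \log^{d-1} n)$ and the total query cost is $O(\log^{d-1} n) + O(m \log^{d-1} n) = O(m \log^{d-1} n)$, \emph{worst case}. If instead $P$ is $n$ points placed uniformly at random, then for each of the (query-determined, hence fixed) gap cells the number of $P$-points it contains is $\mathrm{Binomial}(n, O(m/n))$ with mean $O(m)$, and linearity of expectation over the $O(\log^{d-1} n)$ cells gives the same bound $O(m \log^{d-1} n)$ in expectation. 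I expect the main obstacle to be precisely the middle step: arranging that $S$ is simultaneously well-distributed and admits $O(\log^{d-1} n)$-size staircases, and checking that Chazelle's staircase/gap bound survives intact when $|S|$ is only $n/m$; the two density-conversion steps (well-distributed $\Rightarrow$ worst case, and the Binomial bound for random $P$) are then routine.
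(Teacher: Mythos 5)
The decisive problem is your very first move: you \emph{choose} $S$ to be a purpose-built multi-resolution grid, whereas in the statement $S$ is a \emph{given} well-distributed set, and that stronger reading is the one the paper actually needs. In the proof of Theorem~\ref{thm:ub} the lemma is invoked with $S$ equal to $\sX''$, the projection onto the last $d-k$ coordinates of a piece of the collectively well-distributed collection $\D$; the data structure has no freedom to swap this set for a grid, because the dominance sums it stores are anchored exactly at these points. So a proof that only works for a specially constructed $S$ establishes a weaker statement than the one required, and your own hedges (``re-verify carefully for an $S$ of size $n/m$'', ``arranging that $S$ is simultaneously well-distributed and admits $O(\log^{d-1}n)$-size staircases'') sit precisely at the unresolved point: you never argue that an \emph{arbitrary} well-distributed $S$ supports the dyadic staircase sweep with $O(\log^{d-1} n)$ covering ranges and $O(\log^{d-1} n)$ gap cells of volume $O(m/n)$ each.

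The paper's proof avoids the grid and the recursive sweep altogether: for a query $q$ let $X \subset S$ be the points of $S$ dominated by $q$ and let $M$ be the maxima of $X$. Chazelle's analysis (which transfers from random to well-distributed sets) gives $|M| = O(\log^{d-1}(n/m)) = O(\log^{d-1} n)$ and shows that the region $R$ dominated by $q$ but by no point of $M$ has volume $O(\frac{m}{n}\log^{d-1} n)$; the query is then answered by the $|M|$ stored sums plus singletons for $P \cap R$, and since $R$ decomposes into $O(|M|)$ rectangles, property (iii) of Definition~\ref{def:wd} gives the worst-case singleton count $O(m\log^{d-1} n)$ for well-distributed $P$ (the expected bound for random $P$ is immediate from the volume). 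Your final density-conversion steps coincide with this, and your opening remark about the ``short Pareto frontier'' is in fact the right idea --- if you replace the grid-plus-dyadic-sweep by the maxima of $S \cap q$ for the \emph{given} $S$, and cite (or reprove) the staircase-size and residual-volume bounds for well-distributed sets, the argument closes.
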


\begin{proof}[Proof summary.]
  Let $t=n/m$.
  Consider a query $q$ and let $X \subset S$ be the set of points dominated by $q$.
  Let $M$ be the maxima of $X$, i.e., subset of $X$ that are not dominated by points in $X$.
  Chazelle has shown that $|M| = O(\log^{d-1}t) = O(\log^{d-1}n)$ and that
  the volume of the region $R$ that is dominated by $q$ but not by any point in $M$
  is $O(\frac{\log^{d-1}n}{t}) = O(m \log^{d-1}n /n)$.
  So, $R$ will on average contain $O(m \log^{d-1}n)$ points of $P$. To answer $q$, we cover the points in 
  $R$ with singletons and the remaining points using the points in $M$.
  If $P$ is well-distributed, $R$ will contain $O(m \log^{d-1}n)$ points in the worst-case since it can 
  be decomposed into $O(|M|)$ rectangles. 
\end{proof}

We also need the following definition and lemma.
\begin{definition}
  Let $T$ be a balanced binary tree with height $h$, built on the interval $[z_1,z_2] \subset [0,1]$ and by repeatedly
  partitioning it in half. 
  In particular, every node $u \in T$ is assigned an interval $[a(u), b(u)] \subset [z_1, z_2]$,
  the root is assigned the interval $[z_1,z_2]$,
  the left child of $v$ is assigned the interval $[a(u), \frac{a(u)+b(u)}{2}]$ and the
  right child of $v$ is assigned the interval $[\frac{a(u)+b(u)}{2}, b(u)]$.
  Two nodes $u$ and $w$ in $T$ are said to be \mdef{adjacent} if they are at the same depth and $a(u) = b(w)$ or $b(u) = a(w)$;
  if $b(u) = a(w)$ then we say $u$ it the \mdef{left neighbor} of $w$.
  A \mdef{balanced prefix cover} of a leaf $v$ is defined as follows:
  it is a sequence of $h' \le 2h$ pairs of nodes
  $(u_1, w_1), \dots, (u_{h'},w_{h'})$ such that $u_i$ is the left neighbor of $w_i$ and 
  the interval $[z_1, a(v)]$ is the disjoint union of the intervals
  $[a(u_1),b(u_1)], [a(u_2),b(u_2)], \dots, [a(u_{h'}),b(u_{h'})]$.
  Furthermore, no three nodes $u_i, u_k, u_j$ can have the same depth. 
  See Figure~\ref{fig:binary}.
\end{definition}
\begin{figure}[h]
  \centering
  \includegraphics[scale=0.75]{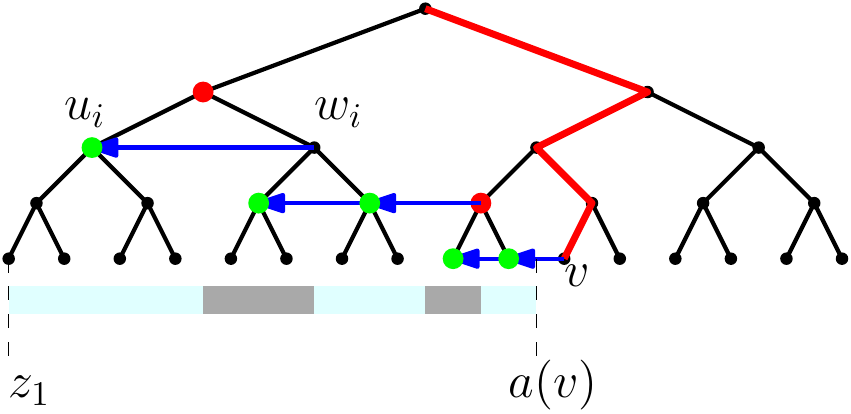}
  \caption{A balanced prefix cover. A pair $(u_i, w_i)$ is shown with a blue arrow from $w_i$ to $u_i$. 
    The intervals $[a(u_1), b(u_1)], \dots, [a(u_{h'}), b(u_{h'})]$, shown with alternating colors, disjointly cover $[z_1, a(v)]$.}
  \label{fig:binary}
\end{figure}

\begin{lemma}\label{lem:binary}
  Let $T$ be a balanced binary tree with height $h$.
  For any leaf $u$, there exists a balanced prefix cover of $u$. 
\end{lemma}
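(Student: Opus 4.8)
The plan is to exhibit the cover via the canonical decomposition of the prefix interval along the root-to-leaf path, and then pair each chosen node with its sibling. Fix the leaf $u$, and let $p_0, p_1, \ldots, p_m$ be the path from the root $p_0$ of $T$ to $u = p_m$, where $p_d$ denotes the node of this path at depth $d$, with interval $[a(p_d), b(p_d)]$, and $m = \mathrm{depth}(u) \le h$. For each depth $d$, $0 \le d < m$, at which the path turns \emph{right} — that is, $p_{d+1}$ is the right child of $p_d$ — let $c_d$ be the left child of $p_d$ and $c'_d = p_{d+1}$ its sibling. Listing the pairs $(c_d, c'_d)$ in increasing order of $d$ and renaming them $(u_1,w_1), \ldots, (u_{h'},w_{h'})$ gives the claimed cover; it then remains to check the three required properties.

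For the pairing property, note that if $m_d$ is the midpoint of $[a(p_d), b(p_d)]$ then $c_d$ has interval $[a(p_d), m_d]$ and $c'_d$ has interval $[m_d, b(p_d)]$, so $b(c_d) = m_d = a(c'_d)$ and both sit at depth $d+1$; hence $c_d$ is the left neighbor of $c'_d$. For the covering property I would argue by induction on $d$ that the union of the intervals of the left children chosen at levels below $d$ equals $[z_1, a(p_d)]$ and that those intervals are pairwise disjoint: the base case $d = 0$ is the empty union $[z_1, z_1]$ since $a(p_0) = z_1$; a left turn at $p_d$ adds nothing and leaves $a(p_{d+1}) = a(p_d)$, so the invariant persists; a right turn at $p_d$ appends exactly the interval $[a(p_d), m_d]$ of $c_d$, which abuts the running union $[z_1, a(p_d)]$ on the right, yielding $[z_1, m_d] = [z_1, a(p_{d+1})]$ and preserving disjointness. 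Taking $d = m$ gives that $[z_1, a(u)]$ is the disjoint union of the $[a(u_i), b(u_i)]$. Finally, for each depth at most one $u_i$ is chosen (the left child at the unique place the path turns right at that level), so $h' \le m \le h \le 2h$ and in particular no three $u_i$ share a depth.

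Essentially everything here is routine once the decomposition is written down; the only point that deserves a moment of care is the inductive bookkeeping in the covering step — specifically, that each newly appended interval abuts the running union exactly on the right, which is simultaneously what forces the union to be precisely $[z_1, a(u)]$ and what forces the chosen intervals to be pairwise disjoint. No genuine obstacle is anticipated.
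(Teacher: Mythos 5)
Your construction is correct for the lemma as literally stated: the left children at the right turns of the root-to-$u$ path have pairwise non-overlapping intervals that concatenate to exactly $[z_1,a(u)]$, each is the left neighbor (indeed the sibling) of the corresponding path node, and you create at most one pair per depth, so $h'\le h\le 2h$ and certainly no three $u_i$ share a depth. However, this is a genuinely different construction from the paper's. The paper does not pair a left-hanging node with its sibling on the path; it processes the sequence of left-hanging nodes itself, repeatedly splitting the currently leftmost one into its children until its depth matches the next node in the sequence and then pairing two equal-depth adjacent nodes, both of which lie inside the prefix $[z_1,a(u)]$. That costs more pairs (up to $2h$, with up to two $u_i$ per depth --- which is exactly why the definition is stated with those weaker bounds), but it buys the property that both members of every pair sit to the left of the leaf $u$. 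In your cover the right partners $w_i$ are ancestors of $u$, so $r(w_i)$ extends past $a(u)$ (it contains $u$'s interval and typically much more). The written definition does not forbid this, so your proof of Lemma~\ref{lem:binary} stands as a proof of that statement; but the cover is consumed in Lemma~\ref{lem:crucial}, where the anchor points $X$ are taken from $\cap_{i} r(w_{i,j_i})$ and the argument relies on $x_i \le b(w_{i,j_i}) \le b_i$ to conclude that the stored box $B(X)$ stays inside the query. With your $w_{i,j}$ (ancestors of the leaf whose interval contains $b_i$) one only has $b(w_{i,j}) \ge b_i$, so that step would fail and the upper-bound construction would not go through with your cover. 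In short: your proof is valid for the lemma as stated, and is simpler and quantitatively stronger (at most $h$ pairs, one per depth), but the paper's more elaborate pairing is doing extra, unstated work --- keeping every $w_i$ inside the covered prefix --- that the later application actually needs.
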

\begin{proof}
  Let $\pi$ be the path that connects the root of $T$ to $u$ and let $x_1, \dots, x_{k}$ be the nodes
  that hang to the left of $\pi$, ordered from left to right. Observe that the depth of the nodes
  $x_1, \dots, x_k$ is strictly increasing. 
  To obtain the balanced prefix cover, we use the following procedure.
  We initialize a  sequence of nodes, called the active sequence, with the list
  $x_1, \dots, x_k$ (the red nodes in Fig~\ref{fig:binary}).
  Then, we perform the following until the active sequence contains only one node. 
  Consider the first two elements of the active sequence, $x_1$ and $x_2$. 
  (i) If they have the same depth, 
  then we create the pair $(x_1, x_2)$ and then remove $x_1$  but (ii)
  otherwise, we replace $x_1$ with $x_\ell$ and $x_r$ where $x_\ell$ and $x_r$ are its left and right children.

  These operations maintain that the depth of the nodes in the active sequence is always strictly increasing, except possibly for the
  first two nodes in the sequence. 
  Furthermore, if the depth of $x_1$ is smaller than the depth of $x_2$, 
  then first we perform  operation (i) and then immediately perform operation (ii).
  The net effect is that the depth of the first element is increased by one.
  This might cause for the depth of the first element to be equal to the depth of the second element but then in the next iteration,
  operation (i) will remove the first element. 
  Thus, in overall, the operations create at most two pairs for every depth, thus, 
   the number of pairs is at most $2h$. 
\end{proof}

Our main result of the section is the following. 
\thmub*
  Let $\Q$ be the unit cube in $\R^d$.
  Assume, the queries we would like to answer 
  are in the form of $q = [a_1, b_1] \times \dots \times [a_{k},b_{k}] \times (-\infty, b_{k+1}] \dots \times (-\infty, b_d]$.
  We build a balanced binary tree for each of the first $k$ dimensions, of height $h=\log n$.
  Let $T_1, \dots, T_{k}$ be these binary trees, where $T_i$ is built on the $i$-th dimension and on the
  $i$-th side of the cube $\Q$ (the interval $[0,1]$).
  The nodes at depth $j$ of  tree $T_i$ decompose $\Q$ into $2^j$ congruent ``slabs'' using hyperplanes
  that are perpendicular to the $i$-th axis. 
  The slab of a node $v \in T_i$ is denoted by $r(v)$; this slab is defined by two hyperplanes perpendicular
  to the $i$-th axis at points $a(v)$ and $b(v)$.
  Let $m(v) = \frac{a(v)+ b(v)}2$.

  Before describing the data structure, we briefly look to see what entails to answer the query $q$. 
  Consider the $i$-th dimension of the query for $1 \le i \le k$ and the tree $T_i$. 
  Consider the highest node $v \in T_i$ such that $m(v)$ lies inside the interval $[a_i, b_i]$.
  We can now partition the interval $[a_i, b_i]$ into two intervals
  $[a_i , m(v)]$ and $[m(v), b_i]$ which in turn partitions $q$ into two queries, and over all indices $i$,
  this partitions $q$ into $2^k$ queries. 
  In the remainder of the proof, we will focus on how to answer the query that corresponds to 
  $[m(v), b_i]$, for $1 \le i \le k$, as the other queries can be handled in a similar fashion.

  We now describe the data structure. 
  We use Lemma~\ref{lem:cwd} with $h=\log n$, $N=n/h^k$ to obtain the collection of sets $\D$ containing 
  $h^k$ point sets indexed by the index set $\sI = [h]^k$.
  Using, the binary trees $T_1, \dots, T_{k}$, we turn each point in the collection of points
  $\D$ into a $(d+k)$-dimensional region which is then the data structure stores (i.e., the data structure stores the
  sum of the weights of the points inside the region).
  This is done in the following way. 
  Consider a point $X \in D_\si$ for $D_\si \in \D$ and assume $I=(i_1, \dots, i_k) \in [h]^k$ and 
  $X = (x_1, \dots, x_d)\in \R^d$.
  We turn the point $X$ into the range
  $B(X) = [\ell_1, x_1]\times [\ell_2, x_2], \dots, [\ell_k, x_k] \times (-\infty, x_{k+1}] \times \dots \times (-\infty, x_{d}]$
  where the coordinate $\ell_j$, $1 \le j \le k$, is obtained as follows:
  we look at $T_j$ and find a node $u \in T_j$ at depth $i_j$ such that $r(u)$ contains $X$; then we consider the left neighbor $v$ of $u$ and 
  we set $\ell_j$ to $a(v)$.
  In some exceptional cases $v$ might not exist, in particular, when $X$ is inside $r(u)$ and $u$ is the leftmost node at depth
  $i_j$; in such cases $B(X)$ is not defined. 
  It is clear that the data structure stores $O(n)$ sums, since the total number of points contained 
  in the point set of $\D$ is $O(n)$.

  \begin{figure}[h]
    \centering
    \includegraphics[scale=0.5]{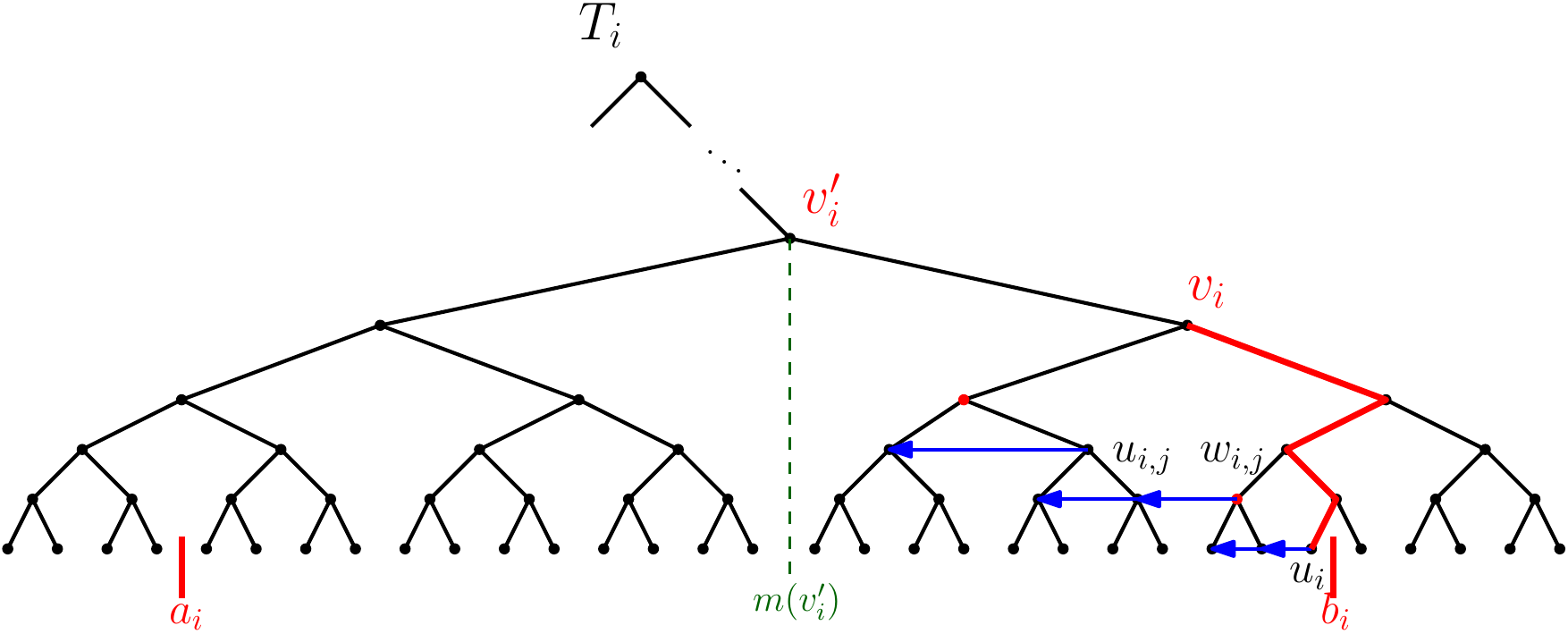}
    \caption{Answering a query.}
    \label{fig:binary-2}
  \end{figure}

  Now, consider a query range $q=[a_1, b_1] \times \dots \times [a_{k},b_{k}]
\times (-\infty, b_{k+1}] \dots \times (-\infty, b_d]$.
  Consider the interval $[a_i, b_i]$ for $1 \le i \le k$ and the tree  $T_i$.
  Let $v'_i$ be the highest node in $T_i$ such that $ a_i \le m(v'_i) \le b_i$. 
  As previously alluded, we can decompose the query into two queries at node $v'_i$:
  let $v_i$ be the right child of $v'_i$ and 
  $u_i$ be the leaf of $T_i$ such that $r(u_i)$ contains the point $(b_1, \dots, b_d)$. 
  We now consider the tree $T_i(v_i)$, i.e., the tree that hangs off at the node $v_i$. 
  By Lemma~\ref{lem:binary}, we can find a balanced prefix cover as a sequence of pairs
  $(u_{i,j}, w_{i,j})$, $1 \le j \le h'_i$  that cover the interval $[m(v'_i), a(u_i)]$.
  See Figure~\ref{fig:binary-2}.
  Thus,
  \begin{align}
    [m(v'_i), b_i] = [a(u_i), b_i] \cup \bigcup_{j=1}^{h'_i} [a(u_{i,j}),b(u_{i,j})].\label{eq:cover}
  \end{align}

  By construction, the region $[a_1, b_1] \times \dots [a_{i-1}, b_{i-1}] \times  [a(u_i), b_i] \times \ldots \times [a_{k},b_{k}]
\times (-\infty, b_{k+1}] \dots \times (-\infty, b_d]$ (where the $i$-th side of $q$ is replaced by the interval 
$[a(u_i), b_i]$) has volume at most $1/n$ and thus contains $O(1)$ input points on average;
they can be covered by singletons (or in case of a well-distributed input, $O(1)$ singletons in the
worst-case).
 To cover the rest of the query, we observe that region we would like to cover is the Cartesian product of a series
 of intervals given by Eq.~\ref{eq:cover} over all indices $1 \le i \le k$. 
 Thus, we need to cover the regions
  \begin{align}
    q_{j_1, j_2, \dots, j_k} = [a(u_{1,j_1}),b(u_{1,j_1})] \times \dots \times[a(u_{k,j_k}),b(u_{k,j_k})] \times (-\infty, b_{k+1}] \dots \times (-\infty, b_d]\label{eq:cover2}
  \end{align}
  for all choices of $1 \le j_1 \le h'_1$, and $1 \le j_2 \le h'_2$, and so on until $1 \le j_k \le h'_k$.
   \begin{figure}[h]
    \centering
    \includegraphics[scale=0.75]{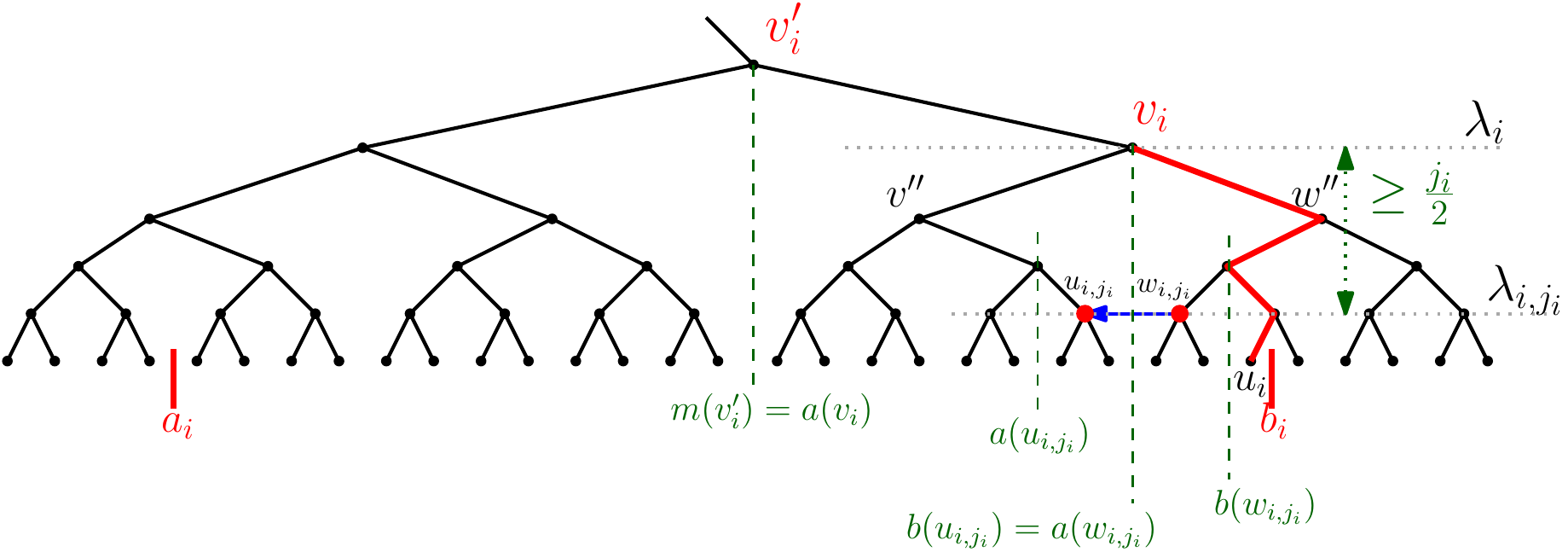}
    \caption{Answering a query.}
    \label{fig:binary-3}
  \end{figure}
 
\begin{lemma}\label{lem:crucial}
  Consider a fixed query $q_{j_1, j_2, \dots, j_k}$ obtained from query $q'$ as outlined above. 
  Consider a set $D_\si$ such that $\si=(z_2, \dots, z_k)$, $\lambda_i \le z_i \le \lambda_{i,j_i}$ for $1 \le i \le k$.
  Consider a point $X \in D_\si$ such that $X \in \cap_{i=1}^k r(w_{i,j_i})$.
  We claim the following:
  (i) $i$-th dimension, $[\ell_i, x_i]$, of the box $B(X)$ fits inside the $i$-th dimension of the query $q'$, $[a(v_i), b_i]$.
(ii) for any input point $p=(p_1, \dots, p_d) \in q_{j_1, j_2, \dots, j_k}$, the $i$-th coordinate, $p_i$, of $p$ for
$1 \le i \le k$ is within the $i$-th side of $B(X)$.
\end{lemma}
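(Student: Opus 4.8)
The plan is to prove both parts of the lemma one coordinate at a time. Since $B(X)$ is the product of the intervals $[\ell_i,x_i]$ over dimensions $1,\dots,k$ with unbounded rays in dimensions $k+1,\dots,d$, and $q_{j_1,j_2,\dots,j_k}$ is likewise the product of the slabs $r(u_{i,j_i})$ over dimensions $1,\dots,k$ with rays in the remaining dimensions, and since neither (i) nor (ii) asserts anything about dimensions $k+1,\dots,d$, it suffices to fix an index $i$ with $1\le i\le k$ and argue entirely inside $T_i$. The two structural facts I would rely on are: (a) from the balanced prefix cover, $u_{i,j_i}$ and $w_{i,j_i}$ sit at the same depth $\lambda_{i,j_i}$ of $T_i$ with $u_{i,j_i}$ the left neighbor of $w_{i,j_i}$, so $b(u_{i,j_i})=a(w_{i,j_i})$, and $w_{i,j_i}$ lies in the subtree $T_i(v_i)$; and (b) by the construction of $B(X)$, if $u$ denotes the depth-$z_i$ node of $T_i$ with $X\in r(u)$ and $v$ its left neighbor, then $\ell_i=a(v)$ (with the convention that $B(X)$ is undefined, hence $X$ irrelevant, when $v$ does not exist).

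For part (ii) I would first use the upper bound $z_i\le\lambda_{i,j_i}$: since $X\in r(w_{i,j_i})$ and $w_{i,j_i}$ has depth $\lambda_{i,j_i}\ge z_i$, the node $u$ is an ancestor of $w_{i,j_i}$, so $r(w_{i,j_i})\subseteq r(u)$, which gives $a(u)\le a(w_{i,j_i})=b(u_{i,j_i})$ and $x_i\ge a(w_{i,j_i})=b(u_{i,j_i})$. The latter already handles the right end of (ii), since any input point $p\in q_{j_1,j_2,\dots,j_k}$ has $p_i\in r(u_{i,j_i})$ and hence $p_i\le b(u_{i,j_i})\le x_i$. For the left end I would compare the dyadic intervals $r(v)$ and $r(u_{i,j_i})$: $v$ has depth $z_i\le\lambda_{i,j_i}$, so $r(v)$ is at least as wide as $r(u_{i,j_i})$, and $b(v)=a(u)\le b(u_{i,j_i})$; since two dyadic intervals are nested or disjoint, width together with $b(v)\le b(u_{i,j_i})$ forces $r(v)$ either to contain $r(u_{i,j_i})$ or to lie entirely to its left, and in both cases $a(v)\le a(u_{i,j_i})$. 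Thus $[\ell_i,x_i]=[a(v),x_i]\supseteq[a(u_{i,j_i}),b(u_{i,j_i})]=r(u_{i,j_i})\ni p_i$, which is (ii).

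For part (i) I would additionally invoke the lower bound $\lambda_i\le z_i$, which keeps $u$ inside the subtree $T_i(v_i)$ (as $X\in r(w_{i,j_i})\subseteq r(v_i)$), so that $r(u)\subseteq r(v_i)$, hence $x_i\le b(w_{i,j_i})$ and $\ell_i=a(v)$ is no further left than the left edge of the slab carried by the decomposition at $v'_i$; assembling these puts $[\ell_i,x_i]$ inside the $i$-th side of $q'$. The step I expect to be the main obstacle is precisely this last assembly at the two boundary pairs of the balanced prefix cover: the first pair, where $a(u_{i,1})=m(v'_i)$ coincides with the left edge of $r(v_i)$ and the left neighbor $v$ of $u$ may fall just on the far side of that edge, and the last pair, where $a(w_{i,h'_i})=a(u_i)$ and $w_{i,h'_i}$ is an ancestor of the leaf $u_i$ whose right edge can lie strictly to the right of $b_i$. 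Dispatching these requires using that $b_i$ sits inside $r(u_i)$ and that the containing slab carried by $v'_i$ (rather than its child) is what bounds $[\ell_i,x_i]$, so that the slight overshoot at each boundary is absorbed; for all interior pairs the containment follows directly from the dyadic-nesting argument above, and the same ancestor relations show that $v$, and hence $B(X)$, is in fact defined for every $X$ under consideration.
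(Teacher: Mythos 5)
Your part (ii) is correct and follows essentially the same route as the paper: the paper also works with the depth-$z_i$ node containing $X$ and its left neighbor, and distinguishes whether that neighbor is (an ancestor of) $u_{i,j_i}$ or lies weakly to its left; your laminarity argument (``dyadic intervals are nested or disjoint'') gives the same conclusion $[\ell_i,x_i]\supseteq[a(u_{i,j_i}),b(u_{i,j_i})]\ni p_i$ a bit more cleanly.

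The gap is in part (i), and it is exactly the step you flag and then do not carry out. The paper's entire proof of (i) consists of the two inequalities you defer: from $X\in r(w_{i,j_i})$ it concludes $x_i\le b(w_{i,j_i})\le b_i$, and from $\lambda_i\le z_i\le \lambda_{i,j_i}$ it concludes that the left neighbor $v''$ of the depth-$z_i$ node containing $X$ satisfies $\ell_i=a(v'')\ge a(v_i)$. You correctly identify the two delicate configurations (the pair adjacent to the leaf $u_i$, where $w_{i,j_i}$ can be an ancestor of, or equal to, $u_i$ so that $b(w_{i,j_i})\ge b(u_i)\ge b_i$; and the left-spine configuration, where the depth-$z_i$ node containing $X$ has left endpoint $a(v_i)=m(v'_i)$, forcing $\ell_i<a(v_i)$), but you then only assert that ``the slight overshoot at each boundary is absorbed'' by $b_i\in r(u_i)$ and by the slab of $v'_i$. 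That assertion is not a proof, and as stated it cannot become one: for an $X$ in such a $r(w_{i,j_i})$ with $x_i>b_i$, the interval $[\ell_i,x_i]$ is simply not contained in $[a(v_i),b_i]$ (nor even in $[a_i,b_i]$), and in the left-spine case $\ell_i$ lies strictly left of $m(v'_i)$ and possibly left of $a_i$; no appeal to $b_i$ sitting inside $r(u_i)$ changes either fact. Dispatching these cases needs an actual mechanism --- e.g.\ restricting to points $X$ that the query dominates in coordinate $i$ (intersecting $R_1$ with the query) or excluding the boundary indices, and then accounting for the loss in the counting of $|\sX'|$ that follows the lemma --- none of which appears in your write-up. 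So what you have is the interior case of (i) plus an unproved claim covering its only nontrivial part, whereas the paper disposes of (i) by asserting precisely those two containments. (Minor: the paper's proof of this lemma also establishes $\lambda_{i,j_i}-\lambda_i\ge j_i/2$, which is used immediately afterwards to bound $|\sI'|$; it is not needed for (i)--(ii) themselves, but it is part of what the lemma's proof is expected to deliver.)
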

\begin{proof}
    Our first observation is the following: the node $u_{i,j_i}$ has at least $j_i/2$ ancestors between
    itself and the node $v_i$. 
    That is, if $\lambda_i$ is the depth of $v_i$ and $\lambda_{i,j_i}$ is the depth of $u_{i,j_i}$, we have
    $\lambda_{i,j_i} - \lambda_i \ge j_i/2$. 
    This observation follows because a balanced prefix cover contains at most two pairs in a given depth of a tree.
  Since $X \in \cap_{i=1}^k r(w_{i,j_i})$, it follows that $x_i \le b(w_{i,j_i}) \le b_i$
  and furthermore, since $\lambda_i \le z_i \le \lambda_{i,j_i}$, $\ell_i$ is set to
  $a(v''_i)$ for some node $v''_i$ that is a descendant of $v'_i$, which implies 
  $a(v_i) \le a(v''_i) = \ell_i$.
  This proves claim (i). 

  We now consider claim (ii). 
  Observe that $p_i$ is between $a(u_{i,j_i})$ and $b(u_{i,j_i})$ and thus
  $p_i \in [a(u_{i,j_i}), b(u_{i,j_i})]$; however, 
  $\ell_i$ is set to $a(v'')$ for a node $v''$; also $v''$'s right neighbor, $w''$, is such that
  $r(w'')$ contains $X$. 
  We have two cases: in case (i), $w''$ is also the ancestor of $u_{i,j_i}$.
  In this case $\ell_i$ is set to the same value as $a(u_{i,j_i})$ and thus the $i$-th side of the box
  $B(X)$ is the interval $[a(u_{i,j_i}), x_i]$ but since $x_i > b(u_{i,j_i})$ this interval contains
  the interval $[a(u_{i,j_i}), b(u_{i,j_i})]$. 
  In the second case, the ancestor of $u_{i,j_i}$ at depth $j_i$ is a node $v''$ that is the left neighbor of
  $w''$. 
  In this case, $\ell_i$ is set to $a(v'')$ and thus it is smaller than $a(u_{i,j_i})$ and thus once again
  the interval $[a(u_{i,j_i}), b(u_{i,j_i})]$ is contained in the $i$-th side of the box $B(X)$.  
\end{proof}

  Let $\sI'$ be the set of indices $\si$ such that 
  $\si=(z_1, \dots, z_k)$, $\lambda_i \le z_i \le \lambda_{i,j_i}$ for $1 \le i \le k$.
  Note that the previous lemma (Lemma~\ref{lem:crucial}) holds for any point $X \in D_\si$ and for any $\si \in \sI'$.
  Observe that $|\sI'| \ge \frac{j_1 j_2 \dots j_k}{2^k}$ since $\lambda_{i,j_i} - \lambda_i \ge j_i/2$.
  Let $\sX  = \cup_{\si \in \sI'}D_\si$.
  As every set in  $\D$ has asymptotically the same number of points, we have
  \[
    |\sX|  = \Omega\left( \frac{n \prod_{i=1}^k j_i}{h^k} \right).
  \]
  Consider two rectangles $R_1=\cap_{i=1}^k r(w_{i,j_i})$, and $R_2=\cap_{i=1}^k r(u_{i,j_i})$.
  Observe that they have the same $d$-dimensional volume since $u_{i,j_i}$ and $w_{i,j_i}$ 
  have equal depth. 
  Let $\alpha$ be this volume. 
  Also observe that $q_{j_1, j_2, \dots, j_k}$ is inside $R_2$. 
  Let $\sX'$ be the subset of $\sX$ that lies inside  $R_1$. 
  By Lemma~\ref{lem:crucial}, for every point $X \in \sX'$ the box $B(X)$ stored by the data structure covers
  the $i$-th dimension of every input point $p \in q_{j_1, j_2, \dots, j_k}$ for $1 \le i \le k$. 
  So in essence, we only need to take care of the last $d-k$ dimensions. 
  We do that by projection onto the last $d-k$ dimensions:
  Let $P'$, $q'$ and $\sX''$ be the projection of the subset of input points that lie inside $R_1$,
  $q_{j_1, j_2, \dots, j_k}$, and $\sX'$ onto the last $d-k$ dimensions, respectively.
  Based on what we discussed, the problem has been reduced to answering the $(d-k)$-dimensional dominance
  query $q'$ on the point set $P'$ using sums stored in $\sX''$; each such sum is the sum of the weights
  in a dominance region. 
  Now, we use Lemma~\ref{lem:domds}. 

  Since the set of input points $P$ is well-distributed, it follows that $P'$ contains $O(n\alpha)$ points.
  Since the collection $\D$ is collectively well-distributed, it follows that 
  $\sX''$ contains $\Omega(|\sX| \alpha)$ points. 
  Thus, by Lemma~\ref{lem:domds}, the $(d-k)$-dimensional query $q'$ on a set of $n' = |P'|$ points, 
  using storage $\Omega(|\sX| \alpha) = \Omega\left(\frac{n\alpha \prod_{i=1}^k j_i}{h^k} \right)$,
  can be answered with asymptotic query bound of
  \begin{align}
    \frac{h^k \log^{d-k}n}{\prod_{i=1}^k j_i}. \label{eq:fixed}
  \end{align}

  Note that the Eq.~\ref{eq:fixed} is only for a fixed query $q_{j_1, j_2, \dots, j_k}$.
  Thus, the total query bound is sum of the bound offered by Eq.~\ref{eq:fixed} over all possible choice of
  indices $j_1, \dots, j_k$:
  \begin{align*}
    \sum_{j_1=1}^{h'_1}\sum_{j_2=1}^{h'_2}\dots \sum_{j_k = 1}^{h'_k}   \frac{h^k \log^{d-k}n}{\prod_{i=1}^k j_i} = \log^dn \cdot O\left(  \sum_{j_1=1}^{h'_1}\sum_{j_2=1}^{h'_2}\dots \sum_{j_{k-1} = 1}^{h'_{k-1}}   \frac{\log\log n}{\prod_{i=1}^{k-1} j_i} \right) = O(\log^d n (\log\log n)^k).
  \end{align*}
  For a randomly placed point set, Lemma~\ref{lem:domds} offers an expected query bound and thus we obtain
  the expected query bound of $O(\log^{d-1}n(\log\log n)^k)$.
  However, if the input point set is well-distributed, we get the same bound but in the worst-case.

\section{Conclusions}\label{sec:conc}
In this paper we considered the semigroup range searching problem from a lower bound point of view.
We improved the best previous lower bound trade-off offered by Chazelle by analysing a well-distributed point set
for $(2d-1)$-sided queries for an idempotent semigroup. 
Furthermore, we showed that our analysis is tight which leads us to 
suspect that we have found an (almost) optimal lower bound for idempotent semigroups as we believe it is unlikely that a more
difficult point set exists. 
Thus, two prominent open problems emerge:
(i) Can we improve the known data structures under the {\em extra} assumption that the semigroup is idempotent?
(ii) Can we improve our lower bound under the {\em extra} assumption that the semigroup is not idempotent?
Note that the effect of idempotence on other variants of range searching was studied at least
once before~\cite{Arya.idem}.

\bibliography{ref}
\bibliographystyle{plainurl}

\end{document}